\newcommand*{\algolabel}[2]{%
  \hypertarget{#1}{}%
  \NR@gettitle{#1}%
  \label{#2}%
}
\newtheorem{theorem}{Theorem}[section]
\newtheorem{lemma}[theorem]{Lemma}
\newtheorem{proposition}[theorem]{Proposition}
\newtheorem{corollary}[theorem]{Corollary}
\newtheorem{question}{Question}
\newtheorem{definition}{Definition}[section]
\newtheorem{remark}{Remark}
\newtheorem{claim}{Claim}
\newtheorem*{claim-non}{Claim}
\newtheorem*{subclaim-non}{Subclaim}
\newcommand{\cP}{\mathcal{P}}
\newcommand{\cQ}{\mathcal{Q}}
\newcommand{\cC}{\mathcal{C}}
\newcommand{\cT}{\mathcal{T}}
\newcommand{\bP}{\mathbb{P}}
\newcommand{\bE}{\mathbb{E}}
\renewcommand{\emptyset}{\varnothing}
\renewcommand{\Pr}{\bP}
\newcommand{\E}{\bE}      % Expectation
\newcommand{\poisson}{\mathsf{Pois}}
\newcommand{\binomial}{\mathsf{Bin}}
\newcommand{\eps}{\varepsilon} %good-looking epsilon
\newcommand{\desc}[1]{\mathsf{desc}(#1)} %descendants
\newcommand{\anc}[1]{\mathsf{anc}(#1)} %ancestors
\newcommand{\RP}[1]{^{(#1)}} %raise to parentheses
\newcommand{\hP}[1]{ \hat{\mathcal{P}}_{#1} } % no ambiguity between what "<=k" and "k" means, just use V_(level)(P_k)
\newcommand{\sC}{\mathscr{C}} 
\newcommand{\aC}{ \mathscr{C}_{\text{algo}}  }
\newcommand{\iP}{\cP_{\mathsf{indiv}}}
\title{Efficient Reconstruction of Stochastic Pedigrees}
\author{
Younhun Kim\footnote{Massachusetts Institute of Technology. Department of Mathematics. Email: \url{younhun@mit.edu}.} \and
Elchanan Mossel\footnote{Massachusetts Institute of Technology. Department of Mathematics and IDSS. Email: \url{elmos@mit.edu}. Partially supported by awards ONR N00014-16-1-2227, NSF CCF1665252
and DMS-1737944.} \and 
Govind Ramnarayan\footnote{Massachusetts Institute of Technology. CSAIL. Email: \url{govind@mit.edu}. Partially supported by awards NSF CCF1665252 and DMS-1737944.} \and
Paxton Turner\footnote{Massachusetts Institute of Technology. Department of Mathematics. Email: \url{pax@mit.edu}.}
}
\begin{document}
\maketitle

\begin{abstract}
We introduce a new algorithm called {\sc Rec-Gen} for reconstructing the genealogy or \textit{pedigree} of an extant population purely from its genetic data. We justify our approach by giving a mathematical proof of the effectiveness of {\sc Rec-Gen} when applied to pedigrees from an idealized generative model that replicates some of the features of real-world pedigrees. Our algorithm is iterative and provides an accurate reconstruction of a large fraction of the pedigree while having relatively low \emph{sample complexity}, measured in terms of the length of the genetic sequences of the population. We propose our approach as a prototype for further investigation of the pedigree reconstruction problem toward the goal of applications to real-world examples. As such, our results have some conceptual bearing on the increasingly important issue of genomic privacy. 
\end{abstract}
\section{Introduction}

\subsection{Motivation}

The decreased costs of sequencing technologies have enabled large-scale, data-driven analyses of genomes \cite{humangenome}. Recent science and news articles feature stories
% Today, it is not uncommon to see news headlines or science articles 
only possible due to this plethora of data, such as the recent identification and capture of 
a high-profile criminal \cite{goldenstate} %the Californian ``Golden State Killer" 
predicated on DNA evidence. 
In this effort, an individual's genetic information was compared to a large, curated database called GEDMatch consisting of over one million individual genomes.
In comparison, there exist databases which are of several orders of magnitude larger in size such as MyHeritage ($\sim$3.7 million \cite{myheritageinfo}), 23andMe ($\sim$10 million \cite{23andmeabout}), and Ancestry ($\sim$15 million \cite{ancestry2019press}).

% The sheer number of individuals sampled is quite dense--- it is estimated that a given US individual of European ancestry, on average, has a third cousin or closer who is already in the MyHeritage database~\cite{Erlich690}.

This raises the question: how much kinship information can be learned from DNA? Current databases already contain a considerable amount of this information. Indeed, it is estimated that a given US individual of European ancestry, on average, has a third cousin or closer who is already in the MyHeritage database~\cite{Erlich690}. However, such databases are still far from complete. This calls into question the ability to detect missing kinships based on individuals already present in the database.

% Indeed, the sheer number of individuals sampled is already quite dense: it is estimated that a given US individual of European ancestry, on average, has a third cousin or closer who is already in the MyHeritage database~\cite{Erlich690}. 

This discussion also highlights the issue of \textit{genomic privacy}. Indeed, it becomes much easier to identify and locate individuals by combining the genetic and genealogical information with outside information (addresses, e-mails, family photos, etc.). 
This potential, having already been demonstrated by the resolution of the aforementioned criminal case, was brought to attention by \cite{Erlich690}.
From this point of view, the ability to reconstruct genealogies from collected genetic data is of concern for individuals whose information is revealed, even if one has \emph{never} been sequenced.
Since our work establishes a positive result in a pessimistic scenario where we start with no ground truth information, we believe that our work brings to attention this critical issue via a theoretical framework.

\subsection{Our contributions}

Without any prior knowledge about the ground truth, can we learn \textit{everyone's} genealogy using their genetic information?
In this paper, we study the inference problem of recovering ancestral kinship relationships of a population of \textit{extant} (present-day) individuals, using only their genetic data.
Our goal is to use this extant genetic data to recover the \textit{pedigree} of the extant population, under an idealized model.
A pedigree is a graph whose nodes (individuals) have edges that encode parent-sibling relationships.
The topology and reconstruction of pedigrees are well-studied in bioinformatics from both a theoretical and empirical perspective, and in general the study of pedigrees poses formidable computational and statistical challenges. 

In this paper, we introduce a novel recursive algorithm {\sc Rec-Gen} for pedigree reconstruction. To demonstrate the effectiveness of our approach, we give a mathematical proof that for an idealized generative model on pedigrees, our algorithm is able to approximately recover the true, unknown pedigree only using the genetic data of the extant population. In terms of \textit{sample complexity}, which for our purposes refers to the common gene sequence length of an extant individual, our algorithm greatly outperforms the naive reconstruction method (estimate pairwise distances between the extant individuals, then construct the pedigree that produces these distances). We propose our approach in this work as a prototype for the future study of more general pedigrees, including those involving real-life genetic data, from both a theoretical and empirical perspective. For further discussion on our model of pedigree generation, as well as its features and limitations, see~\cref{sec:model-description} and \cref{sec:model-discussion}.

% Moreover, our algorithm is able to recover this ancestral data with near-optimal \textit{sample complexity} 
% \gnote{revisit this; especially what near-optimal means} 
% \pnote{Near-optimal should mean optimal up to logarithmic factors in the input. Agree: we need to check and provide some justification that this is true.}. 
% Precisely, our algorithm is able to approximately recover the true pedigree in polynomial time with the least possible amount of extant genetic information, as quantified by the length of the gene sequences of the extant individuals. 
% From this point of view, our algorithm greatly outperforms well-known reconstruction procedures based on estimating the pairwise distances between the extant individuals. 

\subsection{Related works}

%\pnote{Discuss some biological motivations here? Compare to other works? Eg prior work on reconstruction of phylogenies (see Mossel and Roch). Drawing from El's grant proposal could be useful here. There is also the IPED paper and the Karp paper which seem closely related. Also looks like those don't have theoretical guarantees (?)}

A common method in theoretical evolutionary biology is to model lineages and inheritance via a family of directed acyclic graphs.
One line of work is that of \textit{phylogenetics} (refer to \cite{semple2003} for an overview) which uses trees to model the occurrence of large-scale \textit{speciation events} in evolutionary biology.
Another line of work is \textit{coalescent theory}, which focuses on variable-height inheritance trees between genes as its main statistic to infer large-scale \textit{population sizes}, as in e.g. \cite{kim2019}.
In contrast, pedigrees capture small-scale \textit{individual genealogies} that encode familial relationships.
Specifically, most pedigree models are for human genealogies, where we designate exactly two parents to each individual.
By construction, such graphs are no longer trees and warrant different strategies for inference.

\cite{steel2006reconstructing} posed the formal definition of pedigrees using graph-theoretic language. 
In that work, the authors gave combinatorial arguments proving that one can reconstruct complete pedigrees, assuming the correct ancestral history is provided as an input for each extant individual. Our definition of pedigrees is essentially the same as the one outlined by these authors, though we make the simplification that we do not identify the vertex set bipartition (corresponding to the biological sex of the individuals).

% We use the same exact definition of pedigrees as outlined by these authors, with the simplification 

To tie in more closely with real-world applications,
% Of course, to more closely align with reality, 
one must consider the challenge of estimating these histories from data. Along these lines, \cite{thatte2008reconstructing} studied stochastic processes that one can associate with the pedigree, in such a way that one can prove negative results (information-theoretic impossibility) or positive results (an algorithm) for the reconstruction of the pedigree from extant data.
The stochastic process used to show their positive result was based on a very specific family of Markov chains which allows for inference but is quite different from our model.

For the problem of performing pedigree reconstruction on real data, there is a wealth of literature \cite{thompson00,kirkpatrick2011pedigree,IPED,thompson2013identity,IPED2,shem2014historical,huisman2017pedigree,wang2019pedigree}.
Such studies apply heuristics that take into account various complications and phenomena observed in human genomes, such as varying levels of correlations between different sites and the presence of mutations that are not inherited from parents.

One line of work particularly relevant to this paper is \cite{IPED,IPED2} in which the authors also tackle the problem of pedigree reconstruction from real extant genetic data. Assuming answers to queries of the form, ``how much DNA did $i$ and $j$ simultaneously inherit from their ancestors?'', they design a statistical test that distinguishes between siblings, half-siblings and cousins. 
Their method leverages this information with a maximal-clique finding algorithm to iteratively reconstruct the parents, layer-by-layer. 
There is no proof of correctness provided, but they provide benchmarks on real and simulated data to provide experimental justification. 
Our contributions have a slightly different flavor: using a similar iterative strategy but with a different statistical test (the novel part of our algorithm) and for a more optimistic set of assumptions, one can actually \textit{provably} reconstruct the pedigree correctly in a sample-efficient way, in an asymptotic sense.

The authors of \cite{IPED2} specifically emphasize their method's ability to reconstruct half-siblings. 
Technically speaking, this is not allowed in our model and therefore it may appear to the reader that there is something too restrictive or suboptimal about our analysis.
%However, we suspect that our proof essentially shows correctness for half-siblings as well, in the following sense.
One major difference between our model and the aforementioned work is that we model \textit{haploid} individuals (one copy of DNA), while in reality humans are \textit{diploids} (two copies of DNA).
%Our model's stochastic process does not incorporate this additional complication; we think of each individual as having only one copy of a single, large string whose symbols are inherited from either parent at random (a \textit{haploid} model). 
Furthermore, in our proof, we guarantee reconstruction of monogamous \textit{couples} of haploid individuals -- in other words, up to permutation of the two individuals within each couple.
It can be observed that given a monogamous pedigree with a haploid model, one can construct a natural, non-monogamous pedigree with a diploid model such that the total variation of the extant data of the two pedigrees is zero.
Therefore, we think that our results should also hold for a diploid model with minor modifications and have correctness guarantees to match the empirical results of the aforementioned work \cite{IPED2}, for example by interpreting \cref{fig:example-1a} as a pair of diploid half-siblings.

Our work is also closely related to the problem of phylogenetic reconstruction~\cite{ErdosSteel99,Mossel04,MosselRoch05,DMR06}. In this setting, symbols are passed from the root of a phylogenetic tree to descendants via a Markov process such as in the Cavender--Farris--Neyman model, a basic model for mutations. Similar to our inference problem in this work, in phylogenetic reconstruction, one is tasked with reconstructing the tree given only the symbols at the leaves. The main result of \cite{ErdosSteel99} characterizes the \textit{sample complexity}---the minimal string length of the data at the leaves such that reconstruction is possible---as logarithmic in the depth of the tree, a phenomenon that our results suggest also holds for the pedigree reconstruction problem. The work \cite{MosselRoch05} provides theoretical guarantees for the problem of learning the phylogenetic generative model (\textit{i.e.}, the topology of the tree as well as the transition matrices), which includes hidden Markov models as a special case, from the extant data under a spectral assumption on the transition matrices (see also later work of \cite{hsu2012spectral}). Most closely related to our approach in this paper is the work \cite{Mossel04}, which shows how to recursively reconstruct phylogenies using techniques from the theory of broadcast processes on trees (see also \cite{DMR06}). This approach provides inspiration for our main algorithm {\sc Rec-Gen}, which uses similar techniques to recursively reconstruct pedigrees. We direct the reader to \cite{evans2000} and \cite{Mossel01} for studies of broadcast processes on trees with binary and large alphabet respectively, and \cite{makur2018broadcasting} for a generalization to directed acyclic graphs.

\subsection{Model description and results}
\label{sec:model-description}
We now give an informal, detailed description of our framework for pedigree reconstruction, with a more detailed treatment of the generative model in \cref{sec:prelims}.
Our generative model on pedigrees consists of two parts: a parametric model for generating the network structure on the set of ancestors and extant individuals, and an inheritance procedure for transmitting genetic data from the \textit{founders}, the oldest individuals in the pedigree, to the extant population. 

To generate the pedigree network structure, we begin with a large founding population of size $N_T$. The founders randomly mate monogamously, and each couple gives birth to a random number of children, so that the average number of offspring per couple is a constant\footnote{More precisely, each couple has a random number of children distributed as a Poisson random variable with expectation $\alpha$.} $\alpha$. This procedure of random monogamous mating continues for $T$ subsequent generations, eventually yielding the extant nodes and a pedigree $\cP$ formed by the individuals in generations $0, 1, \ldots, T$, with $N_i$ nodes at each level $i$. 

Next we describe how genetic data transmits from the founding population to the extant. Every individual in the pedigree has a gene sequence consisting of $B$ symbols placed in $B$ distinct blocks. Each individual in the founding population is initialized with independent uniformly random draws from a very large alphabet $\Sigma$. Now we state how parents pass down genes to their children. In a given block, a child inherits, with equal probability, either its mother's or its father's symbol in the corresponding block. This procedure repeats for all couples in a given generation and then continues over subsequent generations so that genetic data is iteratively transferred through the pedigree, eventually giving rise to the gene sequences of the extant individuals. 

Our main result is summarized in the following theorem. See \cref{thm:main-formal} for a formal statement. 

\begin{theorem}[Main result, informal]
    \label{thm:informal_recon}
    Let $\alpha$ and $\beta$ denote sufficiently large absolute constants independent of $N_T$, the size of the founding population. Let $\eps$ denote a sufficiently small absolute constant independent of $N_T$. Assume that the alphabet size $|\Sigma|$ is very large with respect to $N_T$. 
    
    % Assume that the alphabet size $|\Sigma|$ is very large with respect to $N_T$,
    
    % Assume that the alphabet size $|\Sigma|$ is very large with respect to $N_T$, that the growth rate $\alpha$ is a sufficiently large constant independent of $N_T$, that the number of generations $T = \eps \log N_T$, for a sufficiently small constant $\eps$ that is independent of $N_T$, and that the block-length $B = \beta \log(N_T)$, for a sufficiently large constant $\beta$ 
    
    Then given extant genetic data produced from the generative model with alphabet $\Sigma$, growth rate $\alpha$, gene sequence length $B = \beta \log N_T $, and number of generations $T = \eps \log N_T$ as described above, the algorithm {\sc Rec-Gen} recovers $90 \%$ of the true pedigree in every generation, with high probability. Moreover, this algorithm runs in polynomial time in the size of the pedigree and the number of blocks per extant individual. 
\end{theorem}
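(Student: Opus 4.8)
The plan is to prove the theorem by induction on generations, reconstructing the pedigree $\cP$ one layer at a time, starting from the extant population (which is observed exactly) and proceeding backward toward the founders. The inductive hypothesis asserts that after reconstructing generation $i$ we hold a collection of reconstructed individuals accounting for at least a $(1-\delta)$-fraction of the true generation-$i$ nodes, together with estimated gene sequences that agree with the truth on all but a small fraction of the $B$ blocks. The base case is immediate. The crux is the inductive step: given accurate reconstructed data for generation $i$, recover the couples of generation $i+1$ (the parents) and their gene sequences.

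For the inductive step I would first design and analyze a statistical test for siblinghood. The key observation is that in a fixed block two children of the same couple inherit from the same unordered pair of parental symbols $\{a,b\}$, so they agree with probability $\tfrac12$ when $a\ne b$, whereas two individuals with no recent common ancestor agree only with probability $\approx 1/|\Sigma|$, which is negligible because $|\Sigma|$ is large; cousins and more distant relatives agree with a strictly smaller constant probability than siblings. Thresholding the number of blocks exhibiting such agreement (using triples rather than pairs for robustness, since a sibling triple shares a common symbol in a block with constant probability while non-sibling triples do so far less often) therefore separates true sibling groups from the rest. Running this test and clustering generation-$i$ individuals via the induced ``sibling graph'' yields groups corresponding to couples. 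Once a sibling group descending from parental symbols $\{a,b\}$ is identified, I would reconstruct the couple's gene sequence block by block: for a typical Poisson$(\alpha)$ family with $\alpha$ a large constant, both $a$ and $b$ appear among the children in all but a small fraction of blocks, so a simple frequency rule recovers the unordered pair in each such block --- which is exactly reconstruction up to the permutation of the two individuals within each couple.

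The two quantitative engines are concentration and the union bound. Because each test aggregates over $B=\beta\log N_T$ independent blocks with a constant gap between the sibling and non-sibling agreement probabilities, Chernoff-type bounds show that any individual test errs with probability at most $N_T^{-c\beta}$ for some constant $c>0$; since every generation has size $\mathrm{poly}(N_T)$, taking $\beta$ a sufficiently large constant and union-bounding over the $\mathrm{poly}(N_T)$ triples and blocks across all $T=\eps\log N_T$ generations drives the total failure probability to $o(1)$. The $90\%$ recovery guarantee rather than $100\%$ stems from unavoidable losses: a couple whose Poisson$(\alpha)$ number of children is too small cannot be detected or have its genes reconstructed, and a generation-$(i+1)$ node can be recovered only if enough of its children in generation $i$ were recovered. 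Choosing $\alpha$ a large constant makes the per-generation loss fraction $\delta$ a small constant, and the shallow depth $T=\eps\log N_T$ ensures both that the genetic signal is not diluted over too many generations and that the accumulated losses keep every layer above $90\%$.

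The main obstacle I anticipate is controlling the propagation of reconstruction error across generations. The sequences fed into the generation-$(i+1)$ siblinghood test are not the true generation-$i$ sequences but our noisy estimates, so I must show that the per-block error rate and the fraction of missing nodes do not amplify through $\Theta(\log N_T)$ layers. The danger is a feedback loop in which small errors in the estimated parental symbols degrade the next test, causing more misclustering and hence larger errors one generation further back. The heart of the argument will therefore be a stability estimate showing that the test's constant-probability separation between siblings and non-siblings is robust to an $o(1)$ fraction of corrupted blocks, so that the inductive error budget closes with room to spare; establishing this robustness, and verifying that the reconstructed parental data stays close enough in the appropriate metric to restart the induction, is where the real work lies.
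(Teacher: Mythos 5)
Your proposal shares the paper's high-level skeleton (layer-by-layer recursion, triple-based siblinghood tests, Chernoff plus union bound over $\mathrm{poly}(N_T)$ tests), but it diverges at the one point where the real difficulty sits, and there it has a genuine gap. You reconstruct the generation-$(i+1)$ couple's sequence by a frequency rule applied to the \emph{reconstructed} generation-$i$ sequences, and you then correctly identify the resulting error-propagation problem over $\Theta(\log N_T)$ layers as ``where the real work lies'' --- but you leave that work undone. The paper's \textsc{Collect-Symbols} is designed precisely so that this problem never arises: a symbol is attributed to a couple at generation $t$ only if \emph{three extant} descendants, descended from three distinct children of that couple, all carry it in the given block. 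Only the graph structure is recursed on; symbols are always re-derived from the extant data. Combined with the collision bound (\cref{cor:few-colls}) and uniqueness of the joint LCA (\cref{lem:unique_joint_LCA}), this makes symbol recovery \emph{one-sided} (\cref{clm:symbol_consistency}): every recovered symbol is a true symbol of that couple, so errors manifest only as missing blocks, never as corrupted ones, and the fraction of missing blocks is controlled by properties of the true pedigree (awesome nodes and $b$-goodness, \cref{prop:many-awesome}, \cref{lem:awesome-are-b-good}) rather than by accumulated reconstruction noise. Your inductive invariant of ``sequences that agree with the truth on all but a small fraction of blocks'' admits two-sided error, and closing the resulting feedback loop is not a routine stability estimate --- it is the obstacle the paper's algorithm was engineered to avoid.

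The second gap is inbreeding. Your separation claim --- that cousins and more distant relatives agree with a strictly smaller constant probability than siblings --- is false in general: in \cref{fig:example-2a}, inbred cousins have coincidence probability exactly $\tfrac12$, identical to generic siblings, and the paper concludes from this that the model is unidentifiable without controlling inbreeding. The paper therefore spends most of \cref{sec:structure} and \cref{sec:symbol-inheritance} proving that such structures are rare (the collision tail bound \cref{lem:bound-collisions-main}, at most $3$ collisions per triple whp) and then doing casework over all inbred topologies with $\le 3$ collisions to get the $3/16$ versus $1/4$ separation for triples (\cref{lem:nonsiblings_symbols}); the same collision machinery underlies the fact that each sibling clique has a unique parent couple (\cref{lem:clique_has_unique_parent}), which your ``clustering yields groups corresponding to couples'' step silently assumes. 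Without a quantitative bound on inbreeding, both your Chernoff-separation argument and your parent-assignment step fail on a small but nonzero set of nodes, and since these failures recur and compound at every one of the $T = \eps\log N_T$ levels, the $90\%$ guarantee cannot be closed by the union bound alone.
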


%Let us be more precise about our reconstruction guarantee. 
Let $\cP$ denote the true, unknown pedigree. Our formal version of \cref{thm:informal_recon} (see \cref{thm:main-formal}) implies that with high probability {\sc Rec-Gen} outputs a reconstructed pedigree $\hat \cP$ whose size is at least $0.9 N_i$ in each generation $i \in \{0, \ldots, T\}$, such that every node $\hat u \in \hat \cP$ can be identified with exactly one node $u \in \cP$, and this identification preserves relationships in the sense that $\hat u$ is a child of $\hat v$ in $\hat \cP$ if and only if $u$ is a child of $v$ in $\cP$. In graph-theoretic terminology, our reconstruction $\hat \cP$ is a (very large) induced subgraph of the truth $\cP$.

We note that the stipulation that we recover 90\% of the nodes at each level is actually a simplification; in fact, we can make the fraction of reconstructed nodes in each generation \emph{arbitrarily large} by taking $\alpha$ to be large enough. We refer the reader to \cref{thm:main-formal} for details.

\subsection{The {\sc Rec-Gen} algorithm}
\label{sec:rec-gen}

The algorithm {\sc Rec-Gen} consists of a recursive procedure that uses only the genetic information from the extant population to construct a good approximation for the true pedigree $\cP$ of depth $T$ that generated the observations. In the first phase of recursion, the algorithm reconstructs the parents of the extant nodes, which we label as the $1^{\mathrm{st}}$ generation. In the $t^{\mathrm{th}}$ phase, the algorithm adds a $t^{\mathrm{th}}$ generation to the partially reconstructed version of the true pedigree given by the output of the previous phase. The algorithm terminates after $T$ phases of recursion, producing a pedigree $\hat \cP$ with $T$ generations that well-approximates the true, unknown pedigree $\cP$.

We next give a simplified version of our recursive procedure that serves to illustrate the main ideas. See \cref{sec:reconstruction} for a detailed description of {\sc Rec-Gen}. Suppose that we have constructed a pedigree $\hat{\cP}_t$ of depth $t$, and recall that $B$ refers to the length of the gene sequence of an individual. Also recall that a \textit{couple} refers to a pair of mated individuals. 

Note that the first step of our recursive procedure equips each couple with an empirical gene sequence of length B where each block can contain \textit{two} distinct symbols. This empirical gene sequence is constructed based on extant data and should be thought of as determining which symbols belong to at least one of the individuals from the couple in a given block. Also, we say that three gene sequences $\sigma, \sigma', \sigma''$ \textit{overlap} in a block if all three sequences have some symbol in common in that block.

% $b \in [B]$ if $\sigma(b) \cap \sigma'(b) \cap \sigma''(b) \neq \emptyset$. 

Perform the following steps to output a pedigree $\hat \cP_{t+1}$ of depth $t+1$.

% that, for simplicity of discussion, \textit{exactly} matches the true, unknown pedigree $\cP$ up to generation $t$.

\begin{itemize}
    \item[(1)] {\sc Collect-Symbols} For each couple $c$ in generation $t$ of $\hat \cP_t$, use the extant genetic data to recover symbols that belong to $c$ as follows.
    % Recover a symbol $\sigma$ in block $b \in [B]$ for $c$ if $c$ has three extant descendants descended from distinct children of $c$ that all share symbol $\sigma$ in block $b$. Repeat this procedure to recover at most one other symbol $\sigma' \neq \sigma$ for $c$ in block $b$. 
    \begin{itemize}
        \item Recover a symbol $\sigma$ in block $b \in [B]$ of $c$ if $c$ has three extant descendants descended from distinct children of $c$ that all share symbol $\sigma$ in block $b$. 
        \item Repeat this procedure to recover at most one other symbol $\sigma' \neq \sigma$ for $c$ in block $b$.
    \end{itemize}
    \item[(2)] {\sc Test-Siblinghood} For every triple of couples $c, c',c'' \in \hat \cP_t$ in generation $t$, determine $c,c',c''$ to be (mutually) `siblings' if and only if at least $0.21B$ of their recovered symbols mutually overlap. 
    \item[(3)] {\sc Assign-Parents} For every maximal collection $\mathcal{C} =\{ c_1, c_2, \ldots, c_k\}$ of couples in generation $t$ such that every triple in $\cC$ consists of mutual siblings, construct a pair of parents in generation $t+1$ that have as children precisely one individual from each couple in $\cC$.\footnote{We perform this step in such a way that every child is assigned at most $2$ parents.}
    % %\begin{itemize}
    %     \item We perform this step in such a way that every child is assigned at most $2$ parents.
    % \end{itemize}
\end{itemize}

After $T$ iterations of the above recursive procedure, we output a pedigree $\hat \cP_T$ that gives a good approximation to the underlying pedigree that generated the extant genetic data as described in \cref{thm:informal_recon}. We remark that working with triples as above greatly simplifies our analysis, as discussed in~\cref{sec:triples}. 

% We repeat the procedure described above of (1) collecting symbols, (2) testing siblinghood, and (3) reconstructing and assigning parents until we have constructed a pedigree $\hat \cP_T$ with $T$ generations. 

\subsection{Model discussion and future directions}
\label{sec:model-discussion}
Our generative model imposes various constraints on the typical pedigrees that we consider. We discuss these modeling assumptions here and also consider the problem of investigating more general models that could more accurately capture properties of real-world data.

First, we consider the assumption that the size of the alphabet $\Sigma$ is very large with respect to the size $N_T$ of the founding population. 
Since a ``block'' represents the unit of inheritance from a parent\footnote{Using biology terminology, each block can be considered as an idealized abstraction of a collection of \emph{single-nucleotide polymorphisms} (sites of variation) with high \emph{linkage disequilibrium} (empirical measure of correlation) that are passed from parent to child.}, this implies that with very high probability all of the founders have distinct symbols in their gene sequences, and no two founders share a common symbol.\footnote{Mathematically, this can be thought of as an improper prior on a countably infinite alphabet $\Sigma$.}
Our large alphabet assumption is equivalent to the assertion that the founders are unrelated.

Second, the stochastic process describing inheritance in our model has the following biological interpretation.
A standard concept in population genetics refers to long-running sequence matches as being \textit{identical by descent} (IBD) if they arose due to inheritance from a common ancestor \cite{thompson2013identity}.
In contrast, the term \textit{identity by state} refers to the event that two identical tracts in the genome arose by coincidence -- via mutations -- in two unrelated individuals. %\cite{kirkpatrick2011pedigree}. 
Our inheritance model contains the assertion that each block corresponds to true IBD sequences: if two individuals have the same symbol, we can always identify a common ancestor that gave rise to these symbols.

Third, we recall the hypothesis that every couple has on average $\alpha$ children, where $\alpha$ is a sufficiently large absolute constant independent of the size $N_T$ of the founding population. This ensures that, roughly speaking, every new generation is a factor $\alpha/2$ larger than the previous one. Assuming roughly uniform growth of generations, it is necessary that $\alpha > 0\, \,$--- otherwise the population would die out and there would be no extant nodes after $T$ generations. More subtly, it is necessary that $\alpha \geq 2$\, \,--- otherwise, via standard results from the theory of branching processes (see, \textit{e.g.} \cite{KimAxe15}) a founding node has a very low probability of passing on its symbols to the extant. In this situation, even \textit{detection} of such an ancestor from extant genetic data alone is information-theoretically impossible. On the other hand, our assumption that $\alpha$ is a large constant essentially amplifies the signal sent from a founder to the extant, and this simplifies our mathematical analysis.

Our first open question considers relaxing the previously discussed assumptions.

\begin{question}
What theoretical guarantees can be established for pedigree reconstruction in the context of our generative model when $\alpha$ is very close to $2$? What about when the size of the alphabet $\Sigma$ is finite? Can we analyze more generic models of inheritance where blocks are not inherited i.i.d. from parents?
\end{question}

A more subtle consequence of our generative model is \textit{inbreeding}, a term we use to refer to the following phenomena: (1) the presence of multiple lowest common ancestors for a pair of extant nodes, and (2) the presence of mated couples such that the two individuals in the couple have a lowest common ancestor (LCA) (see \cref{def:LCA} for the formal definition of an LCA). The \textit{degree} of inbreeding qualitatively refers to the frequency of such structures in the pedigree. Moreover, inbreeding as in (2) is mathematically equivalent to having cycles in the pedigree.
% and we think of pedigrees with many short cycles as having a relatively high degree of inbreeding. 
In general, a higher degree of inbreeding makes the pedigree reconstruction problem more difficult and in some cases information-theoretically impossible (see \cref{sec:examples} for detailed examples). Our choice of model allows for some degree of inbreeding, and our algorithm and analysis are carefully tailored to circumvent this obstacle.
% by exploiting a certain type of correlation decay present in a typical pedigree drawn from our generative model. 

Other assumptions inherent in our model include that the pedigree is \textit{graded}, \textit{i.e.}, couples are formed from individuals in the same generation, and \textit{monogamous}: a given individual only mates with one other individual. Furthermore, \textit{mutations} --- errors in the transmission of genetic data from parents to offspring --- are a central component in biological applications that our current model does not incorporate. 

\begin{question}
What theoretical guarantees can be established for reconstruction of pedigrees in generative models with some combination of (i) a higher degree of inbreeding, (ii) mutations, (iii) non-monogamous mating, and (iv) inter-generational mating?
\end{question}

\section{Inference challenges and techniques}
\label{sec:techniques}
In this section, we detail some of the challenges posed by the reconstruction of pedigrees constructed from our generative model as well as our techniques and analysis for handling them. To develop some intuition for our strategy, we first illustrate some of the properties of pedigrees using concrete examples.

\subsection{Examples: complications from inbreeding}
\label{sec:examples}

% \subsection{Symbol inheritance}
% \label{sec:symbol-inheritance}
% %% ======== infinite symbols assumption
% In our model, the founders of any complete pedigree $\cP$ are initialized with symbols so that for any block $b$ and for any $u \neq v \in V_T$, $s_{u}[b] \neq s_{v}[b]$.
% In other words, each founder has a symbol at block $b$ that is distinct from every other founder at block $b$.
% \ynote{Maybe motivate this by mentioning division of DNA into (large) blocks of inheritance? --> TODO: copy blurb from previous writeup}

% %% ======== independent inheritance assumption
% Let $\{u_1, u_2\}$ be a couple in a graded, monogamous pedigree $\cP$ with child $v$.
% Given the parents' symbols $s_{u_1}$, $s_{u_2}$, the symbol for any particular block $b$ of $v$ is $s_{u_1}[b]$ or $s_{u_2}[b]$ with probability $1/2$ each. 
% This occurs independently for each block.

Recall that two individuals $u,v$ that share the same set of parents are \textit{siblings}. 
If two individuals share a common subset of grandparents (but not parents), we refer to them as \textit{cousins}.

\begin{figure}%[!ht]
    \centering
    \subfigure[three sets of grandparents (cousins, one way)]{
    \label{fig:example-1a}
    \begin{tikzpicture}[scale=0.8,
        > = stealth,
        auto,
        node distance = 3cm, 
        semithick  
    ]
    
	\tikzset{rect state/.style={draw,rectangle,thick,minimum size=4mm}}
	\tikzset{circ state/.style={draw,circle,thick,minimum size=8mm}}
	\tikzset{dashed state/.style={draw,circle,dashed,thick,minimum size=4mm}}
	
	\node[circ state] (1A) at (-1.675,0) {$\boxed{k}$};
	\node[circ state] (1B) at (1.675,0) {$\boxed{\ell}$};
	
	\node[circ state] (2A) at (-3.6375,2) {$\boxed{g}$};
	\node[circ state] (2B) at (-2.0625,2) {$\boxed{h}$};
	\node[circ state] (2C) at (2.0625,2) {$\boxed{i}$};
	\node[circ state] (2D) at (3.6375,2) {$\boxed{j}$};
	
	\node[circ state] (3A) at (-6.3375,4) {$\boxed{a}$};
	\node[circ state] (3B) at (-4.7625,4) {$\boxed{b}$};
	\node[circ state] (3C) at (-0.825,4) {$\boxed{c}$};
	\node[circ state] (3D) at (0.825, 4) {$\boxed{d}$};
	\node[circ state] (3E) at (4.7625,4) {$\boxed{e}$};
	\node[circ state] (3F) at (6.3375,4) {$\boxed{f}$};
	
	\path[->] (3A) edge (2A);
	\path[->] (3B) edge (2A);
	\path[->] (3C) edge (2B);
	\path[->] (3D) edge (2B);
	\path[->] (3C) edge (2C);
	\path[->] (3D) edge (2C);
	\path[->] (3E) edge (2D);
	\path[->] (3F) edge (2D);
	\path[->] (2A) edge (1A);
	\path[->] (2B) edge (1A);
	\path[->] (2C) edge (1B);
	\path[->] (2D) edge (1B);
    \end{tikzpicture}
    }
    ~
    \subfigure[two sets of grandparents (cousins, two ways)]{
    \label{fig:example-1b}
    \begin{tikzpicture}[scale=0.8,
        > = stealth,
        auto,
        node distance = 3cm, 
        semithick  
    ]
    
	\tikzset{rect state/.style={draw,rectangle,thick,minimum size=4mm}}
	\tikzset{circ state/.style={draw,circle,thick,minimum size=8mm}}
	\tikzset{dashed state/.style={draw,circle,dashed,thick,minimum size=4mm}}
	
	\node[circ state] (1A) at (-1.675,0) {$\boxed{k}$};
	\node[circ state] (1B) at (1.675,0) {$\boxed{\ell}$};
	
	\node[circ state] (2A) at (-3.6375,2) {$\boxed{g}$};
	\node[circ state] (2B) at (-2.0625,2) {$\boxed{h}$};
	\node[circ state] (2C) at (2.0625,2) {$\boxed{i}$};
	\node[circ state] (2D) at (3.6375,2) {$\boxed{j}$};
	
	\node[circ state] (3A) at (-3.6375,4.5) {$\boxed{a}$};
	\node[circ state] (3B) at (-2.0625,4.5) {$\boxed{b}$};
	\node[circ state] (3C) at (2.0625,4.5) {$\boxed{c}$};
	\node[circ state] (3D) at (3.6375,4.5) {$\boxed{d}$};
	
	\path[->] (3A) edge (2A);
	\path[->] (3B) edge (2A);
	\path[->] (3C) edge (2B);
	\path[->] (3D) edge (2B);
	\path[->] (3A) edge (2C);
	\path[->] (3B) edge (2C);
	\path[->] (3C) edge (2D);
	\path[->] (3D) edge (2D);
	\path[->] (2A) edge (1A);
	\path[->] (2B) edge (1A);
	\path[->] (2C) edge (1B);
	\path[->] (2D) edge (1B);
    \end{tikzpicture}
    }

    \caption{
    Simple examples of depth-3 complete pedigrees with a single block. The letters inside the boxes represents the block data.
    \ref{fig:example-1a}: The overlap probability is $\Pr(k = \ell) = \tfrac{1}{8}$.
    \ref{fig:example-1b}: An altered version of~\ref{fig:example-1a} with only two sets of grandparents, which yields $\Pr(k = \ell) = \frac{1}{4}$.
    % \ref{fig:example-1a} induces couples pedigree \ref{fig:example-1a-couples} which happens to be a tree, while \ref{fig:example-1b} induces \ref{fig:example-2a-couples} which is \textit{not} a tree (there is a 4-cycle).
    }
    \label{fig:example-1}
\end{figure}
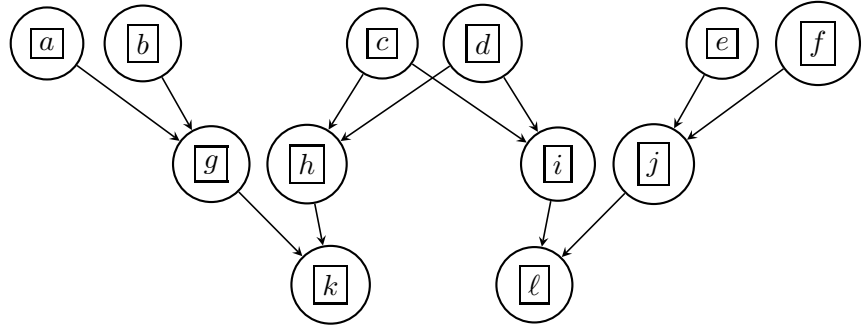
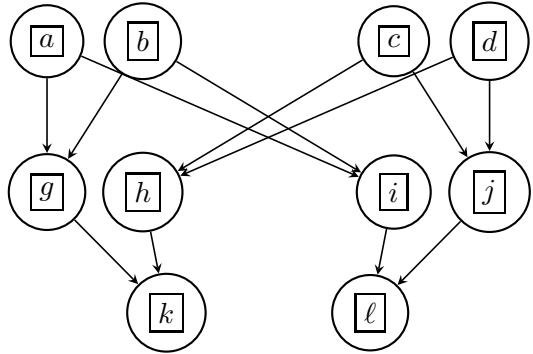

First consider the pedigrees displayed in~\cref{fig:example-1a}. An important statistic for determining relationships is the correlation between symbols of nodes at the same level. Consider the event $E$ that the left extant shares the same symbol as the right extant. Note that these two extant nodes are cousins sharing a single set of grandparents. The grandparents are the founders in this example, so we assign to each of them a unique symbol ($a \neq b \neq c \neq d \neq e \neq f$).
The occurrence of $E$ implies that $k = c$ or $k = d$ via the left extant receiving a symbol from its right parent; this occurs with probability $\tfrac{1}{2}$.
Conditioned on this occurring, the right extant block $\ell$ is the same as $k$ with probability $\frac{1}{4}$, so the overall probability that both receive the same symbol is $\frac{1}{8}$.

Compare this to the example shown in \cref{fig:example-1b}, where the two extant are cousins in two ways (\textit{siblings marrying siblings}).
Note that whichever symbol (out of $a,b,c,d$) that $k$ is, the right grandchild receives the same independently with probability $\tfrac{1}{4}$. This is an example of a type of inbreeding where two extant nodes have more than one LCA.

\begin{figure}
    \centering
    \subfigure[four siblings begetting cousins.]{
    \label{fig:example-2a}
    \begin{tikzpicture}[scale=0.8,
        > = stealth,
        auto,
        node distance = 3cm, 
        semithick  
    ]
    
	\tikzset{rect state/.style={draw,rectangle,thick,minimum size=4mm}}
	\tikzset{circ state/.style={draw,circle,thick,minimum size=8mm}}
	\tikzset{dashed state/.style={draw,circle,dashed,thick,minimum size=4mm}}
	
	\node[circ state] (1A) at (-1.675,0) {$\boxed{k}$};
	\node[circ state] (1B) at (1.675,0) {$\boxed{\ell}$};
	
	\node[circ state] (2A) at (-3,2) {$\boxed{e}$};
	\node[circ state] (2B) at (-1.7,2) {$\boxed{f}$};
	\node[circ state] (2C) at (1.7,2) {$\boxed{g}$};
	\node[circ state] (2D) at (3,2) {$\boxed{h}$};
	
	\node[circ state] (3A) at (-1.0,4) {$\boxed{a}$};
	\node[circ state] (3B) at (1.0,4) {$\boxed{b}$};
	
	\path[->] (3A) edge (2A);
	\path[->] (3B) edge (2A);
	\path[->] (3A) edge (2B);
	\path[->] (3B) edge (2B);
	\path[->] (3A) edge (2C);
	\path[->] (3B) edge (2C);
	\path[->] (3A) edge (2D);
	\path[->] (3B) edge (2D);
	\path[->] (2A) edge (1A);
	\path[->] (2B) edge (1A);
	\path[->] (2C) edge (1B);
	\path[->] (2D) edge (1B);
    \end{tikzpicture}
    }  \hspace{2.5cm} \subfigure[two siblings begetting siblings.]{
    \label{fig:example-2b}
    \begin{tikzpicture}[scale=0.8,
        > = stealth,
        auto,
        node distance = 3cm, 
        semithick  
    ]
    
	\tikzset{rect state/.style={draw,rectangle,thick,minimum size=4mm}}
	\tikzset{circ state/.style={draw,circle,thick,minimum size=8mm}}
	\tikzset{dashed state/.style={draw,circle,dashed,thick,minimum size=4mm}}
	
	\node[circ state] (1A) at (-1,0) {$\boxed{k}$};
	\node[circ state] (1B) at (1,0) {$\boxed{\ell}$};
	
	\node[circ state] (2A) at (-1.0,2) {$\boxed{e}$};
	\node[circ state] (2B) at (1.0,2) {$\boxed{f}$};
	
	\node[circ state] (3A) at (-1.0,4) {$\boxed{a}$};
	\node[circ state] (3B) at (1.0,4) {$\boxed{b}$};
	
	\path[->] (3A) edge (2A);
	\path[->] (3B) edge (2A);
	\path[->] (3A) edge (2B);
	\path[->] (3B) edge (2B);
	\path[->] (2A) edge (1A);
	\path[->] (2B) edge (1A);
	\path[->] (2A) edge (1B);
	\path[->] (2B) edge (1B);
    \end{tikzpicture}
    }
    
    \caption{Two examples of complete pedigrees with inbreeding. The extants in \ref{fig:example-2a} are cousins, yet they have a coincidence of $\tfrac{1}{2}$ as if they were generic siblings from unrelated parents.
    % The \ref{fig:example-2a} induces the couples pedigree \ref{fig:example-2a-couples}, which is not a tree because it has at least two undirected cycles.
    In comparison, \ref{fig:example-2b} yields $\tfrac{3}{4}$ which exceeds the coincidence of siblings.
    }
    \label{fig:example-inbreeding}
\end{figure}
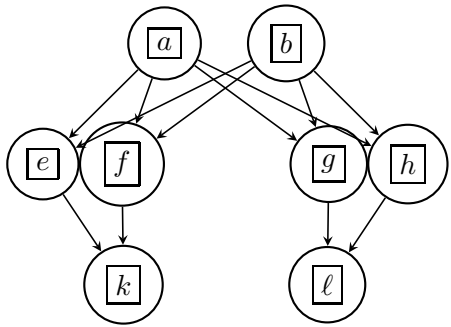
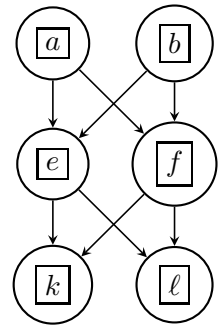

% The term \textbf{inbreeding} refers to the presence of two parental nodes whose ancestral lineages are connected.

The examples in \cref{fig:example-inbreeding} demonstrate how the correlation between extant nodes is boosted due to the presence of inbreeding. Note that in the \textit{generic} case where extant siblings have an ancestral pedigree that is a tree, these individuals have a $\tfrac{1}{2}$ fraction overlap in their blocks. 
% while the calculation for~\cref{fig:example-1a} shows that extant cousins in a tree have at most $\tfrac{1}{8}$ overlap. 
For comparison, let us compute the probability of coincidence for the two extant nodes in \cref{fig:example-2a}.
The probability that $k=a$, for example, is
\[
    \Pr(k = a)
    = \Pr(e = f = a) + \frac{1}{2} \Pr\left(\{e,f\} = \{a,b\}\right) 
    = \frac{1}{4} + \left( \frac{1}{2} \right)^2 
    = \frac{1}{2}.
\]
Since $k$ and $\ell$ inherit symbols independently from their grandparents, the overall probability is
\[
    \Pr(k = \ell) 
    = \Pr(k = \ell = a) + \Pr(k = \ell = b) 
    = \left( \frac{1}{2} \right)^2 + \left( \frac{1}{2} \right)^2
    = \frac{1}{2},
\]
which is precisely the probability that two generic siblings inherit the same symbol. 

The situation in \cref{fig:example-2b} is even more pronounced. 
The two parents share the same symbol (either $a$ or $b$) with probability $\tfrac{1}{2}$ and have different symbols with probability $\tfrac{1}{2}$.
This means that the coincidence probability is now $\tfrac{1}{2} + \tfrac{1}{2} \times \tfrac{1}{2} = \tfrac{3}{4}$: their correlation between overlaps is much stronger than that of siblings in the generic case.

From the example in \cref{fig:example-2a}, we conclude that the statistical model of extant data parametrized by pedigrees is unidentifiable. 
Stated another way, it is information-theoretically impossible to distinguish between siblings and inbred cousins using only extant data.
Thus, in order for any algorithm to succeed in reconstructing a large fraction of the pedigree using only extant data, it is necessary to bound the amount of inbreeding in the ensemble of pedigrees of interest. We accomplish this using a careful analysis of our generative model. 

% These are examples of information-theoretic obstacles for pedigrees.

% In particular, \cref{fig:example-2a} demonstrates the potential impossibility of distinguishing between \textit{inbred cousins} and \textit{siblings} using only data at the extant level.
% This extends to the more general observation that a higher amount of inbreeding potentially implies statistical indistinguishability of relationships (graph distance).

% Thus, in order for any algorithm to succeed in estimating relationships using only the extant data, it is helpful to bound the amount of inbreeding in the ensemble of pedigrees of interest.
% In this work, we accomplish this using a probabilistic setup (\cref{sec:structure}).

% \subsection{Algorithm design and analysis}
\subsection{Informal analysis of {\sc Rec-Gen}}
\label{sec:analysis}

In this section, we present a high-level analysis of the {\sc Rec-Gen} algorithm.~\cref{thm:informal_recon} states that {\sc Rec-Gen} yields an accurate reconstruction on $90\%$ of nodes for typical pedigrees from our generative model\footnote{We note again that the $90\%$ is for simplicity of exposition, and in reality we can recover an arbitrarily large fraction of nodes. This is made precise in \cref{thm:main-formal}.}. Note that a formal statement of this theorem, our main result, is given by~\cref{thm:main-formal}, and a complete proof is contained in the upcoming sections.

Suppose we construct a pedigree $\hat \cP_t$ on $t$ generations that, for simplicity of the discussion, \textit{exactly} matches the true, unknown pedigree $\cP$ up to generation $t$.
We show that {\sc Collect-Symbols}, {\sc Test-Siblings}, and {\sc Assign-Parents} applied to $\hat \cP_t$ provide an accurate reconstruction of $90 \%$ of the nodes at generation $t+1$. 
In the remainder of this section we give a high-level argument that the output $\hat \cP_{t+1}$ satisfies the following conditions:
\begin{itemize}
    \item[(i)] every individual $\hat u$ in $\hat \cP_{t+1}$ can be identified with a unique individual $u$ in $\cP$ at generation $t+1$,
    \item[(ii)] at most $10\%$ of the nodes in generation $t+1$ of $\cP$ are not identified with an individual in $\hat \cP_{t+1}$, and 
    \item[(iii)] if $v$ is a child of $\hat u$ in $\hat \cP_{t+1},$ then $v$ is a child of $u$ in $\cP$.
\end{itemize}
% (i) every individual $\hat u$ in $\hat \cP_{t+1}$ can be identified with a unique individual $u$ in $\cP$ at generation $t+1$, (ii) at most $10\%$ of the nodes in generation $t+1$ of $\cP$ are not identified with an individual in $\hat \cP_{t+1}$, and (iii) $v$ is a child of $\hat u$ in $\hat \cP_{t+1}$ if and only if $v$ is a child of $u$ in $\cP$. 
Recall that for the purposes of reconstruction, we only have access to the genetic data of the extant.
% \footnote{Note that our algorithm does not iteratively recover symbols of \textit{individuals}. 
% This is because, due to the symmetry in our model, it is not clear how to perform block-phasing (assigning symbols from a child to its parents) in a purely recursive fashion along with recovering the pedigree structure. \pnote{Govind, given our discussion, does this sound true to you?}\gnote{what does purely recursive fashion mean? I wonder if we can actually list some of the symbols of individuals, via correlations in which subtrees symbols are appearing in. Of course we cannot distinguish gender. I think our current arguments don't give great guarantees for this, but we should talk.}

% The next three facts 

% The proof of \cref{thm:informal_recon} follows directly from the next five facts. 

In this discussion, we refer to three couples $c, c',c'' \in \cP$ as (mutual) siblings if there exist individuals $u \in c, u' \in c',$ and $u'' \in c''$ such that $u, u',$ and $u''$ are mutually siblings. A \textit{clique} refers to a collection of couples $\cC = \{c_1, \ldots, c_k\}$ such that every triple from $\cC$ consists of mutual siblings.

The next two facts are essential to the argument. 

% The following three facts are essential to the argument. 

\begin{enumerate}
    % \item[(A)] With high probability, all triples of individuals that are mutually siblings in $\cP$ have at least $24\%$ mutually overlap between their symbols.
    \item[(A)] If {\sc Collect-Symbols} recovers symbol $\sigma$ in block $b$ for a couple $c$ in generation $t$, then $c$ also has the symbol $\sigma$ in block $b$ in $\cP$  (\cref{clm:symbol_consistency}).
    \item[(B)] {\sc Collect-Symbols} recovers at least $99\%$ of the symbols for at least $99\%$ of the couples in generation $t$ (\cref{lem:awesome-are-b-good}).
    % \item[(B)] At least $90 \%$ of couples $c$ in generation $t+1$ of $\cP$ have at least $\alpha/4$ children such that {\sc Collect Symbols} recovers at least $99\%$ of each child's corresponding couple's symbols. 
    % \item[(C)] {\sc Test Siblinghood} never misclassifies non-siblings as siblings. 
    % With high probability, {\sc Test Siblinghood} classifies couples $c, c' \in \cP$ at generation $t$ as siblings if and only if they are siblings.
\end{enumerate}
% \pnote{TO DO:check constant $\alpha/4$}

Together, (A) and (B) imply that for $99\%$ of the couples in generation $t$, our algorithm gets all of the siblings relationships between these couples correct. 
To see why, we can use a similar calculation as in the first example of \cref{sec:examples} to conclude that the average overlap between the symbols of three individuals that are mutually siblings is $25\%$. 
By concentration of binomial random variables about their means, it follows that with high probability, all triples of individuals that are mutually siblings in $\cP$ have at least $24.9\%$ mutual overlap between their symbols. 
A simple union bound combined with (A) and (B) implies that for most triples of individuals in generation $t$ that are mutually siblings, the recovered symbols from {\sc Collect-Symbols} in those individuals' corresponding couples have overlap at least $21\%$. 
Hence, {\sc Test-Siblinghood} infers correct siblinghood relationships for a majority of triples.

Moreover, our siblings test on the recovered symbols does not have any false-positives:

\begin{itemize}
    \item[(C)] {\sc Test-Siblinghood} never misclassifies non-siblings as siblings (\cref{lem:sibling_graph_reconstruct}).
\end{itemize}
% Given (A), (B), and (C), the reconstruction guarantees (i), (ii), and (iii) from the beginning

The next and last key fact argues that our naive assignment of parents to individuals in cliques as in {\sc Assign Parents} is in fact the correct assignment in a typical pedigree. This property holds with very high probability over our generative model.

\begin{itemize}
    \item[(D)] Let $\cC \subset \cP$ denote a clique at generation $t$ in the true pedigree. Then there exists a couple $\tilde{c}$, which we refer to as the \textit{parents of} $\cC$, in generation $t+1$ of $\cP$ that has exactly one child in every couple of $\cC$, and no other couple has more than $1$ child in $\cC$ (\cref{lem:clique_has_unique_parent}). 
\end{itemize}

Together, (A), (B), (C), and (D) imply that our reconstruction criteria (i), (ii), and (iii) from the beginning of this section hold, as we now justify. Recall that we already showed (A) and (B) imply that we classify a large fraction of the couples at generation $t$ correctly as siblings. Moreover, part (C) and the transitivity of siblinghood in $\cP$ imply that cliques in our reconstruction really correspond to cliques in the truth. By part (D) such cliques have unique parents. Thus, for (i), we identify newly constructed couples $\hat u \in \cP_{t+1}$ with the \textit{unique} parents $u \in \cP$ of the clique formed by the children of $\hat u$, further pairing the two individuals in $u$ with those in $\hat u$ arbitrarily. With this identification, (iii) follows immediately. To show part (ii), later in the paper we give a sufficient condition for a couple at generation $t$ to have $99\%$ of its symbols collected by {\sc Collect-Symbols} as in (B) (see \cref{lem:awesome-are-b-good}). Then we show that $90 \%$ of individuals in generation $t+1$ have children in such couples (see \cref{prop:many-awesome}), which proves part (ii). Essentially, this sufficient condition amounts to saying that a couple $c$ at generation $t$ has no inbreeding (cycles) above or below it (\textit{i.e.} among its ancestors or descendants, respectively) and that the pedigree of descendants of $c$ contains a $\alpha/4$-ary tree (see \cref{def:awesome_node}). 

% Part (ii) is a consequence of 

% it turns out that for any couple in generation $t$ that is the root of a $\alpha/4$-ary tree in $\cP$, {\sc Collect Symbols} recovers at least $99\%$ of that couples symbols. Using properties of the structure of typical pedigrees, we show in \cref{lem:many-rich} that at least $95\%$ of the couples in generation $t+1$ are roots of $\alpha/4$-ary trees, which confirms (ii).

% \pnote{TO DO: check constant $\alpha/4$}

\subsection{Motivation for using triples} 
\label{sec:triples}

It is tempting to employ a seemingly simpler recursive scheme than the one described in~\cref{sec:rec-gen} that operates on pairs instead of triples. 
As an example, consider an alternative recursive procedure such that:
\begin{enumerate}
    \item {\sc Collect-Symbols} only uses \textbf{pairs} of extant descendants to recover symbols of a couple $c,$
    \item {\sc Test-Siblinghood} considers only \textbf{pairs} of couples at generation $t$ and detects them to be siblings if their strings overlap by at least $49 \%$, and
    \item {\sc Assign-Parents} assigns parents to individuals in maximal collections $\cC$ such that every \textbf{pair} of couples is (tested as) siblings.
\end{enumerate}
Unfortunately, this simpler approach encounters two major technical complications.

First, working with a pairwise siblings test introduces a problem for the step of assigning parents. 
Define a \textit{pairwise clique} to be a collection of couples so that every pair of couples passes the pairwise siblings test. 
With high probability, it turns out in every generation there exist a constant number of pairwise cliques that are not explained in the naive way of assigning to this clique parents that have precisely one child per couple. 
In particular, in the true pedigree $\cP$ it is possible to have three couples that mutually pass the pairwise siblings test, yet there are \textbf{three} distinct parent couples each having precisely two children among these three couples. See \cref{fig:sib-bad-ex} for an illustration. This type of structure, though rare, occurs a constant number of times in each generation, and thus introduces inherent errors in our reconstruction that accumulate at every step of iteration.

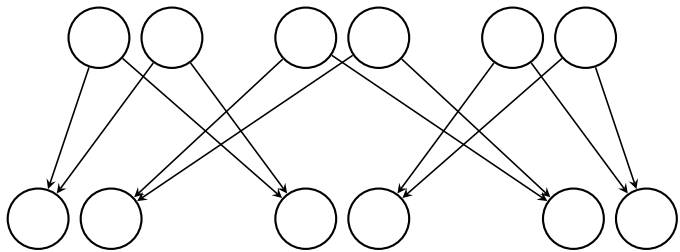
\begin{figure}[!ht]
    \centering
    \begin{tikzpicture}[scale=0.8,
        > = stealth,
        auto,
        node distance = 3cm, 
        semithick  
    ]
    
	\tikzset{rect state/.style={draw,rectangle,thick,minimum size=4mm}}
	\tikzset{circ state/.style={draw,circle,thick,minimum size=8mm}}
	\tikzset{dashed state/.style={draw,circle,dashed,thick,minimum size=4mm}}
	
	\node[circ state] (1A) at (-5.0,2) {};
	\node[circ state] (1B) at (-3.8,2) {};
	\node[circ state] (1C) at (-0.6,2) {};
	\node[circ state] (1D) at (0.6,2) {};
	\node[circ state] (1E) at (3.8,2) {};
	\node[circ state] (1F) at (5.0,2) {};
	
	\node[circ state] (2A) at (-4.0,5) {};
	\node[circ state] (2B) at (-2.8,5) {};
	\node[circ state] (2C) at (-0.6,5) {};
	\node[circ state] (2D) at (0.6,5) {};
	\node[circ state] (2E) at (2.8,5) {};
	\node[circ state] (2F) at (4.0,5) {};
	
	\path[->] (2A) edge (1A);
	\path[->] (2B) edge (1A);
	\path[->] (2A) edge (1C);
	\path[->] (2B) edge (1C);
	\path[->] (2E) edge (1D);
	\path[->] (2F) edge (1D);
	\path[->] (2E) edge (1F);
	\path[->] (2F) edge (1F);
	\path[->] (2C) edge (1B);
	\path[->] (2D) edge (1B);
	\path[->] (2C) edge (1E);
	\path[->] (2D) edge (1E);
    \end{tikzpicture}
    
    \caption{
        An undesirable subpedigree, where three child couples have mutual siblingship, but they do not mutually share a parent couple.
    }
    \label{fig:sib-bad-ex}
\end{figure}

A second problem caused by working with pairs arises in the step of collecting symbols. The pairwise version of our algorithm assigns a symbol to a couple if that symbol occurs in two extant descendants that are descended from distinct children of that couple. In our generative model, it turns out that with high probability there are a \textit{logarithmic} number of pairs of extant nodes that have at least two LCA's. For such pairs, the pairwise algorithm does not accurately assign symbols to their reconstructed ancestors. Similar to the previous issue, these errors snowball and make the analysis for proving \cref{thm:informal_recon} very difficult.

On the other hand, working with an algorithm using triples as described in \cref{sec:rec-gen} makes for a much cleaner analysis and nicer reconstruction guarantee. This innovation circumvents the technical complications of the pairwise version because every clique (recall that this is a collection of couples where every triple consists of mutual siblings) can be explained in a naive way (\cref{lem:clique_has_unique_parent}), and in our generative model every triple of extant individuals descended from distinct children of a given ancestor have that ancestor as their \textit{unique} LCA with very high probability (\cref{lem:unique_joint_LCA}).

% Our solution to these issues is to instead work with triples as our basic structure for reconstruction. That is, in the collecting symbols step, we recover a symbol in a block for a couple $c$ if $c$ has three extant descendants descended from distinct children of $c$ sharing a common symbol in that block. Next, instead of a pairwise siblings test, we employ a triples siblings test. Finally, for assigning parents, we construct cliques of siblings where now a clique is defined to be a set of couples so that every \textit{triple} from the clique passes the triples siblings test. 

\subsection{Outline of technical arguments}

The remainder of the paper, which provides a formal proof of \cref{thm:informal_recon}, is divided into four parts.
\begin{itemize}
    \item \cref{sec:prelims} provides preliminary definitions and a formal definition of our generative model.
    \item \cref{sec:structure} proves important properties about the typical network structure of pedigrees from our generative model.
    \item \cref{sec:symbol-inheritance} proves important properties about the block statistics of the extant nodes in a typical pedigree from our generative model. 
    % and provides a crucial definition (see \cref{def:awesome_node}) that serves to characterize the individuals that {\sc Rec-Gen} reconstructs.
    \item \cref{sec:reconstruction} gives a precise description of {\sc Rec-Gen} and provides a formal statement and proof of \cref{thm:informal_recon}.
\end{itemize}

Specifically, in \cref{sec:structure} we rigorously quantify the degree of inbreeding in typical pedigrees from our model by counting the number of \textit{collisions} (see \cref{def:collision} and \cref{lem:bound-collisions-main}). This has several useful consequences, including that every clique has a unique parent (fact (D) from \cref{sec:analysis}, also see \cref{lem:clique_has_unique_parent})
% that at every generation many couples are roots of $\alpha/4$-ary trees in $\cP$ (see \cref{lem:many-rich}); 
and that the extant individuals used in {\sc Collect-Symbols} have a unique LCA (see \cref{lem:unique_joint_LCA}). In particular, the latter is key to showing fact (A) from \cref{sec:analysis}. 

In \cref{sec:symbol-inheritance}, we provide a definition (see \cref{def:awesome_node}) that essentially characterizes the individuals in $\cP$ that are reconstructible via {\sc Rec-Gen}. We show that couples involving such individuals, referred to as \textit{awesome couples}, transmit many of their symbols to the extant, with high probability (see \cref{lem:awesome-are-b-good}). In particular, awesome couples have at least $99\%$ of their symbols recovered by {\sc Collect-Symbols} (fact (B) from \cref{sec:analysis}). We also prove an important result for our siblings test: triples of individuals that are not mutually siblings have mutually overlap at most $19\%$ (see \cref{lem:nonsiblings_symbols}). This combined with fact (A) from \cref{sec:analysis} essentially shows that {\sc Test-Siblinghood} never classifies non-siblings as siblings (fact (C) from \cref{sec:analysis}, see also \cref{lem:sibling_graph_reconstruct}).  

Our final section, \cref{sec:reconstruction} ties everything together, following fairly closely the high-level argument presented in \cref{sec:analysis} to prove the formal version of \cref{thm:informal_recon}. 

\section{Preliminaries}
\label{sec:prelims}
\subsection{Key definitions and terms}
\label{sec:definitions}
% \begin{definition}
% \pnote{This is the old definition which includes genetic information}
% A \textbf{pedigree} $\cP$ is a triple $(V,E,s)$ satisfying:
% \begin{enumerate}
%     \item $(V,E)$ is a directed acyclic graph (DAG) on vertices $V$,
%     \item each $v$ has an associated string $s_v \in \Sigma^{B}$ (for some alphabet $\Sigma$ and index set $B$).
%     We refer to elements of $B$ as \textbf{blocks}, and $s_v$ as the \textbf{symbols of $v$}.
% \end{enumerate}
% The collection of nodes of indegree zero are referred to as the \textbf{founders}, and the collection of nodes of outdegree zero are referred to as the \textbf{extant}.
% \end{definition}

\begin{definition}
A \textbf{pedigree} $\cP = (V, E)$ is a directed acyclic graph (DAG) with vertices $V$ and edges $E$ where every vertex has indegree at most $2$. The collection of vertices of indegree zero are referred to as the \textbf{founders}, and the collection of vertices of outdegree zero are referred to as the \textbf{extant}.
\end{definition}

\begin{definition}
If the indegree of each vertex in the underlying DAG is either $2$ or $0$, then $\cP$ is called a \textbf{complete} pedigree.
\end{definition}

% \begin{definition}
% A \textbf{complete pedigree} $\cP$ is a triple $(V,E,s)$ satisfying:
% \begin{enumerate}
%     \item $(V,E)$ is a directed acyclic graph (DAG) on vertices $V$,
%     \item the indegree of each node is either $2$ or $0$,
%     \item each $v$ has an associated string $s_v \in \Sigma^{B}$ (for some universal alphabet $\Sigma$ and index set $B$).
%     We refer to elements of $B$ as \textbf{blocks}, and $s_v$ as the \textbf{symbols of $v$}.
% \end{enumerate}
% The collection of nodes of indegree zero are referred to as the \textbf{founders}, and the collection of nodes of outdegree zero are referred to as the \textbf{extant}.
% \end{definition}

In this work, we focus on a special family of complete pedigrees that are both \textit{graded} and \textit{monogamous}.

\begin{definition}
$\cP$ is said to be \textbf{graded} if the vertices $V(\cP)$ can be partitioned into $\bigcup_{i=0}^{T} V_i(\cP)$ such that $V_T(\cP)$ are the founders, $V_0(\cP)$ are the extant, and all directed paths $e_T, \ldots, e_1$ from $V_T(\cP)$ to $V_0(\cP)$ can be written as a sequence of edges $e_t = (v_t \to v_{t-1})$ where $v_t \in V_t(\cP)$ and $v_{t-1} \in V_{t-1}(\cP)$ for each $t$.
The founders' index $T$ is the \textbf{depth} of the pedigree.

$\cP$ is said to be \textbf{monogamous} if for every vertex $u$ of outdegree $> 0$, there exists a unique vertex $u'$ such that $(u \to v) \in E \iff (u' \to v) \in E$.
The unordered pair $\{u,u'\}$ is referred to as a \textbf{couple}. 
\end{definition}
We assume that every non-extant individual in the pedigree is in a couple, and so the number of vertices at each non-extant level is even. This assumption is effectively without loss of generality---if an individual is not in a couple, then it has no descendants, and so we cannot recover information about this individual or even its existence.

An example of a complete, graded, monogamous pedigree is shown in \cref{fig:example-1a}.
In our model, symbols are passed down from parents to children in a completely symmetric way. 
%\ynote{Maybe re-word this. Should communicate how these blocks (and the pedigree) have no notion of "gender", so the sex does not influence inheritance.}
Thus, given the data of the children, it is impossible to distinguish the owner of each symbol from amongst the two parents.
The goal of this paper is to show how one can provably infer the structure of a complete pedigree from extant genetic data via the reconstruction of the ancestral symbols, modulo \textit{block phasing} (determining which symbol belongs to which parent for each block).
Therefore, we introduce the following version of a pedigree which condenses this information.

% \begin{definition}
% \pnote{This is the old definition which includes genetic information}
% \label{def:coupled-pedigree}
% A \textbf{coupled} pedigree $\cQ = (V_{\cQ},E_{\cQ}, t)$ induced by a complete, monogamous pedigree $\cP = (V_{\cP}, E_{\cP}, s)$ is defined as follows:
% \begin{itemize}
% \item $V_{\cQ} \subset \binom{V_{\cP}}{2}$ is obtained by merging couples $c = \{u, u'\} \subset V_{\cP}$ into a single node (extant individuals become singletons), introducing edge multiplicity.
% \item $E_{\cQ}$ is the result of halving the number of resulting copies of each edge after merging couples.
% \item Each merged node $c$ has data $t_{c} \in \binom{\Sigma}{2}^{B}$, so that for each block $b \in B$, $t_c[b] = \{ s_{u}[b], s_{u'}[b] \}$ is the pair of symbols at block $b$ if $c$ was a couple (extant singletons are given singleton data for each block).
% \end{itemize}
% \end{definition}

\begin{definition}
\label{def:coupled-pedigree}
A \textbf{coupled} pedigree $\cQ = (V_{\cQ},E_{\cQ})$ induced by a complete, monogamous pedigree $\cP = (V_{\cP}, E_{\cP})$ is defined as follows:
\begin{itemize}
\item $V_{\cQ} \subset \binom{V_{\cP}}{2}$ is obtained by merging couples $c = \{u, u'\} \subset V_{\cP}$ into a single node (extant individuals remain singletons), introducing edge multiplicity.
\item $E_{\cQ}$ is the result of halving the number of resulting copies of each edge after merging couples.
\end{itemize}
\end{definition}

\begin{figure}%[!ht]
    \centering
    \subfigure[coupled version of \ref{fig:example-1a}]{
    \label{fig:example-3a-couples}
    \begin{tikzpicture}[scale=0.8,
        > = stealth,
        auto,
        node distance = 3cm, 
        semithick  
    ]
    
	\tikzset{rect state/.style={draw,rectangle,thick,minimum size=4mm}}
	\tikzset{circ state/.style={draw,circle,thick,minimum size=8mm}}
	\tikzset{dashed state/.style={draw,circle,dashed,thick,minimum size=4mm}}
	
	\node[circ state] (1A) at (-1.675,0) {$\boxed{k}$};
	\node[circ state] (1B) at (1.675,0) {$\boxed{\ell}$};
	
	\node[circ state] (2AB) at (-2,2) {$\boxed{\{g,h\}}$};
	\node[circ state] (2CD) at (2,2) {$\boxed{\{i,j\}}$};
	
	\node[circ state] (3AB) at (-4,4) {$\boxed{\{a,b\}}$};
	\node[circ state] (3CD) at (0.0,4) {$\boxed{\{c,d\}}$};
	\node[circ state] (3EF) at (4,4) {$\boxed{\{e,f\}}$};
	
	\path[->] (3AB) edge (2AB);
	\path[->] (3CD) edge (2AB);
	\path[->] (3CD) edge (2CD);
	\path[->] (3EF) edge (2CD);
	\path[->] (2AB) edge (1A);
	\path[->] (2CD) edge (1B);
    \end{tikzpicture}
    }
    ~
    \subfigure[coupled version of \ref{fig:example-1b}]{
    \label{fig:example-3b-couples}
    \begin{tikzpicture}[scale=0.8,
        > = stealth,
        auto,
        node distance = 3cm, 
        semithick  
    ]
    
	\tikzset{rect state/.style={draw,rectangle,thick,minimum size=4mm}}
	\tikzset{circ state/.style={draw,circle,thick,minimum size=8mm}}
	\tikzset{dashed state/.style={draw,circle,dashed,thick,minimum size=4mm}}
	
	\node[circ state] (1A) at (-1.675,0) {$\boxed{k}$};
	\node[circ state] (1B) at (1.675,0) {$\boxed{\ell}$};
	
	\node[circ state] (2AB) at (-2.5,2) {$\boxed{\{g,h\}}$};
	\node[circ state] (2CD) at (2.5,2) {$\boxed{\{i,j\}}$};
	
	\node[circ state] (3AB) at (-3,4.5) {$\boxed{\{a,b\}}$};
	\node[circ state] (3CD) at (3,4.5) {$\boxed{\{c,d\}}$};
	
	\path[->] (3AB) edge (2AB);
	\path[->] (3AB) edge (2CD);
	\path[->] (3CD) edge[line width=5pt,draw=white] (2AB);
	\path[->] (3CD) edge (2AB);
	\path[->] (3CD) edge (2CD);
	\path[->] (2AB) edge (1A);
	\path[->] (2CD) edge (1B);
    \end{tikzpicture}
    }
    \caption{
    \ref{fig:example-1a} induces coupled pedigree \ref{fig:example-3a-couples}, while \ref{fig:example-1b} induces \ref{fig:example-3b-couples}.
    }
    \label{fig:example-3}
\end{figure}
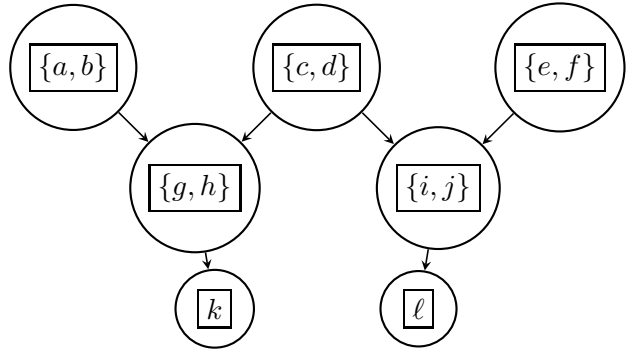
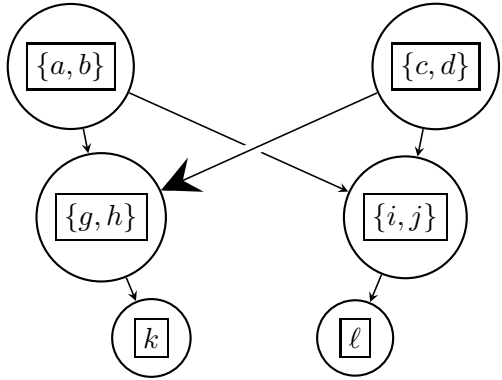

% \ynote{how to make it clear that we only care about this version???? One option is to only use this definition and throw out the first definition. (this has pros/cons which we might want to think about)}\gnote{I think you've done a good job with block phasing. But yeah let's talk about it.} \pnote{What are the cons of only working with only couples pedigrees?} 
In particular, a coupled pedigree is also a pedigree. Examples are drawn in \cref{fig:example-3} in relation to \cref{fig:example-1}, where the complete pedigree~\ref{fig:example-1a} induces a coupled pedigree~\ref{fig:example-3a-couples} and \ref{fig:example-1b} induces \ref{fig:example-3b-couples}.

The only information that is lost after transforming a complete, monogamous pedigree into a coupled pedigree is the block phasing. 
Indeed, observe that given the coupled structure $\cQ = (V_{\cQ}, E_{\cQ})$, one can easily obtain the individual structure $\cP = (V_{\cP}, E_{\cP})$ up to block phasing as follows: 
(1) add the extant individuals in $V_0 \subset V_{\cQ}$ to $V_{\cP}$, 
(2) for every non-extant node $c \in V_{\cQ}$ add individuals $u_c, u_c'$ to $V_{\cP}$, and 
(3) given parents $c_1$ and $c_2$ of $c$ in $\cQ$, add the four edges $u_{c_1} \to u_c, u_{c_1}' \to u_c, u_{c_2} \to u_c', u_{c_2}' \to u_c'$ to $E_{\cP}$. 
In addition, if $\cP$ is graded, $\cQ$ retains a graded structure $V_{\cQ} = V_0(\cQ) \cup \cdots \cup V_T(\cQ)$ so that $V_0(\cQ)$ are the extant nodes and $V_1(\cQ), \ldots, V_T(\cQ)$ are depth-graded couple nodes. In particular, the graph structure of an individuals pedigree $\cP$ uniquely determines the graph structure of its associated coupled pedigree $\cQ$ and vice versa.

Given the previous discussion, since our goal is to recover the graph structure of an underlying true pedigree $\cP$ given gene sequences of a large number of extant individuals, it suffices to reconstruct the associated coupled pedigree $\cQ$. 

% replace each directed edge $\tilde{c} \to c \in E_{\cQ}$ with edges $u_{\tilde{c}} \to u_c$ and $u_{\tilde{c}}' \to u_c$. % by iteratively (starting at the extant layer $V_0$ and working the way up) creating two nodes $\hat{u}, \hat{u}'$ for each couple $c$ and constructing directed edges .
% \gnote{Question: How does this distinguish between three couples that have three mutual siblings, or three couples where man 1 is brothers with man 2, and woman 2 is sisters with woman 3? These scenarios entail different individual pedigrees, even disregarding block phasing - which we do handle in the algorithm. Does that make sense?} \pnote{I don't understand the example, will need a picture. I think you can uniquely reconstruct the indiv. pedigree graph structure (up to isomorphism, forgetting about blocks) from the couples pedigree as described above. At least I don't see a counterexample.}

Furthermore, since the graph underlying a pedigree is a DAG, given a subset $S$ of the pedigree, it is natural to consider the notion of ``ancestors'' (nodes $\mathsf{anc}(S)$ \textbf{from} which there is a directed path to $S$) and ``descendants'' (nodes $\mathsf{desc}(S)$ \textbf{to} which there is a directed path from $S$). Also for simplicity, we stipulate that every node $v$ is both a descendant and an ancestor of itself, \textit{i.e.}, $v \in \mathsf{anc}(v)$ and $v \in \mathsf{desc}(v)$.
Since the indegree of each node can be more than one, it is possible for two nodes to have more than one ``lowest common ancestor''. We define this now.

\begin{definition}[Lowest Common Ancestors]
\label{def:LCA}
Let $S$ denote a set of nodes in a pedigree $\cP$. 
The set of \textbf{lowest common ancestors} of $S$, denoted $\mathsf{LCA}(S)$, consists of all nodes $u \in \cP$ such that $u$ is an ancestor of every node in $S$, and moreover, no descendant of $u$ is an ancestor of every node in $S$. 
\end{definition}

During our analysis, we often restrict our attention to the information that the pedigree contains about the ancestors or descendants of a particular collection of nodes.
In particular, we want to exploit (sub)structures that are not too intertwined. The following definitions make these ideas precise:

\begin{definition}[Subpedigrees]
Let $W \subset V_{\cP}$ denote a subset of nodes of pedigree $\cP$.
The subgraph of $(V_{\cP}, E_{\cP})$ induced by $W$
% \pnote{Below is commented out, includes genetic info}
% , together with the corresponding genetic information $\{ s_w\}_{w \in W}$ 
is itself a pedigree, which we call the \textbf{subpedigree} of $\cP$ \textbf{induced by $W$}.
\end{definition}

\begin{definition}[Ancestral pedigrees]
Let $W_k \subset V_k(\cP)$ denote a subset of vertices at level $k$ of a graded pedigree $\cP$.
The subpedigree induced by $W_k \cup \mathsf{anc}(W_k)$ is the (level $k$) \textbf{ancestral subpedigree} of $\cP$ induced by $W_k$.
\end{definition}

\begin{definition}[Descendant pedigrees]
Let $W_k \subset V_k(\cP)$ denote a subset of vertices at level $k$ of a graded pedigree $\cP$.
The subpedigree induced by $W_k \cup \mathsf{desc}(W_k)$ is the (level $k$) \textbf{descendant subpedigree} of $\cP$ induced by $W_k$.
\end{definition}

\begin{definition}[Tree pedigrees] \label{def:tree_pedigree}
A pedigree $\cP$ that has no undirected cycles (when the directions of the edges in $E_{\cP}$ are ignored) is called a \textbf{tree} pedigree.
\end{definition}

Note that coupled pedigrees can have edges of multiplicity two, though only in the case where two siblings form a coupled node, which a rare structure in our generative model. In coupled pedigrees, we consider a double edge to be an undirected cycle of length two. Hence, a tree pedigree consists entirely of simple or multiplicity 1 edges.

As we demonstrate (\textit{e.g.}~\cref{lem:nonsiblings_symbols}), coupled tree pedigrees exhibit a type of correlation decay between blocks that enable us to perform inference on the structure.
In contrast, non-tree coupled pedigrees correspond to pedigrees with inbreeding, which can arise in nature and appear in our probabilistic model as well. \cref{sec:examples} illustrates examples of such structures. These types of structures introduce challenges for performing inference under our generative model.

\subsection{Siblings in a pedigree} \label{sec:siblings}

Note that siblinghood is a transitive relationship: if $u,v$ are siblings and $v,w$ are siblings, then so are $u,w$.
As alluded to in \cref{sec:triples}, it is important to look at these relationships in \textit{triplets}. We now detail how one can encode this information as a \textit{3-uniform hypergraph}.

\begin{definition}
A \textbf{3-uniform hypergraph} is a pair $(V,E)$ of vertices and a multiset of edges, so that each edge is an unordered triple $\{u,v,w\}$ of vertices in $V$.
\end{definition}

\begin{definition}
Let $\cP$ be a \underline{coupled pedigree} of depth $T$ (each non-extant node is a set of a pair of individuals).
The \textbf{siblinghood hypergraph} $G_{k}$ of $\cP$ at level $k > 0$ is the 3-uniform hypergraph that describes the three-way sibling relationships of its level-$k$ members.
For every triple $e = \{c_1,c_2,c_3\}$, the edge multiplicity $n(e; G_k)$ is
\[
    n(e; G_k) = \begin{cases}
        0 & \text{if } \not\exists \; (u_1,u_2,u_3) \in c_1 \times c_2 \times c_3 \text{ such that $u_1, u_2, u_3$ are siblings} \\
        1 & \textit{if } \exists \text{ unique } (u_1,u_2,u_3) \in c_1 \times c_2 \times c_3 \text{ such that $u_1, u_2, u_3$ are siblings} \\
        2 & \text{else}
    \end{cases}
\]
The siblinghood hypergraph $G_0$ is defined similarly, by considering each extant individual $u$ as a degenerate (cardinality 1) couple $c_u = \{u\}$ and applying the above definition (Each hyperedge appears zero or once, never twice).
\end{definition}

Recall that a \textbf{clique} in a 3-uniform hypergraph is a collection of vertices such that all possible triplets form an edge.
The next statement is an observation that follows from the definition of $G_k$ and the transitivity of siblinghood.
\begin{proposition}
\label{prop:sibling_container}
If $c_1, \ldots, c_{m}$ are level-$k$ couples that respectively contain individuals $u_1, \ldots, u_m$ which are siblings, then $c_1, \ldots, c_m$ form a clique in $G_k$.
\end{proposition}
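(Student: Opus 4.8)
The plan is to prove the statement directly from the definitions, since the claim is really a translation of the transitivity of siblinghood into the language of the hypergraph $G_k$. Recall that $c_1, \ldots, c_m$ form a clique in the $3$-uniform hypergraph $G_k$ precisely when every triple $\{c_i, c_j, c_l\}$ of distinct couples among them is an edge, that is, satisfies $n(\{c_i, c_j, c_l\}; G_k) \geq 1$. So it suffices to fix an arbitrary such triple and exhibit a single witnessing triple of siblings, one individual drawn from each of $c_i, c_j, c_l$.

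The key step is to produce this witness using the hypothesis together with transitivity. By assumption the individuals $u_1, \ldots, u_m$ are mutually siblings, and $u_i \in c_i$ for each $i$. Since siblinghood is transitive, any subcollection of mutual siblings again consists of mutual siblings, so the three individuals $u_i, u_j, u_l$ are themselves siblings. Thus $(u_i, u_j, u_l) \in c_i \times c_j \times c_l$ is exactly the kind of triple demanded by the definition of $G_k$ to force $n(\{c_i, c_j, c_l\}; G_k) \geq 1$, whence $\{c_i, c_j, c_l\}$ is an edge. As the triple was arbitrary, every triple among $c_1, \ldots, c_m$ is an edge, and the couples form a clique.

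There is essentially no analytic obstacle here; the only points needing a moment of care are bookkeeping and definition-matching. First, one should check that the $c_i$ are genuinely distinct vertices of $G_k$, so that the triples considered are honest $3$-element subsets; this follows because distinct siblings lie in distinct couples, the lone exception being the degenerate \emph{siblings-marrying-siblings} structure that would collapse two of the $c_i$ into one coupled node, in which case the clique is simply formed on the remaining distinct vertices. Second, the argument is uniform in $k$: for $k > 0$ each $c_i$ is a two-element couple, while for $k = 0$ each $c_i = \{u_i\}$ is a singleton, but in both regimes an edge of $G_k$ requires only the \emph{existence} of a sibling triple with one coordinate in each $c_i$, which the witness $(u_i, u_j, u_l)$ supplies. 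I would therefore present this as a one-paragraph verification immediately after recalling the meaning of \emph{clique} and of $n(e; G_k) \geq 1$, emphasizing that the content of the proposition is nothing more than transitivity rephrased hypergraph-theoretically.
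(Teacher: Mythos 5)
Your proof is correct and is precisely the argument the paper has in mind: the paper states this proposition as an observation ``that follows from the definition of $G_k$ and the transitivity of siblinghood,'' which is exactly your one-paragraph verification spelled out. Your extra care about distinctness of the coupled nodes (the siblings-marrying-siblings degeneracy) and the uniform treatment of $k=0$ versus $k>0$ goes slightly beyond what the paper records, and is harmless.
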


% \ynote{TODO: Either here or in a later section, demonstrate/explain that "pairwise" sibling relationships are not sufficient. Cite that section here to hint at the fact that 3-way relationships are necessary for analysis.}
% -- now addressed in Section 2.3 "Motivation for using triples".

\medskip 

\subsection{Probability Tools}
We denote a Poisson distribution with mean $\lambda$ as $\poisson(\lambda)$.  We use some basic tools from probability theory in our proof. 
The first is referred to in literature as \emph{Poisson thinning}, see \textit{e.g.} \cite{thinning}.
\begin{proposition}[Poisson Thinning]
\label{prop:pois-thin}
Let $N \sim \poisson(\lambda)$, and let $X_1, X_2, \ldots$ be iid Ber($p$) random variables that are independent of $N$. Then $X = \sum_{i=1}^{N} X_i$ is $\poisson(\lambda p)$-distributed.
\end{proposition}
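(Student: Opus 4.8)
The plan is to verify directly that $X$ has the probability mass function of a $\poisson(\lambda p)$ random variable, by conditioning on the value of $N$ and summing. First I would condition on the event $N = n$: since $X_1, X_2, \ldots$ are iid $\Bernoulli(p)$ and independent of $N$, the conditional law of $X = \sum_{i=1}^{N} X_i$ given $N = n$ is exactly $\binomial(n, p)$. Hence for each integer $k \ge 0$,
\[
\Pr(X = k) = \sum_{n=k}^{\infty} \Pr(N = n)\, \binom{n}{k} p^k (1-p)^{n-k} = \sum_{n=k}^{\infty} e^{-\lambda} \frac{\lambda^n}{n!} \binom{n}{k} p^k (1-p)^{n-k},
\]
where the sum may be taken to start at $n = k$ because the binomial coefficient vanishes for $n < k$.

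The key step is then to simplify this sum into recognizable Poisson form. Writing $\binom{n}{k} = n!/(k!(n-k)!)$ and cancelling the $n!$ against the $n!$ in the Poisson weight, then reindexing by $m = n - k$, the sum factors as
\[
\Pr(X = k) = e^{-\lambda} \frac{(\lambda p)^k}{k!} \sum_{m=0}^{\infty} \frac{(\lambda(1-p))^m}{m!} = e^{-\lambda} \frac{(\lambda p)^k}{k!}\, e^{\lambda(1-p)} = e^{-\lambda p} \frac{(\lambda p)^k}{k!},
\]
which is precisely the mass function of $\poisson(\lambda p)$, completing the argument. The only analytic input is the exponential series $\sum_{m \ge 0} x^m/m! = e^x$ applied with $x = \lambda(1-p)$, which also justifies the interchange of summation implicit in the factorization since every term is nonnegative.

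Alternatively, one could argue via probability generating functions, avoiding the explicit series manipulation. The generating function of a $\Bernoulli(p)$ variable is $G_{X_1}(s) = 1 + p(s-1)$, and that of $N \sim \poisson(\lambda)$ is $G_N(s) = e^{\lambda(s-1)}$. Applying the standard composition rule for the generating function of a random sum of iid terms, $G_X(s) = G_N(G_{X_1}(s)) = e^{\lambda(G_{X_1}(s) - 1)} = e^{\lambda p(s-1)}$, which is the generating function of $\poisson(\lambda p)$; uniqueness of generating functions then identifies the law of $X$. There is no genuine obstacle along either route, as this is a classical identity; the only points requiring a moment's care are confirming that the $n < k$ terms contribute zero and that the implicit rearrangement of nonnegative terms is legitimate.
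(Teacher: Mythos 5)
Your proof is correct. Note that the paper does not actually prove this proposition at all: it states Poisson thinning as a known fact and cites the literature for it, so there is no internal argument to compare against. Your direct computation --- conditioning on $N = n$ to get a $\binomial(n,p)$ law, summing against the Poisson weights, and collapsing the series $\sum_{m \ge 0} (\lambda(1-p))^m/m! = e^{\lambda(1-p)}$ --- is the standard textbook derivation and is complete; the reindexing and term-by-term manipulations are justified exactly as you say, by nonnegativity. The alternative via probability generating functions, $G_X(s) = G_N(G_{X_1}(s)) = e^{\lambda p (s-1)}$, is equally valid and arguably cleaner, since uniqueness of generating functions on $[0,1]$ immediately pins down the distribution. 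Either route would serve as a self-contained proof where the paper instead defers to a reference.
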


Second, we recall that sums of Poisson random variables are themselves Poissons:
\begin{proposition} \label{prop:pois-add}
Fix $N > 0$ and let $X_1, X_2, \ldots, X_N$ be iid $\poisson(\lambda)$ random variables. Then $X = \sum_{i=1}^{N} X_i$ is $\poisson(\lambda N)$-distributed.
\end{proposition}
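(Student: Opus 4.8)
The plan is to use the standard generating-function approach, which reduces the claim to a one-line computation together with a uniqueness theorem. Recall that a $\poisson(\mu)$ random variable $Y$ has probability generating function $\E[z^Y] = e^{\mu(z-1)}$, obtained immediately from the Taylor series $\sum_{k \ge 0} z^k e^{-\mu}\mu^k/k! = e^{-\mu}e^{\mu z}$. First I would record this fact for each summand, so that $\E[z^{X_i}] = e^{\lambda(z-1)}$ for every $i$.

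Next I would exploit independence: since $X_1, \ldots, X_N$ are independent, the generating function of the sum factors as a product,
\[
    \E[z^{X}] = \E\!\left[z^{\sum_{i=1}^N X_i}\right] = \prod_{i=1}^N \E[z^{X_i}] = \left(e^{\lambda(z-1)}\right)^N = e^{N\lambda(z-1)}.
\]
This is exactly the generating function of a $\poisson(N\lambda)$ random variable. Since the probability generating function of a nonnegative integer-valued random variable determines its distribution uniquely (the pmf is recovered from the coefficients of its power series), I conclude that $X \sim \poisson(N\lambda)$, as claimed.

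As an alternative that avoids citing a transform-uniqueness theorem, I could argue by induction on $N$ via direct convolution. The base case $N=2$ amounts to computing $\Pr(X_1 + X_2 = k) = \sum_{j=0}^k \Pr(X_1 = j)\Pr(X_2 = k-j)$ and recognizing, after factoring out $e^{-2\lambda}/k!$, that the remaining sum $\sum_{j=0}^k \binom{k}{j}\lambda^k$ collapses by the binomial theorem to $(2\lambda)^k$, which yields the $\poisson(2\lambda)$ pmf; the inductive step then combines $X_1 + \cdots + X_{N-1} \sim \poisson((N-1)\lambda)$ with the independent summand $X_N$ using the same base-case identity.

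There is essentially no serious obstacle here, since the result is classical; the only place requiring care is the justification step. In the generating-function route this is the appeal to uniqueness of the pgf (equivalently the mgf or characteristic function), while in the convolution route it is the combinatorial manipulation in the base case, namely recognizing and evaluating the binomial sum. Either way the argument is routine once independence is used to factor the transform or convolve the pmfs.
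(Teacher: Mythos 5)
Your proof is correct in both of its variants: the probability generating function computation $\E[z^X] = e^{N\lambda(z-1)}$ together with pgf uniqueness is sound, and the convolution induction (with the base case reducing to $\sum_{j=0}^k \binom{k}{j} = 2^k$) is also valid. For comparison, the paper offers no proof of \cref{prop:pois-add} at all — it is recalled as a classical fact about Poisson distributions and used as a black box — so there is nothing to contrast your argument against; either of your two routes would serve as a complete justification.
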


Third, we will invoke the following standard variants of the Chernoff--Hoeffding bounds for sums of Bernoulli random variables:
\begin{theorem}[Bernoulli tail probability (Chernoff--Hoeffding Bounds)]
Let $X = \sum_{i=1}^{N} X_i$ be a sum of independent Bernoulli$(p_i)$ random variables.
Let $\mu = \E[X]$.
Then
\begin{align*}
    \Pr(X > (1+\delta)\mu) &\leq \exp \left(-\frac{\delta^2}{2+\delta}\mu \right) \\
    \Pr(X < (1-\delta)\mu) &\leq \exp \left(-\frac{\delta^2}{2} \mu\right) \\
    \Pr( |X - \mu| > \gamma N ) &\leq 2\exp( - 2 N \gamma^2 ). 
\end{align*}
\end{theorem}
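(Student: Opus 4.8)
The plan is to derive the first two inequalities by the exponential moment (Chernoff) method and the third by Hoeffding's inequality; all three are classical, so the substance of the proof lies in the scalar analytic estimates that turn the raw exponential bounds into the clean stated forms. I would begin by recording the generic moment generating function estimate: for any real $t$, independence gives $\E[e^{tX}] = \prod_{i=1}^N \E[e^{tX_i}] = \prod_{i=1}^N \bigl(1 + p_i(e^t-1)\bigr)$, and applying $1+x \le e^x$ to each factor yields $\E[e^{tX}] \le \exp\bigl(\mu(e^t-1)\bigr)$, where $\mu = \sum_{i=1}^N p_i = \E[X]$.

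For the upper tail, Markov's inequality applied to $e^{tX}$ with $t>0$ gives
\[
    \Pr\bigl(X > (1+\delta)\mu\bigr) \le e^{-t(1+\delta)\mu}\E[e^{tX}] \le \exp\Bigl(\mu\bigl(e^t - 1 - t(1+\delta)\bigr)\Bigr).
\]
Choosing the minimizing value $t = \ln(1+\delta)$ produces the sharp form $\bigl(e^\delta(1+\delta)^{-(1+\delta)}\bigr)^\mu$, and the stated bound $\exp\bigl(-\tfrac{\delta^2}{2+\delta}\mu\bigr)$ then follows from the elementary inequality $(1+\delta)\ln(1+\delta) - \delta \ge \tfrac{\delta^2}{2+\delta}$, valid for all $\delta \ge 0$. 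I would verify this last inequality by noting that both sides vanish at $\delta = 0$ and comparing derivatives (equivalently, clearing denominators and checking the sign of a single-variable function on $[0,\infty)$).

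The lower tail is completely analogous, now taking $t<0$: the same moment generating function estimate gives $\Pr\bigl(X < (1-\delta)\mu\bigr) \le \bigl(e^{-\delta}(1-\delta)^{-(1-\delta)}\bigr)^\mu$ after optimizing, and the target $\exp\bigl(-\tfrac{\delta^2}{2}\mu\bigr)$ reduces to the scalar inequality $\delta + (1-\delta)\ln(1-\delta) \ge \tfrac{\delta^2}{2}$ for $\delta \in [0,1)$, which I would establish by the same two-derivative argument, since its value and first derivative both vanish at $\delta = 0$ while its second derivative is nonnegative there. For the third, two-sided bound I would invoke Hoeffding's inequality: since each $X_i$ takes values in $[0,1]$, centering $Y_i = X_i - p_i$ and applying Hoeffding's lemma---a mean-zero variable supported on an interval of length $1$ satisfies $\E[e^{sY_i}] \le \exp(s^2/8)$---gives $\E[e^{s(X-\mu)}] \le \exp(s^2 N/8)$; a symmetric Chernoff argument with the optimal $s = 4\gamma$ then yields $\Pr(|X-\mu| > \gamma N) \le 2\exp(-2N\gamma^2)$. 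Hoeffding's lemma itself I would prove by bounding $e^{sy}$ on the support interval by its chord, taking expectations, and controlling the resulting log-moment generating function by a second-order Taylor estimate.

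Since these are standard bounds, there is no conceptual obstacle; the only genuinely delicate steps are the two scalar calculus inequalities $(1+\delta)\ln(1+\delta)-\delta \ge \tfrac{\delta^2}{2+\delta}$ and $\delta + (1-\delta)\ln(1-\delta) \ge \tfrac{\delta^2}{2}$, together with Hoeffding's lemma, all of which are single-variable estimates handled by elementary monotonicity and Taylor arguments. I expect the bookkeeping of signs in the lower-tail optimization (where $t<0$) to be the most error-prone part, so I would carry out that optimization explicitly rather than by appeal to symmetry.
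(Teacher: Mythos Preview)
Your proof is correct and is the standard textbook derivation. Note, however, that the paper does not actually prove this theorem: it is stated without proof in the ``Probability Tools'' subsection as a standard fact to be invoked later, alongside Poisson thinning and the Poisson tail bound (the latter with a reference to Pollard). So there is no ``paper's own proof'' to compare against; your argument simply supplies what the paper omits.
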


Lastly and in the same spirit as the Chernoff--Hoeffding bound, we will use the fact that Poisson distributions also have sub-exponential tails.
\begin{proposition}[Poisson tail probability] \label{prop:poisson-tail}
Let $X \sim \poisson(\lambda)$. Then for any $x > 0$, we have
\[ \Pr(|X - \lambda| \geq x) \leq 2 \exp\left( -\frac{x^2}{2(\lambda + x)} \right) \]
\end{proposition}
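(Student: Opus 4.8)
The plan is to apply the standard Chernoff (exponential moment) method, handling the upper and lower deviations separately and combining them by a union bound; the latter is precisely what accounts for the factor of $2$ in the claimed estimate. The only probabilistic input needed is the Poisson moment generating function $\E[e^{tX}] = \exp(\lambda(e^t - 1))$, valid for all real $t$, after which the argument reduces to optimizing convex exponents and verifying two elementary inequalities.

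For the upper tail I would fix $t > 0$ and apply Markov's inequality to $e^{tX}$, obtaining
\[
    \Pr(X \geq \lambda + x) \leq e^{-t(\lambda + x)} \E[e^{tX}] = \exp\left( \lambda(e^t - 1) - t(\lambda + x) \right).
\]
Minimizing the exponent over $t > 0$ gives the minimizer $t = \log(1 + x/\lambda)$, and substituting yields the bound $\exp(-\lambda \, h(x/\lambda))$, where $h(u) = (1+u)\log(1+u) - u$. It then remains only to establish the elementary inequality $h(u) \geq \frac{u^2}{2(1+u)}$ for all $u \geq 0$, since with $u = x/\lambda$ this rearranges exactly to $\lambda \, h(x/\lambda) \geq \frac{x^2}{2(\lambda + x)}$, delivering the upper-tail half of the proposition.

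For the lower tail I would instead take $t < 0$. When $x \geq \lambda$ the event $\{X \leq \lambda - x\}$ is empty (as $X \geq 0$), so the bound is trivial; when $0 < x < \lambda$, optimizing the analogous exponent over $t < 0$ produces the even stronger bound $\exp(-x^2/(2\lambda))$, which dominates $\exp(-x^2/(2(\lambda + x)))$ because $\lambda \leq \lambda + x$. Summing the two one-sided estimates and using $\{|X - \lambda| \geq x\} = \{X \geq \lambda + x\} \cup \{X \leq \lambda - x\}$ produces the factor of $2$ and finishes the proof.

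The main obstacle, and really the only nonroutine step, is verifying the elementary inequality $h(u) \geq \frac{u^2}{2(1+u)}$ behind the upper tail. I would set $g(u) = h(u) - \frac{u^2}{2(1+u)}$ and check that $g(0) = g'(0) = 0$ while $g''(u) = \frac{u(u+2)}{(1+u)^3} \geq 0$ for $u \geq 0$; integrating twice then gives $g(u) \geq 0$. The auxiliary lower-tail inequality $v + (1-v)\log(1-v) \geq v^2/2$ for $v \in [0,1)$ admits an identical treatment, with second derivative $\frac{v}{1-v} \geq 0$. Everything else, namely the two optimizations, is a direct computation.
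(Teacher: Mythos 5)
The paper offers no proof of this proposition at all: it simply cites Chapter 2 of \cite{pollard2015mini}. Your argument therefore fills in what the paper outsources, and it is correct; it is also the standard route such references take. The Chernoff computation is right: the upper-tail optimizer $t = \log(1+x/\lambda)$ gives the Bennett-type exponent $-\lambda h(x/\lambda)$ with $h(u) = (1+u)\log(1+u) - u$, your comparison inequality $h(u) \geq \frac{u^2}{2(1+u)}$ converts it exactly to $\exp\bigl(-\frac{x^2}{2(\lambda+x)}\bigr)$, and your second-derivative verifications ($g''(u) = \frac{u(u+2)}{(1+u)^3}$ and $q''(v) = \frac{v}{1-v}$) both check out. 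The lower-tail bound $\exp(-x^2/(2\lambda))$ indeed dominates the claimed bound since $\lambda \leq \lambda + x$, and the union bound supplies the factor $2$.

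One small slip worth fixing: for $x = \lambda$ the event $\{X \leq \lambda - x\} = \{X = 0\}$ is \emph{not} empty; it has probability $e^{-\lambda}$. Your ``empty event'' remark is only valid for $x > \lambda$. This costs nothing, since at $x = \lambda$ one has $\Pr(X = 0) = e^{-\lambda} \leq e^{-\lambda/4} = \exp\bigl(-\frac{x^2}{2(\lambda+x)}\bigr)$; alternatively, your Chernoff exponent for the lower tail extends continuously to $v = x/\lambda = 1$ (take $t \to -\infty$), yielding exactly $e^{-\lambda}$. Either observation closes the case, but as written the case split should read $x > \lambda$ (trivial), $x = \lambda$ (direct check), and $0 < x < \lambda$ (optimization).
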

For a proof, refer to Chapter 2 of \cite{pollard2015mini}.

\section{Structure of Poisson Pedigrees} \label{sec:structure}
\subsection{Model Description}
\label{sec:model}
We now describe our simple model for generating a population and its genetic data. The model is best viewed in two stages. In the first stage, we generate the population as well as the pedigree topology $\cP_{\mathsf{indiv}}$ on these individuals, and in the second stage, we generate the genetic data given this pedigree structure. Note that the random individual pedigree $\cP_{\mathsf{indiv}}$ constructed below is \textbf{graded, monogamous,} and \textbf{complete}.

\medskip

\noindent \textbf{Part I: Pedigree topology}
\begin{enumerate}
\item To generate $\cP_{\mathsf{indiv}}$, start with $N_T = N$ founding individuals in $V_T$ and make an arbitrary maximum matching of these individuals to create a set of mated couples. For each couple, generate an independent $\text{Pois}(\alpha)$ number of children, where $\alpha > 0$ is a fixed parameter throughout the entire pedigree. These newly generated individuals form the nodes in $V_{T-1}$.
\item Repeat the above process to generate the individuals in $V_{T-2},\dots,V_{0}$.
\end{enumerate}

Once we have the population and pedigree structure as above, we generate the genetic data in the following manner. 

\medskip

\noindent \textbf{Part II: Inheritance procedure}
\begin{enumerate}
\item Each individual $u$ in $\cP_{\mathsf{indiv}}$ has a length-$B$ string $\sigma_u$ ($u$'s \textbf{gene sequence}). The string's indices are referred to as \textbf{blocks}.
\item For each founding individual $u$ in $V_T$ and for each block $b \in [B]$, each $\sigma_u(b)$ is drawn i.i.d. uniformly from an alphabet $\Sigma$.
For our model, $\Sigma$ is an infinite-sized alphabet: we simply require that each block of each founder has a unique symbol.
\item Every other individual $v$ in the population has exactly two parents $f$ and $m$.
Conditioned on $\sigma_f$ and $\sigma_m$, independently over $[B]$, the $i$th block of $v$ copies $\sigma_f(i)$ with probability $0.5$ and $\sigma_m(i)$ with probability $0.5$. 
\end{enumerate}

% This should be helpful to point out the layered structure $V = V_0 \cup V_1 \cup \cdots \cup V_T$.

%%%%%%%%%%%%%%
%% Youn: Fill this in now so that we all have the same notation. Please send people an update if you think you should change something here, for consistency.
%%%%%%%%%%%%%%

% \pnote{TO DO: check if Remark, discussion hold} 

\begin{remark}
    We adopt the following conventions in the remainder of the paper.
    
    \begin{enumerate}
    \item We let $\cP$ denote the \textbf{coupled pedigree} induced (see \cref{def:coupled-pedigree}) by the randomly generated individual pedigree $\cP_{\mathsf{indiv}}$ constructed in Part I above. 
    
    \item We use the term \textbf{coupled node}, or simply \textbf{node} when the context is clear, to refer to a vertex of $\cP$. We use the term \textbf{individual} to refer to an element of $\cP_{\mathsf{indiv}}$ contained in a coupled node of $\cP$. Unless otherwise explicitly noted, parent-child relationships are taken according to the structure of the coupled pedigree $\cP$. That is, given $u, v \in \cP$ we use the phrase, ``$u$ \textbf{is a child of} $v$,'' to mean that the couple $u$ contains an individual who is an offspring of the mated couple $v$. Finally, we say that coupled nodes $u, v \in \cP$ are \textbf{siblings} if $u$ and $v$ contain individuals who are siblings in $\cP_{\mathsf{indiv}}$. 
    
    \item $\Pr$ denotes the probability measure over the randomly generated pedigree $\cP$ as well as the random inheritance procedure. 
    % We reserve the terms \textbf{parent} and \textbf{child} to refer to the relationship between a pair of mated individuals, a couples node, and one of their offspring
    \end{enumerate}
\end{remark}

To given an example of our terminology, there are two individuals in a non-extant coupled node. Each individual is a vertex of $\cP_{\mathsf{indiv}}$, and together they form a coupled node, which is a vertex of $\cP$. Note that as an artifact of our definitions, extant individuals are both coupled nodes \textit{and} individuals in $\cP$. Moreover extant nodes have exactly one parent in $\cP$ given by the coupled node containing the individuals comprising that extant individuals biological parents, as determined by our generative model. 

To further emphasize the previous remark, recall that by the discussion in \cref{sec:definitions}, there is a unique correspondence between coupled pedigrees and individual pedigrees. Hence, it suffices to give a (partial) reconstruction $\hat \cP$ of $\cP$ to (partially) reconstruct the original individual pedigree $\cP_{\mathsf{indiv}}$. Thus the content of our main result \cref{thm:main-formal} and the remainder of this paper primarily work with the coupled pedigree $\cP$.  

% Also, unless explicitly noted otherwise, the terminolog

% Moreover, in what follows, we will be explicit on whether parent-child relationships occur in $\cP_{\mathsf{indiv}}$ or $\cP$. In the first case, the child is an individual, and its parents form a coupled node. In the second case, the child is a coupled node and both of its parents in $\cP$ are coupled nodes. 

% In what follows, we refer to parent-child explicitly as 

\paragraph{Parameters:}
For convenience, we collect the various parameters of interest here. 

\begin{center}
\begin{tabular}{ |c|l|c| } 
\hline
\multicolumn{1}{|c}{\bfseries Parameter} & \multicolumn{1}{|c}{\bfseries Description} & \multicolumn{1}{|c|}{\bfseries Value} \\
\hline
 $N$ & Size of founding population & \\ 
 \hline
 $B$ & Number of blocks for each individual & $\Theta(\log(N))$ \\  
 \hline
 $\alpha$ & Expected \# of children per couple & $\Theta(1)$\\
 \hline
 $T$ & Number of generations in population & $\eps \log(N)$, $\eps = O(1/\log(\alpha))$\\ 
 \hline
 $|\Sigma|$ & Size of block alphabet & $\infty$\\
 \hline 
\end{tabular}
\end{center}
We set $B = O(\log(N))$ for a sufficiently large constant. The expected number of children per couple, $\alpha$, will be set to a sufficiently large constant that is at least 3. Finally, the number of generations $T$ will be set to $\eps \log(N)$, where $\eps > 0$ is sufficiently small with respect to $1/\log(\alpha)$.

%\begin{itemize}
%	\item $N$ is the size of the founding population.
%    \item $B$ is the number of blocks for each individual.
%    \item $\alpha$ is the expected number of children for each married couple.
 %   In particular, note that if  $\tau$ is expected multiplicative growth rate of the population, then $\tau = \alpha/2$. We will assume throughout that $\alpha > 2$. 
%    \item $T$ is the total number of generations in the population.
%    \item $\Sigma$ is the alphabet for each block. For simplicity, we will further assume that $|\Sigma| = \infty$.
%\end{itemize}

\subsection{Concentration bounds and upper bounds on inbreeding}
In this section we quantify the degree of inbreeding in $\cP$. To do so, we first describe an alternative description of our generative model. An equivalent procedure for constructing the coupled pedigree structure $\cP$ is to (1) sample the generation sizes according to Poisson random variables with appropriate parameters, (2) pair up individuals in each generation at random into coupled nodes, and (3) have coupled nodes choose two parent coupled nodes at random from the previous generation. This is described formally below.

\begin{lemma}
\label{lem:equivalent-generation}
The (coupled) pedigree $\cP$ described in~\cref{sec:model} can be equivalently viewed as follows:
\begin{enumerate}
\item Let $N_T := N$ be the size of the founding population. For $i$ from $T$ to $1$: Let $N_i' \stackrel{def}{=} \lfloor N_i / 2 \rfloor \cdot 2$ be the number of individuals in couples, and sample $N_{i-1} \sim \poisson(\alpha N_{i}' / 2)$. 
\item For each level $i$, match the individuals at level $i$ randomly, leaving out a single individual if $N_i$ was odd. 
% Call the number of matched individuals $N_i' \stackrel{def}{=} \lfloor N_i / 2 \rfloor \cdot 2$. 
\item For each level $i$, sample a vector $\vec{v} \in [N_i'/2]^{N_{i-1}}$ from a Multinomial distribution with parameters
\[ (N_{i-1}, (2/N_i', \ldots, 2/N_i')).\]
For any $k \in [N_i' / 2]$, the set of coordinates $\{j:v_j = k\}$ are interpreted as children of the $k^{th}$ couple at level $i$ (and are therefore siblings at level $i-1$).
\item Convert the resulting pedigree on individuals from steps 1--3 to a coupled pedigree $\cP$. 
\end{enumerate}
\end{lemma}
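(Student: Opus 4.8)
The plan is to prove the equivalence of the two descriptions generation by generation, exploiting the Markov structure of the generative process: in the bottom-up model of \cref{sec:model}, the individuals at level $i-1$ together with their parentage depend only on the couples present at level $i$, so it suffices to match the one-step transition laws and then chain them by induction from $T$ down to $0$. Concretely, I would fix the coupled structure at level $i$ (the set of level-$i$ individuals and their partition into $N_i'/2$ couples) and argue that, conditionally, the number of level-$(i-1)$ individuals and the map sending each such individual to its parent couple has the same joint law under both descriptions.

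The two distributional facts driving the single-step equivalence are already essentially available. First, in the bottom-up model each of the $N_i'/2$ couples independently produces $\poisson(\alpha)$ children; by Poisson additivity (\cref{prop:pois-add}) the total $N_{i-1}$ is $\poisson(\alpha N_i'/2)$, matching step~1. Second, I would invoke the standard Poissonization fact (the multivariate counterpart of the thinning in \cref{prop:pois-thin}) that if $X_1,\dots,X_m$ are i.i.d.\ $\poisson(\alpha)$ then, conditioned on $\sum_j X_j = n$, the count vector $(X_1,\dots,X_m)$ is distributed as $\multinom(n,(1/m,\dots,1/m))$. Applying this with $m = N_i'/2$ shows that, given $N_{i-1}$, the assignment of children to parent couples is exactly the multinomial allocation of step~3 (the label vector $\vec{v} \in [N_i'/2]^{N_{i-1}}$ and the count vector are two encodings of the same multinomial object, with the exchangeable labeling of the $N_{i-1}$ children accounting for the difference). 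Together these reproduce steps~1 and~3 from the per-couple Poisson reproduction of the original model.

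The remaining and most delicate point is the matching in step~2: the original model forms couples by an \emph{arbitrary} maximum matching, whereas the alternative uses a uniformly \emph{random} one. I would resolve this via an exchangeability argument. Because the children produced at a given level carry no information distinguishing them beyond their sibling-group membership, and because the founder symbols are i.i.d.\ and inheritance is symmetric, the law of the generated pedigree is invariant under relabeling individuals within a level. Consequently, applying any fixed (``arbitrary'') matching to exchangeably-labeled individuals induces the same distribution on the resulting coupled pedigree as applying a uniformly random matching to fixed labels; the leftover individual when $N_i$ is odd is unmatched, hence childless, and is correctly accounted for by the definition $N_i' = \lfloor N_i/2\rfloor\cdot 2$. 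I expect this exchangeability step to be the main obstacle, since it requires carefully formalizing the permutation invariance of the joint law of (structure, symbols) and checking that the arbitrary matching commutes with this symmetry. Once it is in place, conditioning is legitimate at each level and the induction closes, converting the per-couple, bottom-up reproduction into the top-down sampling of steps~1--4 and finally passing to the coupled pedigree via step~4 and \cref{def:coupled-pedigree}.
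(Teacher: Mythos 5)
Your proof is correct and follows the same level-by-level plan as the paper's: both reduce to matching the one-step transition law, using Poisson additivity (\cref{prop:pois-add}) for the generation sizes and the Poisson--multinomial correspondence for the assignment of children to parent couples. The differences are in how that second fact is deployed and in how the matching step is treated, and on both counts your write-up is more complete than the paper's. The paper argues in the reverse direction via thinning (\cref{prop:pois-thin}): in the top-down model of \cref{lem:equivalent-generation}, the number of children of a fixed couple is $\binomial(N_{i-1}, 2/N_i')$ with $N_{i-1} \sim \poisson(\alpha N_i'/2)$, hence marginally $\poisson(\alpha)$. Strictly speaking this only matches per-couple marginals; full equivalence requires the joint law of the counts across all couples, which is exactly what your conditioning fact (i.i.d.\ $\poisson(\alpha)$ counts, given their sum $n$, are $\multinom(n,(1/m,\ldots,1/m))$) delivers in one step, together with your observation that the label vector $\vec{v}$ and the count vector encode the same object up to an exchangeable labeling. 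Second, the paper dispatches the matching step with a single sentence asserting the two matchings are ``the same,'' whereas you correctly flag the arbitrary-versus-random discrepancy as the delicate point and resolve it by exchangeability. This is a genuine issue: the lemma would be false if the ``arbitrary'' maximum matching of \cref{sec:model} were allowed to depend on the realized sibling structure (a matching that always pairs siblings produces double edges in the coupled pedigree deterministically, while the uniform matching does so only with small probability), so your reading that the matching is oblivious to parentage is not just convenient but necessary, and under that reading your argument --- a fixed matching composed with an exchangeable labeling has the law of a uniform matching --- is the right fix. One small trim: the founder symbols and the inheritance procedure play no role here, since \cref{lem:equivalent-generation} concerns only the pedigree topology (Part I of the model), so the invariance you need is of the graph-generation law alone, not of the joint law of structure and symbols.
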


\begin{proof}
The number of vertices at each level in the statement of~\cref{lem:equivalent-generation} is the same as the model in~\cref{sec:model}. This follows by induction. The number of founding vertices $N$ is the same in both models. In the model in~\cref{sec:model}, the number of individuals at level $i-1$ is distributed as $\sum_{j=1}^{N_i'/2} X_j$, where the $X_j$ are iid $\poisson(\alpha)$ and $N_i'$ is the number of individuals at level $i$ that are matched. The value of this sum is distributed as $\poisson(\alpha N_i'/ 2)$ (due to \cref{prop:pois-add}), the same as in the statement~\cref{lem:equivalent-generation}.

The random matching in Step 2 of~\cref{lem:equivalent-generation} is the same as the matching in~\cref{sec:model}. 

The final step in the process above assigns individuals in $V_{i-1}$ to parents in $V_i$ by sampling a vector $\vec{v}$ of length $N_{i-1}$ with entries in $[N_i'/2]$ from a multinomial distribution and assigning individuals to parents based on these labels. Indeed, if we look at the number of children of a fixed couple (say, the $j^{th}$ couple in $V_i$), this is distributed as 
$\binomial(X, 2/N_i')$, where $X \sim \poisson(\alpha N_i'/ 2)$.
By Poisson thinning (\cref{prop:pois-thin}), this distribution is simply
$ \poisson(\alpha)$,
which is exactly the distribution of the number of children of the $j^{th}$ couple in~\cref{sec:model}.
%\footnote{This fact is often referred to \emph{Poisson thinning}; see e.g.~\cite{poissonNotes} for a proof of this folklore fact.}
\end{proof}

%\gnote{I think we should have a re-definition of nodes here. Basically that by default, nodes at the extant level refer to individuals, and nodes at the higher levels refer to couples. I think we rarely need to refer to the individuals within a couple. Furthermore, the equivalent definition of the model might want to include an arbitrary ordering according to which the nodes choose parents; this can make the proofs easier.}

Next we use tail bounds on Poisson random variables to show that the sizes of each level are well-concentrated with high probability, assuming a sufficiently large size of the initial population. Recall that $N_i$ denotes the number of \textit{individuals} in generation $i$. 

\begin{lemma}[Concentration of generations]
\label{lem:concentration}
Fix $\delta$ such that $0 < \delta < \alpha/2 - 1$, and suppose that the founding population size $N$ is at least $\alpha / \delta + 1$. 
Then, for some constant $C_1 = C_1(\delta)$, with probability at least 
$1 - T \exp(-C_1 \alpha N)$
%$1 - 2T\exp(- \alpha^2 (N-1) / (4\alpha + 4\delta)),$
we have that, for all $i \in \{0, \ldots, T-1\}$
\begin{equation} \label{eq:concentration}
(\alpha/2 - \delta) N_{i+1} \leq N_i \leq (\alpha/2 + \delta)\cdot N_{i+1}.
\end{equation}
\end{lemma}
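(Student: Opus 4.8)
The plan is to condition on the size of each generation and apply the Poisson tail bound \cref{prop:poisson-tail} one level at a time. By \cref{lem:equivalent-generation}, the sequence $(N_T, N_{T-1}, \ldots, N_0)$ is a Markov chain in which, given $N_{i+1}$, the next size is $N_i \sim \poisson(\lambda_i)$ with $\lambda_i = \alpha N_{i+1}'/2$, where $N_{i+1}' = 2\lfloor N_{i+1}/2\rfloor \in \{N_{i+1}-1,\, N_{i+1}\}$. Thus the conditional mean of $N_i$ equals $\alpha N_{i+1}/2$ up to a rounding error of at most $\alpha/2$, and \eqref{eq:concentration} asks precisely that $N_i$ stay within $\delta N_{i+1}$ of $\alpha N_{i+1}/2$. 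The entire argument therefore reduces to controlling, for each $i$, the conditional failure probability $p_i(n) := \Pr(\mathcal{A}_i^c \mid N_{i+1}=n)$, where $\mathcal{A}_i$ denotes the event that \eqref{eq:concentration} holds at level $i$.

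For the upper tail, since $\lambda_i \le \alpha n/2$, the event $N_i > (\alpha/2+\delta)n$ forces a deviation of at least $\delta n$ above the mean, and \cref{prop:poisson-tail} with $x = \delta n$ (using $\lambda_i + \delta n \le (\alpha/2+\delta)n$) gives a bound of order $\exp\!\big(-\delta^2 n/(\alpha+2\delta)\big)$. The lower tail is where the rounding must be absorbed: since $\lambda_i \ge \alpha(n-1)/2$, the target $(\alpha/2-\delta)n$ sits below the mean by at least $\delta n - \alpha/2$, which is a genuine positive deviation exactly when $\delta n > \alpha/2$. This is precisely where the hypothesis $N \ge \alpha/\delta + 1$ is used: it guarantees $\delta n \ge \delta N \ge \alpha$, hence $\delta n - \alpha/2 \ge \delta n/2 > 0$, and \cref{prop:poisson-tail} again yields an exponentially small bound. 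Combining the two tails gives $p_i(n) \le 4\exp\!\big(-c(\alpha,\delta)\,n\big)$ for a positive constant $c(\alpha,\delta)$ of order $\delta^2/\alpha$; since this upper bound is decreasing in $n$, we obtain $p_i(n) \le 4\exp\!\big(-c(\alpha,\delta)N\big)$ for every $n \ge N$.

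The remaining—and main—difficulty is that $p_i(n)$ is only small when $n = N_{i+1}$ is large, whereas $N_{i+1}$ is itself random and could in principle be small if a deviation had already occurred at a higher level. I would resolve this with a first-failure (stopping-time) argument rather than a naive union bound. Let $\tau$ be the largest index $i \in \{0,\ldots,T-1\}$ at which $\mathcal{A}_i$ fails. On the event $\{\tau = i\}$ all of $\mathcal{A}_{i+1},\ldots,\mathcal{A}_{T-1}$ hold; since $\delta < \alpha/2 - 1$ forces $\alpha/2 - \delta > 1$, the lower bounds in \eqref{eq:concentration} chain together to give $N_{i+1} \ge N_{i+2} \ge \cdots \ge N_T = N$. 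Because $\{\mathcal{A}_{i+1},\ldots,\mathcal{A}_{T-1}\}$ is measurable with respect to $(N_T,\ldots,N_{i+1})$, conditioning on $N_{i+1}$ via the Markov property yields $\Pr(\tau = i) \le \sup_{n \ge N} p_i(n) \le 4\exp\!\big(-c(\alpha,\delta)N\big)$. Summing over the at most $T$ possible values of $\tau$ then bounds $\Pr\!\big(\bigcup_i \mathcal{A}_i^c\big) = \sum_i \Pr(\tau = i) \le 4T\exp\!\big(-c(\alpha,\delta)N\big)$, which gives the stated bound $1 - T\exp(-C_1 \alpha N)$ for an appropriate constant $C_1 = C_1(\delta)$ after absorbing the factor of $4$ (the precise dependence of the exponent on $\alpha$ is immaterial for the application, where $\alpha$ is a fixed constant and we only need per-level failure probability $\exp(-\Omega(N))$). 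The only genuinely delicate points are the bookkeeping of the $\lfloor\cdot\rfloor$ rounding in the lower tail—handled exactly by $N \ge \alpha/\delta + 1$—and the stopping-time argument that legitimately replaces the random $N_{i+1}$ by its worst admissible value $N$; everything else is a routine invocation of the Poisson tail bound.
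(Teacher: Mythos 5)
Your proof is correct and follows essentially the same route as the paper's: a per-level application of the Poisson tail bound (\cref{prop:poisson-tail}), conditioned on all higher generations already satisfying \cref{eq:concentration} so that $N_{i+1} \ge N$, with the rounding between $N_{i+1}'$ and $N_{i+1}$ absorbed exactly where you absorb it, via the hypothesis $N \ge \alpha/\delta + 1$; your explicit first-failure (stopping-time) decomposition is just a rigorous rendering of the paper's conditional union bound over levels. The one discrepancy is the constant: your exponent of order $\delta^2 N/\alpha$ is actually the correct one, whereas the paper's stated form $\exp(-C_1(\delta)\,\alpha N)$ arises from a slip in its own computation (it substitutes the mean $\alpha N_{i+1}'/2$ rather than the deviation $(\delta/2)N_{i+1}'$ as $x$ in \cref{prop:poisson-tail}), and, as you note, this distinction is immaterial for every use of the lemma, since $\alpha$ and $\delta$ are constants and only a per-level failure probability of $\exp(-\Omega(N))$ is ever needed.
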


\begin{remark}
An immediate corollary of this result is that 
\begin{equation} \label{eq:concentration-1}
(\alpha/2 - \delta)^i \cdot N \leq N_{T-i} \leq (\alpha/2 + \delta)^i \cdot N
\end{equation}
for each $i \leq T$ with high probability.
\end{remark}

\begin{proof}[Proof of \cref{lem:concentration}]
Our goal is to upper bound the right-hand-side of
\[ 
\Pr[\text{some $N_j$ fails~\cref{eq:concentration}}] \leq \sum_{i = 0}^{T-1} \Pr[N_{i} \, \, \text{fails~\cref{eq:concentration}} \, | \,N_{i+1} \, \, \text{satisfies~\cref{eq:concentration-1}}] 
\]
and so it suffices to show
\[ 
\Pr[N_i \, \, \text{fails~\cref{eq:concentration}} \, | \, N_{i + 1} \, \, \text{satisfies~\cref{eq:concentration-1}}] \leq 2\exp(-\Theta(\alpha^2 (N-1) / (\alpha + \delta))).
\] 

Consider fixing the number of individuals at level $i+1$ to be an arbitrary number $N_{i+1}$ satisfying~\cref{eq:concentration-1}. We know that the number of individuals at level $i$ is distributed as $N_{i} \sim \poisson(\alpha N_{i+1}' / 2)$. 
By applying the Poisson tail bound \cref{prop:poisson-tail}, we see that 

\begin{align} 
&\Pr\left[|N_{i} - \alpha N_{i+1}' / 2| > (\delta/2) N_{i+1}' \mid N_{i + 1} \, \, \text{satisfies~\cref{eq:concentration-1}} \right] \\
&< 2\exp\left( \frac{- (\alpha N_{i+1}'/2)^2}{2(\alpha/2 + \delta/2)N_{i+1}'}\right) \nonumber \\
&< 2\exp\left( -\alpha \frac{- N_{i+1}'}{4(1+\delta)}\right) 
\end{align}
We now claim that $|N_{i} - \alpha N_{i+1} / 2| > \delta N_{i+1}$ implies that $|N_{i} - \alpha N_{i+1}' / 2| > (\delta/2) N_{i+1}'$, which follows from the facts that $|N_{i+1} - N_{i+1}'| \leq 1$ and that $N_{i+1} \geq N$ (\cref{eq:concentration-1}). Namely, assume that $N_{i} > (\alpha/2 + \delta) N_{i+1}$. Then $N_i > (\alpha/2 + \delta/2) N_{i+1}'$, since $N_{i+1} \geq N_{i+1}'$. Now assume instead that $N_{i} < (\alpha/2 - \delta) N_{i+1}$. Then
\begin{align*}
    N_{i} &< (\alpha/2 - \delta) N_{i+1} \\
    &\leq (\alpha/2 - \delta) (N_{i+1}' + 1) \\
    &\leq (\alpha/2 - \delta/2) (N_{i+1}')
\end{align*}
where in the last line we use the fact that $(\delta/2) N_{i+1}' \geq (\delta / 2) (N-1) \geq \alpha/2$.

Hence, we get that 
\[
    \Pr[|N_{i} - \alpha N_{i+1} / 2| > \delta N_{i+1} \mid N_{i + 1} \, \, \text{satisfies~\cref{eq:concentration-1}}] 
    \leq 2\exp\left( -\alpha \left[\frac{N-1}{4(1 + \delta)}\right] \right)
\]
where we use the fact that $N_{i+1} \geq N$ since $N_{i+1}$ satisfies~\cref{eq:concentration-1}.
\end{proof}

\begin{remark}[Dependence on $\delta$]
The strategy from this point onwards is to condition on the event from \cref{eq:concentration}. Since this event fails with probability that is exponentially small in $N$, we lose only an additive $\exp(-c_{\delta} \alpha N)$ probability.
% This accounts for the constant $C = C(\delta)$ in the exponential rates to be found in various probabilistic statements found in the rest of this paper.
% Indeed, 
\end{remark}
% \begin{comment}
% \begin{lemma}
% With high probability, every triple of nodes $(u,v,w)$ in the pedigree has at most one lowest common ancestor.
% \end{lemma}
% Let $\mathsf{LCA}(u,v,w)$ denote the unique lowest common ancestor for $u$, $v$, and $w$, if it exists. Let $\mathsf{LCA}(u,v)$ denote the set of least common ancestors of $u$ and $v$.

% \begin{proof}

% \end{proof}

% \begin{lemma}
% With high probability, for every triple of nodes $(u,v,w)$ in the pedigree with a least common ancestor, there is some pair of nodes (without loss of generality, $(u,v)$) such that $\mathsf{LCA}(u,v,w) = \mathsf{LCA}(u,v)$. 
% \end{lemma}

% \begin{lemma}
% All but $\alpha^{O(T)}$ extant nodes have generic pedigrees above them.
% \end{lemma}
% \begin{corollary}
% All but $\alpha^{O(T)} \cdot 2^{T+1}$ nodes in the pedigree $\cP$ have generic pedigrees above them. This follows since any node in $\cP$ with a non-generic pedigree yields a non-generic pedigree for all its descendants. 
% \end{corollary}
% \end{comment}

As mentioned in~\cref{sec:examples}, two nodes may have significantly higher amounts of symbol overlap caused by inbreeding in their ancestral pedigree than would be expected given their distance in the pedigree. This can cause us to reconstruct an incorrect pedigree if we attempt to explain the symbol overlap without accounting for inbreeding; for instance, we may see two nodes and think they are siblings, when in reality they are cousins with inbreeding in their family tree (see~\cref{sec:examples} for a detailed example). To formally connect different patterns of inbreeding with the amount of spurious symbol overlap they cause, we introduce the notion of \emph{collisions} in an ancestral pedigree. 
Roughly speaking, triples of coupled nodes with relatively few collisions in their ancestral pedigree do not have many spurious overlaps, which we prove in~\cref{sec:symbol-inheritance}. 
We first define collisions and then bound the number that occur under our probabilistic assumptions in \cref{lem:bound-collisions-main}. We also give an alternative characterization of collisions in \cref{lem:collision_interp} that is useful later.

\begin{definition}[Collisions]
\label{def:collision}
% Fix some sets of nodes $A_1, \ldots, A_J$ that are all at some level $k \neq T$. If $k > 0$, we say that this collection has $z$ \emph{collisions at level} $k+1$ if the set of parents of $\cup_{j=1}^{J} A_j$ has size $2|\cup_{j=1}^{J} A_j| - z$. If $k=0$, we say that they have $z$ collisions at $k+1$ if the set of parents has size $|\cup_{j=1}^{J} A_j| - z$. 
Let $\cP$ denote a coupled pedigree. Fix a subset of nodes $A \subset V_k(\cP)$, where $k \neq T$. If $k > 0$, we say that this collection has $z$ \emph{collisions at level} $k+1$ if the set of parents of $A$ in $\cP$ has size $2|A| - z$.
If $k = 0$, we say that it has $z$ collisions at level $1$ if the set of parents in $\cP$ has size $|A| - z$.
Write
\[
    \mathsf{coll}_{k+1}(A) := (\text{\# collisions at level } k + 1 \text{ in } A)
\]

Extend the notion of collisions to ancestral subgraphs as follows. If we have nodes $u_1, \ldots, u_J \in V_k(\cP)$, the number of collisions between the ancestral subpedigrees $\mathsf{anc}(u_j)$ for $j=1, \ldots, J$ is equal to 
\[
    \mathsf{coll}(u_1, \ldots, u_J)
    :=
    \sum_{i=0}^{T-k-1} \mathsf{coll}_{i+1}(\mathsf{anc}_i(u_1) \cup \cdots \cup \mathsf{anc}_i(u_J))
\]
where $\mathsf{anc}_i(u_j)$ denotes the set of ancestors $i$ levels above $u_j$.
\end{definition}

\begin{lemma}[Ancestral collisions, alternate characterization]
    \label{lem:collision_interp}
    Let $u_1, \ldots, u_J$ denote a set of nodes that are all at the same level. Consider the subpedigree $\mathcal{T} = \text{anc}(u_1, \ldots, u_J)$. Let $k_j$ denote the number of nodes in $\mathcal{T}$ that have outdegree $j$ in the subpedigree $\mathcal{T}$. Then
    \[
    \mathsf{coll}(u_1, \ldots, u_J) = \sum_{j \geq 2} (j - 1) k_j. 
    \]
    
\end{lemma}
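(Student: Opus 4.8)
The plan is to reduce the global telescoping sum defining $\mathsf{coll}(u_1,\dots,u_J)$ to a purely local, per-node quantity, and then recognize that quantity as the outdegree sum on the right-hand side. The key reformulation is a local collision identity: for any set $A$ of nodes at a common level $\ell$, writing $\mathrm{out}_A(p)$ for the number of children of a parent node $p$ that lie in $A$, I claim
\[
\mathsf{coll}_{\ell+1}(A) = \sum_{p \in \mathsf{parents}(A)} \bigl( \mathrm{out}_A(p) - 1 \bigr).
\]
This follows by double-counting the edges between $A$ and its parent set: each node of $A$ contributes exactly $2$ edges upward when $\ell > 0$ (and exactly $1$ edge in the extant case $\ell = 0$), so the total number of such edges is $2|A|$ (resp.\ $|A|$), while this same total equals $\sum_{p}\mathrm{out}_A(p)$. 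Subtracting $|\mathsf{parents}(A)| = \sum_p 1$ from both expressions yields the identity, and it takes the same form in both the internal ($\ell>0$) and extant ($\ell=0$) cases, which unifies the two clauses of \cref{def:collision}.

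Next I would apply this identity level-by-level inside $\cT = \mathsf{anc}(u_1,\dots,u_J)$. Writing $A_i = \mathsf{anc}_i(u_1)\cup\cdots\cup\mathsf{anc}_i(u_J)$ for the level-$(k+i)$ slice of $\cT$, one observes that $A_{i+1} = \mathsf{parents}(A_i)$, and — because $\cT$ is the \emph{induced} ancestral subpedigree and is graded — the children of a node $p \in A_{i+1}$ lying in $A_i$ are precisely the children of $p$ inside $\cT$. Hence $\mathrm{out}_{A_i}(p) = \mathrm{outdeg}_{\cT}(p)$. Substituting into the definition of $\mathsf{coll}(u_1,\dots,u_J)$ turns it into $\sum_i \sum_{p \in A_{i+1}} (\mathrm{outdeg}_{\cT}(p) - 1)$.

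Finally I would collapse the double sum. As $i$ ranges over $0,\dots,T-k-1$, the index $i+1$ ranges over $1,\dots,T-k$, so $\bigcup_i A_{i+1}$ is exactly the set of non-base nodes of $\cT$ (those at levels $k+1,\dots,T$), and each such node lies in a unique level-slice, so there is no double counting. Every non-base node is a strict ancestor of some $u_j$ and hence has a child in $\cT$, giving $\mathrm{outdeg}_{\cT}(p) \ge 1$; conversely the base nodes $u_1,\dots,u_J$ have no children in $\cT$ and so are exactly the nodes of outdegree $0$. Therefore the sum equals $\sum_{p:\,\mathrm{outdeg}_{\cT}(p)\ge 1}(\mathrm{outdeg}_{\cT}(p)-1) = \sum_{j\ge 1}(j-1)k_j = \sum_{j\ge 2}(j-1)k_j$, as claimed.

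I do not expect a serious obstacle; the argument is essentially bookkeeping. The one delicate point is the uniform handling of the extant level $k=0$: the identity must absorb that extant nodes have a single parent rather than two, which is precisely why the local formula is phrased through $\mathrm{out}_A(p)-1$ instead of directly via $2|A|$. A secondary point needing care is justifying $\mathrm{out}_{A_i}(p) = \mathrm{outdeg}_{\cT}(p)$, i.e.\ that passing to the \emph{induced} subpedigree drops none of $p$'s relevant children; this relies on $\cT$ containing all ancestors of the $u_j$ and being graded so that every child of $p$ inside $\cT$ sits one level below in $A_i$.
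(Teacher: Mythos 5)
Your proof is correct, and it reaches the paper's conclusion by a cleaner route for the key step. Both you and the paper use the same decomposition: reduce to a per-level identity for each slice $A$ of $\cT$ and then sum over levels, with the same final bookkeeping (non-base nodes of $\cT$ have outdegree $\geq 1$, the base nodes $u_1,\dots,u_J$ have outdegree $0$, so the $j=1$ terms vanish and $\sum_{j\geq 1}(j-1)k_j = \sum_{j\geq 2}(j-1)k_j$). The difference is how the per-level identity $\mathsf{coll}_{\ell+1}(A) = \sum_{p}\bigl(\mathrm{out}_A(p)-1\bigr)$ is established: the paper proves it by induction on $|A|$, adding one node at a time and tracking how many of its parents already lie in the parent set, and it has to run the extant and non-extant cases in parallel ("the case for extant nodes is extremely similar"). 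You get the same identity in one stroke by double-counting the edges between $A$ and $\mathsf{parents}(A)$, which is more elementary, avoids the induction entirely, and genuinely unifies the $\ell = 0$ and $\ell > 0$ cases, since the formula $\sum_p(\mathrm{out}_A(p)-1)$ absorbs whether each node sends up one edge or two.

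One precision point, which is a refinement rather than a gap: in a coupled pedigree a node can be joined to a \emph{single} parent couple by an edge of multiplicity $2$ (when the two individuals of the couple are siblings). For your double count to be correct in that case, $\mathrm{out}_A(p)$ must count upward \emph{edges with multiplicity} landing at $p$, not distinct children in $A$: a double-edge child contributes $2$, giving one collision, whereas counting it as one child would give zero. The same convention must be applied to $\mathrm{outdeg}_\cT$. This is indeed the paper's convention (its base case assigns outdegree $2$ to a lone parent joined by a double edge and records one collision), so your argument goes through verbatim once $\mathrm{out}_A(p)$ is read as the number of upward edges from $A$ into $p$; with that reading, every node of $A$ contributes exactly $2$ (resp.\ $1$) upward edges and your identity $\sum_p \mathrm{out}_A(p) = 2|A|$ (resp.\ $|A|$) holds exactly as you claim.
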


\begin{proof}
    
    Let $S$ denote a set of nodes at level $i$. Let $k_{ij}(S)$ denote the set of parents of $S$ that have outdegree $j$ in the subpedigree $\mathsf{anc}(S)$. Let $\mathsf{coll}_{i + 1}(S)$ denote the number of collisions that $S$ has at level $i + 1$. Then we claim that
    \begin{equation}
        \label{eqn:collision_i}
        \mathsf{coll}_{i+1}(S) = \sum_{j} (j - 1) k_{ij}(S).
    \end{equation}
    
    This is true by induction on the cardinality of $S$, as we now demonstrate. We prove this assuming that $S$ is a set of non-extant coupled nodes; the case for extant nodes is extremely similar. The base case $|S| = 1$ follows because the unique node $u \in S$ either has two distinct parents, in which case there are no collisions and each has outdegree $1$, or $u$ has a single parent, in which case the number of collisions is $1$ and the parent has outdegree $2$. In both cases~\cref{eqn:collision_i} holds.
    
    For the inductive step, suppose that \cref{eqn:collision_i} is valid for all $S$ with $|S| \leq s$. Now consider $S$ with $|S| = s+1$. Choose an arbitrary $u \in S$ and consider $S' = S \backslash \{u\}$. Observe that by~\cref{def:collision} and induction:
    \begin{align*}
    \mathsf{coll}_{i+1}(S) &= 2|S| - |par(S)| \\
    &= 2|S'| - |par(S')| + 2|\{u\}| - |par(u) \backslash par(S')| \\
    &=  \mathsf{coll}(S') + 2 - |par(u) \backslash par(S')| \\
    &= \sum_{j} (j - 1)k_{ij}(S') + 2 - |par(u) \backslash par(S')|.
    \end{align*}
    Therefore, if $u$ has $\ell \in \{0, 1, 2\}$ parents contained in $par(S')$, then 
    \[
    \mathsf{coll}_{i+1}(S) = \ell + \sum_{j} (j - 1)k_{ij}(S') = 
    \sum_{j} (j - 1) k_{ij}(S), 
    \]
    because each parent of $u$ contained in $par(S')$ increases the degree of some node in $S'$ by $1$. 
    
    %\pnote{We can probably comment this out and say we're focusing on the couples case.}\gnote{Done- check?} 
   
    %For completeness, we now prove this when the %set $S$ consists of extant nodes rather than %coupled nodes, with a very similar proof. %The base case $|S|=1$ follows since there is %just one parent, which must have outdegree %1. For the inductive step, let $|S| = s+1$, %choose an arbitrary $u \in S$ and consider %$S' = S \setminus \{u\}$. Then
    %\begin{align*}
            %\mathsf{coll}_{i+1}(S) &= |S| - |par(S)| \\
            %&= |S'| - |par(S')| + 1 - |par(u) %\setminus par(S')| \\
            %&= \mathsf{coll}_{i+1}(S') + 1 - |par(u) %\setminus par(S')| \\
            %&= \sum_{j} (j - 1)k_{ij}(S') + 1 - |par(u) \backslash par(S')|
    %\end{align*}
    %So if $u$ has $\ell \in \{0,1\}$ parents %contained in $par(S')$, then
    %\[ \mathsf{coll}_{i+1}(S) = \ell + \sum_{j} (j - %1)k_{ij}(S') = \sum_{j} (j - 1)k_{ij}(S) \]
    %because each parent of $u$ contained in %$par(S')$ increases the degree of some node %in $S'$ by 1.
    
    Applying this argument over all levels $i$ to the sets $\cup_{\ell = 1}^J \mathsf{anc}_i(u_\ell)$, we see by~\cref{def:collision} and summing over all levels $i$ that~\cref{lem:collision_interp} holds for coupled nodes.
\end{proof}  

In our model and in light of \cref{lem:equivalent-generation}, a collision between sets $A$ and $B$ intuitively corresponds to a node in $B$ ``choosing'' a parent couple that was already chosen by another node in $A \cup B$. This observation lets us bound the number of collisions between the ancestors of 3 nodes with high probability.

% \pnote{How about we just list the size 1 fact as a corollary with no proof?}
% \pnote{It could just be \mathsf{coll}(u, v, w) instead of \mathsf{coll}(\mathsf{anc}(u), \mathsf{anc}(v), \mathsf{anc}(w))? Don't think we need to define coll for arbitrary subsets.} \gnote{Agreed.}

\begin{lemma}[Exponential tail of collisions]
\label{lem:bound-collisions-main}
Fix three nodes $u, v, w \in \cP$ in the same level $k$, and let $c$ be a positive integer. Then
\begin{equation} 
\label{eq:bound-collisions-main}
\Pr[\mathsf{coll}(u,v,w) \geq c] = O\left(\frac{72^c \cdot 2^{2cT}}{N^{c}}\right)
\end{equation}
\end{lemma}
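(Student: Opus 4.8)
The plan is to interpret each collision as a coincidence in the parent-choosing process and to bound the number of such coincidences by a sequential exposure of the ancestral subpedigree. By \cref{lem:equivalent-generation}, one may generate the ancestry by letting each individual pick its parent couple independently and uniformly at random among the couples one level above it. Under this view, \cref{def:collision} together with \cref{lem:collision_interp} says that $\mathsf{coll}(u,v,w)$ counts, summed over all levels, the number of parent-pointers emanating from the current ancestral set that land on a couple already pointed to by an earlier pointer. As a preliminary step I would condition on the concentration event of \cref{lem:concentration}; this costs only an additive $\exp(-\Theta(\alpha N))$, negligible against the target bound, and it guarantees (via \cref{eq:concentration-1} and $\alpha/2 - \delta > 1$) that every generation at level $i \le T$ has $N_i \ge N$ individuals, hence at least $(N-1)/2$ couples.

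Next I would set up the exposure. Reveal the ancestral subpedigree $\cT = \anc{u,v,w}$ one level at a time from level $k$ upward, and within each level one parent-pointer at a time. Let $A_i \subset V_i(\cP)$ denote the revealed ancestors at level $i$; since each coupled node has at most two parents, deterministically $|A_i| \le 3 \cdot 2^{i-k}$, so the total number $P$ of revealed pointers satisfies $P \le \sum_{i=k}^{T-1} 2|A_i| \le 6 \cdot 2^{T}$ (with the trivial modification at level $0$, where each extant node is a singleton contributing a single pointer). The crucial estimate is that, conditioned on everything revealed so far and on the random matching, a newly revealed pointer from level $i$ is uniform over the $\ge (N-1)/2$ couples at level $i+1$, and it produces a collision only if it lands on one of the at most $|A_{i+1}| \le 3 \cdot 2^{T}$ couples already occupied. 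Thus every pointer collides with conditional probability at most $q := \frac{3 \cdot 2^{T}}{(N-1)/2} \le \frac{12 \cdot 2^{T}}{N}$.

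Finally I would convert these per-pointer bounds into a tail bound on $Z := \mathsf{coll}(u,v,w)$. Writing $Z$ as a sum of at most $P$ indicator variables adapted to the exposure filtration, each with conditional collision probability at most $q$, a routine iterated-conditioning estimate gives $\Pr\big[\bigcap_{j \in S}\{\text{pointer } j \text{ collides}\}\big] \le q^{|S|}$ for every fixed set $S$ of pointer-indices, and hence, by a union bound over the $\binom{P}{c}$ choices of which $c$ pointers collide, $\Pr[Z \ge c] \le \binom{P}{c} q^c \le (Pq)^c / c!$. Substituting $P \le 6 \cdot 2^T$ and $q \le 12 \cdot 2^T / N$ yields $Pq \le 72 \cdot 2^{2T}/N$, and therefore $\Pr[Z \ge c] \le \frac{1}{c!}\big(72 \cdot 2^{2T}/N\big)^c = O\big(72^c 2^{2cT}/N^c\big)$, matching \cref{eq:bound-collisions-main}.

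The main obstacle I anticipate is the rigorous justification of the conditional uniformity in the middle step: one must fix a valid exposure order and argue that, conditioned on the past (including which individuals are matched into which couples), each freshly revealed parent-pointer is still uniform over the entire set of couples one level up, so that its collision probability is governed by the small number of already-occupied target couples divided by the large population size. This is exactly what the independent, uniform (multinomial) description of \cref{lem:equivalent-generation} affords, together with the deterministic at-most-doubling ceiling $|A_i| \le 3 \cdot 2^{i-k}$ on the ancestral set; the concentration conditioning is what allows the random denominators $N_{i+1}'/2$ to be replaced by the uniform lower bound $(N-1)/2$.
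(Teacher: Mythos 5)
Your proposal is correct and follows essentially the same route as the paper: both condition on the concentration event of \cref{lem:concentration}, use the uniform parent-choice description of \cref{lem:equivalent-generation} to bound each exposed parent-pointer's conditional collision probability by $3 \cdot 2^{T+2}/N$, and then bound the tail of the resulting sum of at most $3 \cdot 2^{T+1}$ indicators. The only cosmetic difference is that the paper phrases the final step as stochastic domination by a binomial followed by a term-by-term tail sum, whereas you use a union bound over $c$-subsets of pointers together with iterated conditioning --- two standard forms of the same binomial tail estimate, yielding the identical bound $O\left(72^c \cdot 2^{2cT}/N^c\right)$.
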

\begin{proof}
We show that the probability on the left-hand-side of~\cref{eq:bound-collisions-main} can be upper bounded by the probability that a binomial random variable with sufficiently small mean is at least $c$, from which the result follows.

%We assume that each level has at least $N$ nodes - the probability of this is higher than $1 - 2^{2cT} / N^c$ for any $c \geq 1$~(\cref{lem:concentration}), and so we can assume this without loss of generality.
We assume that each level has at least $N$ individuals.
This is a high probability event by \cref{lem:concentration} (which actually describes a much stronger situation).
Since we just want an upper bound, we condition such an event and this assumption is made without loss of generality.

Let $S_i := \mathsf{anc}_i(u) \cup \mathsf{anc}_i(v) \cup \mathsf{anc}_i(w)$. 
Note that $|S_i| \leq 3 \cdot 2^i$, regardless of how many collisions have happened underneath it. The distribution of $\mathsf{coll}(\mathsf{anc}_i(u), \mathsf{anc}_i(v), \mathsf{anc}_i(w))$ is equal to a sum of at most $3 \cdot 2^{i+1}$ Bernoulli random variables, two for each node in $S_i$, which are indicator random variables that a parent coupled node selected by some node in $u \in S_i$ is the same as a parent coupled node previously selected by $v \in S_i$~(\cref{lem:equivalent-generation}). Furthermore, each of these indicator random variables is 1 with probability at most $3 \cdot 2^{T+2} / N$, even conditioned on the previously set random variables---indeed, there are only $3 \cdot 2^{i+1} \leq 3 \cdot 2^{T}$ parents selected in total, so there are only this many nodes that can be selected from to cause a collision, and there are at least $\lfloor  N/2 \rfloor \geq N/4$ coupled nodes at level $i+1$. Therefore, the random variable $\mathsf{coll}(S_i)$ is stochastically dominated by $\mathsf{Bin}(3 \cdot 2^{i+1}, 3 \cdot 2^{T+2} / N)$. Let $X_i \sim \mathsf{Bin}(3 \cdot 2^{i+1}, 3 \cdot 2^{T+2} / N)$. Then we get that
\begin{align}
\Pr[\mathsf{coll}(u,v,w) \geq c] &= \Pr[\sum_i \mathsf{coll}_{k+i}(S_i) \geq c] \nonumber \\
&\leq \Pr[\sum_{i=k}^{T-1} X_i \geq c] \nonumber \\
&\leq \Pr[X \geq c] \label{eqn:bound-collisions-1}
\end{align}
where $X \sim \mathsf{Bin}(3 \cdot 2^{T+1}, 3 \cdot 2^{T+2} / N)$. By bounding the binomial tail and noting that we take $N > 144 \cdot 2^{2T}$, (\cref{eqn:bound-collisions-1}) can be bounded by 
\begin{align*}
    \Pr[X \geq c] &\leq \sum_{i=c}^{3 \cdot 2^{T+1}} \binom{3 \cdot 2^{T+1}}{i}  \left(\frac{3 \cdot 2^{T+2}}{N}\right)^i \\
    &\leq \sum_{i = c}^{3 \cdot 2^{T+1}} (3 \cdot 2^{T+1})^{i} \left( \frac{3 \cdot 2^{T+2}}{N} \right)^{i}
    \\
    &\leq 2 \cdot 72^{c} \cdot \frac{2^{2cT}}{N^c}
\end{align*}
\end{proof}
In particular, by union bounding over all triples of nodes in the coupled pedigree $\cP$, we get the following corollary. Note that there are most $(\alpha/2 + \delta)^{T} \cdot N$ nodes in the pedigree when we condition on the high-probability event from \cref{lem:concentration}.
%\gnote{added factor of $(\alpha/2+\delta)^{3T}$ that comes from the increased size at the last generation}
%\pnote{Again: should it just be \mathsf{coll}(u, v, w) instead of \mathsf{coll}(\mathsf{anc}(u), \mathsf{anc}(v), \mathsf{anc}(w))?}
\begin{corollary}
\label{cor:few-colls}
\[ \Pr[\exists u, v, w: \mathsf{coll}(u,v,w) \geq 4] = O\left(\frac{(\alpha/2 + \delta)^{3T} 2^{8T} }{N}\right)\]
\end{corollary}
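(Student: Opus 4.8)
The plan is to obtain \cref{cor:few-colls} as a direct union bound over \cref{lem:bound-collisions-main} with the choice $c = 4$. The only two quantities I need to control are the per-triple tail probability (supplied by the lemma) and the number of triples to union over (supplied by the concentration estimate in \cref{lem:concentration}).

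First I would condition on the high-probability event of \cref{lem:concentration}, namely that \cref{eq:concentration-1} holds at every level. This event fails with probability at most $T \exp(-C_1 \alpha N)$, which is exponentially small in $N$ and hence negligible against the target bound of order $(\alpha/2+\delta)^{3T} 2^{8T}/N$ (recall $T = \eps \log N$, so the target is only polynomially small). On this event, each level $k$ satisfies $N_k \leq (\alpha/2 + \delta)^{T-k} N$, so summing the resulting geometric series, the total number of coupled nodes in $\cP$ is at most a constant times $(\alpha/2 + \delta)^T N$, matching the count recorded in the remark preceding the corollary; call this bound $M$. Consequently, the number of same-level triples $\{u,v,w\}$ is at most $\binom{M}{3} = O\big((\alpha/2+\delta)^{3T} N^3\big)$.

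Next I would invoke \cref{lem:bound-collisions-main} with $c = 4$ to bound the per-triple probability: for any fixed same-level triple $u, v, w$,
\[
\Pr[\mathsf{coll}(u,v,w) \geq 4] = O\left(\frac{72^4 \cdot 2^{8T}}{N^4}\right) = O\left(\frac{2^{8T}}{N^4}\right),
\]
since $72^4$ is an absolute constant. A union bound over all triples then yields
\[
\Pr[\exists\, u,v,w : \mathsf{coll}(u,v,w) \geq 4]
= O\big((\alpha/2+\delta)^{3T} N^3\big) \cdot O\left(\frac{2^{8T}}{N^4}\right)
= O\left(\frac{(\alpha/2+\delta)^{3T} 2^{8T}}{N}\right),
\]
as desired, after folding back the negligible failure probability of the conditioning event.

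The argument is essentially routine, so there is no genuine obstacle; the only points requiring a little care are (i) restricting the existential quantifier to triples lying at a common level, since $\mathsf{coll}(\cdot,\cdot,\cdot)$ is only defined for such triples, and bounding their count by $\binom{M}{3}$ using $\sum_k \binom{N_k}{3} \leq \binom{\sum_k N_k}{3}$, and (ii) cleanly absorbing the exponentially small failure probability of \cref{lem:concentration} into the final $O(\cdot)$, which is legitimate precisely because the target bound is only inverse-polynomial in $N$ whereas the conditioning error is inverse-exponential.
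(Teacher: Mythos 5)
Your proposal is correct and matches the paper's own argument: the paper likewise obtains \cref{cor:few-colls} by applying \cref{lem:bound-collisions-main} with $c=4$ and union bounding over all $O\big((\alpha/2+\delta)^{3T} N^3\big)$ triples, where the node count comes from conditioning on the concentration event of \cref{lem:concentration}. Your added care about restricting to same-level triples and folding in the exponentially small conditioning failure is consistent with (indeed slightly more explicit than) what the paper does.
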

%\gnote{For this, we need to define the parameters earlier.} 
Since we take the ratio $T / \log(N)$ to be sufficiently small (\cref{sec:model}), the probability of the above event is negligible. Hence, we can assume without loss of generality for the rest of the document that the number of collisions in the ancestral trees of any three nodes is at most 3.

Additionally, by applying \cref{lem:bound-collisions-main} to a single node (repeated three times) and applying linearity of expectation, we can bound the probability that there are many coupled nodes $u$ with collisions in their ancestral pedigrees $\anc{u}$ using Markov's inequality.
We state this as a corollary.
\begin{corollary}
For any $C > 0$,
\label{cor:generic-ancestors}
\[ \Pr\left[ \Big|\{u: \mathsf{coll}(u) \geq 1\}\Big| \geq C (2\alpha + 4\delta)^{T} \right] \leq 72/C\]
as long as $N$ is sufficiently large.
\end{corollary}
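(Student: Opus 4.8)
The plan is to bound the expected number of nodes that have at least one ancestral collision and then apply Markov's inequality. The key observation is that \cref{lem:bound-collisions-main} specializes to a single node: taking $u = v = w$ in \cref{def:collision} gives $\mathsf{anc}_i(u) \cup \mathsf{anc}_i(v) \cup \mathsf{anc}_i(w) = \mathsf{anc}_i(u)$, so $\mathsf{coll}(u,u,u) = \mathsf{coll}(u)$, and \cref{lem:bound-collisions-main} with $c = 1$ immediately yields $\Pr[\mathsf{coll}(u) \geq 1] = O(2^{2T}/N)$ for every fixed node $u$. This is exactly the ``apply the lemma to a single node (repeated three times)'' step advertised before the corollary.

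First I would condition on the concentration event of \cref{lem:concentration}, which holds with probability $1 - T\exp(-C_1\alpha N)$; this failure probability is sub-exponentially small in $N$ and hence negligible relative to the target bound $72/C$ once $N$ is large. Under this event the generation sizes $N_i$ are pinned down, so the collection of coupled nodes is deterministic, and by \cref{eq:concentration-1} the total number of nodes is $\sum_i \lfloor N_i/2\rfloor = O\big((\alpha/2+\delta)^T N\big)$, the geometric sum being dominated by the extant level---precisely the node count quoted just before the corollary. Crucially, the per-node bound from the previous paragraph is also valid on this event, since the proof of \cref{lem:bound-collisions-main} already assumes each level has at least $N$ individuals.

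Next I would set $Z := \big|\{u : \mathsf{coll}(u) \geq 1\}\big| = \sum_u \mathbf{1}[\mathsf{coll}(u) \geq 1]$ and apply linearity of expectation over the (now deterministic) node set:
\[
\E[Z] \;=\; \sum_u \Pr[\mathsf{coll}(u) \geq 1] \;\leq\; \big(\#\text{ nodes}\big)\cdot O(2^{2T}/N) \;=\; O\big((\alpha/2+\delta)^T N\big)\cdot O(2^{2T}/N) \;=\; O\big((2\alpha+4\delta)^T\big),
\]
using the identity $4(\alpha/2+\delta) = 2\alpha+4\delta$, which is exactly why this particular growth rate appears. Tracking constants through the single-node version of the binomial bound in \cref{lem:bound-collisions-main} (for one node there are at most $2^{i+1}$ indicator variables per level, each of probability $O(2^T/N)$, which is tighter than the three-node count) lets one take the leading constant to be at most $72$, so $\E[Z] \leq 72\,(2\alpha+4\delta)^T$. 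Markov's inequality then gives $\Pr\big[Z \geq C(2\alpha+4\delta)^T\big] \leq \E[Z]/\big(C(2\alpha+4\delta)^T\big) \leq 72/C$, and folding in the negligible failure probability of the concentration event finishes the argument.

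The main obstacle here is bookkeeping around the conditioning rather than any conceptual difficulty: one must verify that the per-node collision bound and the bound on the number of nodes hold simultaneously on the same high-probability event, and that the random index set underlying $Z$ is handled correctly---which is why I fix the generation sizes before invoking linearity of expectation. The exact constant $72$ is an artifact of how loosely one estimates the single-node collision probability and the node count; any honest accounting yields an absolute constant, and sharpening it to the stated value is routine.
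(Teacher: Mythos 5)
Your proposal is correct and follows essentially the same route the paper intends: the paper gives no written proof, only the remark preceding the corollary, which prescribes exactly your argument---specialize \cref{lem:bound-collisions-main} to a single node (taking $u=v=w$, $c=1$), bound the node count by $(\alpha/2+\delta)^T N$ on the concentration event of \cref{lem:concentration}, apply linearity of expectation using $4^T(\alpha/2+\delta)^T = (2\alpha+4\delta)^T$, and finish with Markov's inequality. Your additional care about conditioning on generation sizes before invoking linearity, and your observation that the single-node binomial count gives a constant comfortably below $72$ (absorbing the concentration failure probability), are exactly the bookkeeping the paper leaves implicit.
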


\begin{definition}[$d$-Richness]
Fix a pedigree $\cP$, and let $d \geq 3$ be an integer. All extant nodes in $\cP$ are $d$-rich. 
For all $k > 0$, a level $k$-node is \textbf{$d$-rich} if it has at least $d$ children that are $d$-rich.
\end{definition}

\begin{lemma}[Most nodes are $d$-rich] \label{lem:many-rich}
Fix a constant $0 < \tau < 1$, and let $\delta > 0$ as in \cref{lem:concentration}.
As long as $N$ and $\alpha$ are sufficiently large, there exists a constant $C_2 = C_2(\tau, \delta)$ such that with probability $1 - T \exp(-C_2 \alpha N)$, at least $(1-\tau)$ fraction of level-$k$ coupled nodes in $\cP$ are $d$-rich for all $k$.
%Let $N_T = N$ be sufficiently large with respect to $\alpha$, and let $T = \eps \log N$ for a constant $\eps < 1$. 
%Then with probability 0.998 over the generation of the pedigree, the following is true. 
%In each level $k \in \{0, \ldots, T\}$, there are at least $(1-\tau)N_k$ $d$-rich nodes.
\end{lemma}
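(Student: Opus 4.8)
The plan is to induct on the level $k$, moving upward from the extant population (level $0$) toward the founders, using the equivalent top-down description of the model from \cref{lem:equivalent-generation} to expose, at each step, fresh randomness that is independent of the $d$-richness of the level below. Throughout I would condition on the concentration event of \cref{lem:concentration}, so that every generation satisfies $N_i \geq N$ and $N_{i-1}/N_i' = (1 \pm o(1))\,\alpha/2$; that event fails with probability at most $T\exp(-C_1\alpha N)$, which I fold into the final bound. Let $A_k$ denote the event that at least a $(1-\tau)$ fraction of the level-$k$ coupled nodes are $d$-rich. Since every extant node is $d$-rich by definition, $A_0$ holds deterministically, which gives the base case. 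The goal is then to show $\Pr[\neg A_k \mid A_{k-1}] \leq \exp(-C_2 \alpha N)$ and conclude by a ``first-failure'' union bound: if some $A_k$ fails, then $A_{k-1}\cap \neg A_k$ holds for the least such $k$, so $\Pr[\exists k:\neg A_k]\leq \sum_{k=1}^{T}\Pr[\neg A_k \mid A_{k-1}] \leq T\exp(-C_2\alpha N)$.

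For the inductive step, I would condition on the entire pedigree structure strictly below level $k$ (all couples, edges, and assignments among levels $0,\dots,k-1$); this determines which level-$(k-1)$ couples are $d$-rich, and assuming $A_{k-1}$, at least a $(1-\tau)$ fraction of the $N_{k-1}$ level-$(k-1)$ \emph{individuals} are \emph{good}, meaning they lie in a $d$-rich couple. By \cref{lem:equivalent-generation}, the only remaining relevant randomness is the multinomial assignment sending each level-$(k-1)$ individual to one of the $N_k'/2$ level-$k$ couples; crucially, this assignment is generated by randomness disjoint from the structure below level $k-1$, hence is independent of which individuals are good. Let $G_c$ be the number of good individuals assigned to couple $c$. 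Since a couple contains only two individuals, the good individuals assigned to $c$ lie in at least $G_c/2$ \emph{distinct} $d$-rich level-$(k-1)$ couples, each of which is then a $d$-rich child of $c$; therefore $G_c \geq 2d$ already forces $c$ to be $d$-rich. Each good individual lands in $c$ with probability $\approx 2/N_k'$, so by \cref{prop:pois-thin} (Poisson thinning) $G_c$ is essentially $\poisson(\lambda)$ with $\lambda = (1-\tau)\cdot 2N_{k-1}/N_k' = (1\pm o(1))(1-\tau)\alpha$, which exceeds $2d$ once $\alpha$ is a large enough constant (recall $d$ is fixed).

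It then remains to bound the number of non-$d$-rich couples at level $k$. Writing $\rho := \Pr[G_c < 2d]$, the Poisson tail bound (\cref{prop:poisson-tail}) applied with mean $\lambda = \Theta(\alpha)$ and constant threshold $2d$ gives $\rho \leq e^{-c\alpha}$ for a constant $c=c(\tau,\delta)>0$. The occupancies $(G_c)_c$ are the bin-counts of a balls-into-bins process and are therefore negatively associated, and each indicator $Z_c := \mathbf{1}[G_c < 2d]$ is a monotone function of its own $G_c$, so the $Z_c$ are negatively associated as well. Setting $Y = \sum_c Z_c$, one has $\E[Y] \leq \rho\,(N_k'/2)$, and the number of non-$d$-rich couples is at most $Y$. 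Applying the multiplicative Chernoff--Hoeffding bound (valid for negatively associated variables) to compare $Y$ against the target $\tau\,(N_k'/2)$ gives
\[
\Pr[\,Y \geq \tau N_k'/2\,] \;\leq\; \Big(\tfrac{e\rho}{\tau}\Big)^{\tau N_k'/2} \;\leq\; \big(e^{1-c\alpha}/\tau\big)^{\tau N_k'/2} \;\leq\; \exp(-C_2\,\alpha N),
\]
where the last step uses $\rho \leq e^{-c\alpha} \ll \tau$ (so the base is at most $e^{-c\alpha/2}$ for large $\alpha$), $N_k' \geq N$, and absorbs constants into $C_2 = C_2(\tau,\delta)$. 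This is exactly the mechanism that produces the $\alpha N$ in the exponent: the per-couple failure probability is exponentially small in $\alpha$, and the multiplicative Chernoff bound multiplies this rate by the linear-in-$N$ number of couples. Thus $\Pr[\neg A_k \mid A_{k-1}] \leq \exp(-C_2\alpha N)$, completing the induction.

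The main obstacle, and the place demanding the most care, is the conditioning in the inductive step: one must argue via \cref{lem:equivalent-generation} that the assignment of level-$(k-1)$ individuals to level-$k$ parents is genuinely independent of the (bottom-up determined) $d$-richness of the level-$(k-1)$ couples, and separately handle the couple-versus-individual bookkeeping --- it is the factor-of-two loss in the $G_c/2$ bound that forces the threshold $2d$ rather than $d$. A secondary technical point is the negative-association justification needed to apply a Chernoff bound to the correlated occupancies $(G_c)_c$; if one prefers to avoid it, one can instead appeal to the independent-Poisson form of the original model, but then the level-$(k-1)$ matching mixes children across couples and the conditioning must be arranged accordingly.
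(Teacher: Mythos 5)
Your proof is correct, and it shares the paper's overall scaffolding --- induction up the levels, conditioning on the concentration event of \cref{lem:concentration}, and the key observation (via \cref{lem:equivalent-generation}) that the assignment of level-$(k-1)$ individuals to level-$k$ parent couples is fresh randomness, independent of the richness structure determined below --- but the heart of the per-level tail bound is genuinely different. The paper never examines individual bin occupancies: it fixes a candidate set $S$ of $\tau M_{k+1}+1$ would-be $d$-poor parents, notes that this event forces at most $(d-1)|S|$ of the $d$-rich level-$k$ children to have a parent in $S$, applies a Chernoff bound to the exactly i.i.d.\ indicators ``child $i$ chose a parent in $S$,'' and finishes with a union bound over the at most $(e/\tau)^{\tau M_{k+1}+1}$ choices of $S$. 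You instead bound, for each single parent couple $c$, the probability that fewer than $2d$ good individuals land in $c$, and then sum the \emph{dependent} indicators $Z_c$ using negative association of multinomial occupancies. The two routes are morally dual: your bound $(e\rho/\tau)^{\tau N_k'/2}$ is exactly the paper's entropy factor $(e/\tau)^{\tau N_k'/2}$ times a per-bin failure probability raised to the same power. The paper's subset trick buys freedom from any dependence machinery (its indicators are independent once $S$ is fixed), while your route gives a cleaner, per-couple statement at the cost of importing negative-association closure properties and the NA Chernoff bound, which are standard but outside the paper's stated toolkit. Your bookkeeping ($G_c \geq 2d$ good individuals forces at least $d$ distinct $d$-rich child couples, hence $d$-richness) is the correct counterpart of the paper's ``a $d$-poor parent absorbs at most $d-1$ rich children.''

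Two small repairs. First, the appeal to \cref{prop:pois-thin} is not quite right: once you condition on the structure below level $k$ (as your independence argument requires), the number of good individuals is a fixed integer, so $G_c$ is $\binomial(N_{\mathrm{good}},\, 2/N_k')$ rather than Poisson; this costs nothing, since the Chernoff--Hoeffding lower tail for a binomial with mean $\Theta(\alpha)$ yields the same $\rho \leq e^{-c\alpha}$. Second, the negative-association step is genuinely load-bearing (the $Z_c$ are not independent), so you should either cite the standard NA results for multinomial counts explicitly, or --- if you want to stay entirely within the paper's tools --- switch to the subset union bound, which sidesteps the dependence issue altogether.
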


\begin{proof}[Proof of~\cref{lem:many-rich}]

Let the term ``$d$-poor node'' refer to coupled nodes that are not $d$-rich. Let $M_k$ denote the number of coupled nodes at level $k$ in $\cP$. Our goal is to prove an upper bound on the event that there are at least $\tau M_{k+1}$ $d$-poor nodes at level $k+1$, conditioned on the event that there are at least $(1-\tau) M_{k}$ $d$-rich nodes at level $k$.

Let $R_k$ denote the event that there are at least $(1-\tau) M_k$ $d$-rich nodes at level $k$. 
Let $E$ denote the event $(\alpha/2 - \delta)M_{k+1} \leq M_{k} \leq (\alpha/2 + \delta) M_{k+1}$ for all $k$, which occurs with probability
% 0.999
$1-\exp(-C_1 \alpha N)$ by \cref{lem:concentration}. We also condition on the sizes of $M_0, \ldots, M_T$, abbreviating this conditioning as $M_{0:T}$. 

Let $S$ be an arbitrary subset of nodes at level $k+1$ of size $\tau M_{k+1}+1$, and consider the event where $S$ only consists of $d$-poor nodes.
This implies that the number of $d$-rich children of $S$ is at most $(d-1)(\tau M_{k+1} + 1)$.
Let $X_i$ be iid Bernoulli RVs, which represent indicators for the event where the $i$th $d$-rich child chooses at least one of its parents to be in $S$.
Note that $\Pr(X_i = 1) = \left( 1 - \left(1 - \frac{|S|}{M_{k+1}}\right)^2 \right) > \frac{|S|}{M_{k+1}}$.
\begin{align*}
    \Pr(S &\text{ only has $d$-poor nodes} \mid R_{k}, E, M_{0:T}) \\
    &\leq \Pr\left[ \sum_{i=1}^{(1-\tau)M_{k}} X_i \leq (d-1)|S| \;\Bigg|\; M_{0:T} \right] \\
    &\leq \exp \left[ -\frac{(1-\tau) M_k |S|}{2 M_{k+1}} \left( 1 - \frac{(d-1) M_{k+1}}{(1-\tau) M_{k}} \right)^2 \right] &\textit{(Chernoff--Hoeffding Bound)}
\end{align*}

% In order for a Chernoff--Hoeffding bound to apply, it suffices if $\alpha$ is large enough so that $\frac{(d-1)}{2(1-\tau)(\alpha-\delta)} < 1$. This yields
% \[
%     \leq
%     \exp\left[ -\left( 1 - \frac{d-1}{2(1-\tau)(\alpha - \delta)} \right)^2 \cdot |S|(1-\tau)(\alpha-\delta) \right].
% \]

Observe that there are 
$\binom{M_{k+1}}{|S|} \leq \left( \frac{e}{\tau} \right)^{\tau M_{k+1} + 1}$
many choices for $S$.
To apply a union bound, it suffices for $\alpha$ to be large enough so that $\frac{(1-\tau) M_k}{M_{k+1}} \left( 1 - \frac{(d-1) M_{k+1}}{(1-\tau) M_{k}} \right)^2 \approx (1-\tau)\alpha (1 - \frac{d-1}{(1-\tau)\alpha})^2$ looks linear in $\alpha$.
In that case, we obtain a bound of the form
\begin{align*}
    \Pr(\text{at least } &\tau M_{k+1} \text{ $d$-poor nodes at level $k+1$} \mid R_k, E, M_{0:T}) \\
    &\leq \exp\left( -C M_{k+1} \alpha \right).
\end{align*}
Therefore, we may write
\begin{align*}
    \Pr(\text{at least } &(1-\tau) \text{ fraction of $d$-rich at all levels}) \\
    &\geq (1 - e^{-C_1 \alpha N}) \prod_{k=1}^{T} \left( 1 - \exp(-C M_{k+1}  \alpha) \right) \\
    &\geq 1 - \exp(-C_1 \alpha N) - \sum_{k=0}^{T-1} \exp(-C N (\alpha/2 - \delta)^k  \alpha) \\
    &\geq 1 - T\exp(-C_2 \alpha N)
\end{align*}
for an appropriate constant $C_2$ depending only on $\tau$ and $\delta$.

\end{proof}

%%------------------------------------- UNUSED LEMMA
% \gnote{This lemma (\cref{lem:edge_has_unique_clique}) does not seem to be used anywhere. Is that true?} \ynote{I think so}
% \begin{lemma}
% \label{lem:edge_has_unique_clique}
% Let $G_k$ be the siblinghood hypergraph at level $k$, and let $E$ denote a hyperedge. Then there is a unique maximal hyperclique containing $E$.
% \end{lemma}
%%----------------------------------------------------

%\gnote{There is a factor of $\alpha$ in the base of the $\exp(T)$ here.}
%\ynote{This is now addressed by absorbing everything into constants in the exponent.}

\begin{lemma}[Cliques have unique parents]
\label{lem:clique_has_unique_parent}
Let $G_k$ denote the siblinghood hypergraph at level $k$. Let $\delta > 0$ be as in \cref{lem:concentration}. 
For a constant $C_3 = C_3(\delta)$, with probability at least $1 - \frac{1}{N}e^{C_3 T \log \alpha} $, 
for all hypercliques $\cC \subset G_k$ with at least one hyperedge, there is a unique node at level $k+1$ that is a parent of every node in $\cC$. We refer this node as the \textbf{parent} of $\cC$.
\end{lemma}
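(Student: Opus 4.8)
The plan is to reformulate siblinghood at level $k$ in terms of \emph{parent-couple sets} and then split the proof into a deterministic existence argument and a probabilistic uniqueness argument. For a coupled node $c \in V_k(\cP)$ with $k \geq 1$, let $S(c)$ denote the set of level-$(k+1)$ couples that are parents of $c$; since $c$ consists of two individuals, each having a single parent couple, we have $|S(c)| \leq 2$ (and $|S(c)| = 1$ for an extant node). The key observation is that a node $\tilde c$ at level $k+1$ is a parent of $c$ precisely when $\tilde c \in S(c)$, and, directly from the definition of $G_k$, a triple $\{c_1,c_2,c_3\}$ is a hyperedge if and only if $S(c_1)\cap S(c_2)\cap S(c_3) \neq \emptyset$: a common element is exactly a couple having a child in each $c_i$, and such children are mutually siblings. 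Thus a node that is a parent of every member of a hyperclique $\cC = \{c_1,\dots,c_m\}$ is exactly an element of $\bigcap_{i=1}^m S(c_i)$, and the lemma amounts to showing this intersection is a singleton with high probability.

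First I would prove \emph{existence}, i.e. $\bigcap_i S(c_i) \neq \emptyset$, which is purely combinatorial and holds deterministically. Since $\cC$ is a hyperclique with at least one edge we have $m \geq 3$, and every triple $\{c_i,c_j,c_l\}$ satisfies $S(c_i)\cap S(c_j)\cap S(c_l) \neq \emptyset$. This is a Helly-type statement for sets of size at most two: if some $S(c_i)$ is a singleton $\{a\}$, then every triple through it forces $a \in S(c_j)$ for all $j$; otherwise every $S(c_i)$ has size two, and fixing $S(c_1) = \{a,b\}$ and partitioning the indices into $A = \{i : a \in S(c_i)\}$ and $B = \{i : b \in S(c_i)\}$, the three-wise intersection property forbids simultaneously having an index in $A \setminus B$ and one in $B \setminus A$ (such a pair together with $c_1$ would form a triple with empty intersection), so $A$ or $B$ exhausts $\{1,\dots,m\}$ and yields a global common element. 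It is precisely here that working with triples rather than pairs is essential: the cyclic configuration $S(c_1) = \{P,Q\}$, $S(c_2) = \{P,R\}$, $S(c_3) = \{Q,R\}$ of \cref{fig:sib-bad-ex} has pairwise-intersecting but not triple-wise-intersecting parent sets, so it is correctly excluded from being a hyperclique.

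Next I would establish \emph{uniqueness} using the collision bound. Suppose two distinct couples $a \neq b$ at level $k+1$ both lie in $\bigcap_i S(c_i)$; since each $|S(c_i)| \leq 2$, this forces $S(c_i) = \{a,b\}$ for every $i$. Choosing any three members $c_1,c_2,c_3 \in \cC$ (possible as $m \geq 3$), their parent set $S(c_1)\cup S(c_2)\cup S(c_3) = \{a,b\}$ has size $2$, so by \cref{def:collision} they have $2\cdot 3 - 2 = 4$ collisions at level $k+1$, whence $\mathsf{coll}(c_1,c_2,c_3) \geq 4$. By \cref{cor:few-colls}, with probability at least $1 - O\!\left((\alpha/2+\delta)^{3T}2^{8T}/N\right) \geq 1 - \tfrac{1}{N}e^{C_3 T \log \alpha}$ (for a suitable $C_3 = C_3(\delta)$, using $\alpha/2 + \delta < \alpha$) no triple of nodes in $\cP$ has four or more collisions, and on this event uniqueness holds simultaneously for every hyperclique at level $k$. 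Combining deterministic existence with high-probability uniqueness gives the lemma. The main obstacle is the existence step: the reformulation in terms of parent-couple sets and the Helly-type argument must carefully exploit the triple-wise (not merely pairwise) intersection property to rule out cyclic configurations, whereas once existence is in hand, uniqueness is a short and direct application of \cref{cor:few-colls}.
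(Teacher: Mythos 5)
Your proof is correct, and its probabilistic core is identical to the paper's: you show that two distinct common parents $a \neq b$ would force $S(c_i) = \{a,b\}$ for every member of the clique, so any three members have $2\cdot 3 - 2 = 4$ collisions at level $k+1$, which \cref{cor:few-colls} rules out simultaneously for all triples with the stated probability --- this is exactly the paper's uniqueness argument. Where you genuinely diverge is the existence step. The paper disposes of existence in one sentence by citing \cref{prop:sibling_container}, asserting that ``a hyperclique corresponds to a set of coupled nodes that contain a set of mutual siblings''; but \cref{prop:sibling_container} as stated only gives the \emph{converse} implication (mutual siblings $\Rightarrow$ clique), so the direction actually needed (clique $\Rightarrow$ common parent) is left implicit there. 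Your reformulation via parent-couple sets $S(c)$ of size at most two, together with the Helly-type argument --- singleton case, then the $A/B$ partition argument showing that triple-wise intersection forbids the cyclic configuration of \cref{fig:sib-bad-ex} --- supplies a clean deterministic proof of precisely that missing direction, and it makes explicit why triples rather than pairs are essential (pairwise intersection of $2$-element sets does \emph{not} imply a common element). So your route buys rigor where the paper is terse, at the cost of a somewhat longer combinatorial preamble; the two proofs then coincide on the collision-counting finish, including the conversion of the bound from \cref{cor:few-colls} into the form $1 - \tfrac{1}{N}e^{C_3 T \log \alpha}$ using $\alpha/2 + \delta < \alpha$.
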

\begin{proof}
    By \cref{prop:sibling_container}, a hyperclique corresponds to a set of coupled nodes that contain a set of mutual siblings, where each couple has at least one of the siblings in it. This establishes that there is a coupled node at level $k+1$ that is at least one parent of every node in $\cC$. In the case where $\cC$ is a hyperclique of extant nodes, we are done: every node in $\cC$ is an individual and has exactly one parent coupled node.

    If $\cC$ is at a higher level, note that there can be at most two parents for $\cC$, as defined above. The reason is that any individual has exactly one parent couple, and since there are only two individuals in a couple, there cannot be three parent couples each with one child in each couple in $\cC$. 
    
    Next we show that if there are two coupled nodes, both of which are parents of $\cC$, then there must be many collisions among the ancestors of $\cC$, and therefore we can rule this out as a low-probability event. Since $\cC$ has at least one hyperedge, we know that $|\cC| \geq 3$. This means that any arbitrary set of three nodes from $\cC$ must have at least $6-2 = 4$ collisions by \cref{def:collision}---but~\cref{cor:few-colls} shows that with probability $O\left( \frac{(\alpha/2 + \delta)^{3T} 2^{8T}}{N} \right)$, this does not occur anywhere in the pedigree.
\end{proof}

\begin{lemma}[Disjointness of maximal cliques]
\label{lem:main_sibling_structure}
Let $G_k$ denote the siblinghood hypergraph at level $k$. For $k = 0$, each extant node is contained in a unique maximal clique, and moreover, the maximal cliques in $G_0$ are vertex disjoint (and thus, also edge-disjoint). For $k > 0$, each node is contained in at most two maximal cliques. Moreover, with probability $1 - \frac{1}{N} e^{C_3 T \log \alpha}$, the maximal cliques in $G_k$ are edge-disjoint.
\end{lemma}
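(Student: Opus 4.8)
The plan is to identify the maximal cliques of $G_k$ with sibling equivalence classes of individuals, and then to convert any overlap between two distinct maximal cliques into a surplus of collisions that \cref{cor:few-colls} forbids. The structural backbone is that siblinghood among the individuals of $\cP_{\mathsf{indiv}}$ is an equivalence relation, since two individuals are siblings exactly when they share a parent couple. For a sibling class $\cS$ at level $k$, let $\mathcal{K}(\cS)$ denote the set of level-$k$ coupled nodes that contain a member of $\cS$; by \cref{prop:sibling_container} this is a clique in $G_k$. For $k = 0$ every node is a singleton individual, so the extant individuals are partitioned by their sibling classes, the maximal cliques are exactly the classes $\mathcal{K}(\cS) = \cS$, and distinct classes are disjoint. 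This settles the $k=0$ statement (vertex disjointness, hence edge disjointness, with each node in a unique maximal clique) deterministically, without any appeal to randomness.

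For $k > 0$, a coupled node $c = \{a, b\}$ meets at most the two classes $\cS_a \ni a$ and $\cS_b \ni b$. I would show that any maximal clique $\cC \ni c$ carrying at least one hyperedge (so $|\cC| \geq 3$) lies inside $\mathcal{K}(\cS_a)$ or $\mathcal{K}(\cS_b)$. If not, some $d \in \cC$ contains no member of $\cS_a$ and some $e \in \cC$ contains no member of $\cS_b$; taking a hyperedge of $\cC$ on $c, d, e$ (or on $c, d$ and a third node when $d = e$), its mutual-sibling witness inside $c$ is either $a$ or $b$, which respectively forces $d$ to meet $\cS_a$ or $e$ to meet $\cS_b$, a contradiction. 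Since $\mathcal{K}(\cS_a)$ and $\mathcal{K}(\cS_b)$ are cliques containing $c$, maximality of $\cC$ forces $\cC$ to equal one of them, so $c$ lies in at most two maximal cliques.

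The probabilistic heart of the lemma is edge disjointness for $k > 0$. Suppose two distinct maximal cliques share a hyperedge $\{c_1, c_2, c_3\}$. By the previous step these cliques are $\mathcal{K}(\cS_1)$ and $\mathcal{K}(\cS_2)$ for distinct classes $\cS_1 \neq \cS_2$, whose members descend from distinct parent couples $p_1 \neq p_2$ at level $k+1$. Each $c_i$ belongs to both cliques, so it contains one member of $\cS_1$ and one of $\cS_2$, forcing the two parents of $c_i$ as a coupled node to be exactly $p_1$ and $p_2$. Then all three of $c_1, c_2, c_3$ share the parent set $\{p_1, p_2\}$, so by \cref{def:collision} the number of collisions at level $k+1$ alone is $2 \cdot 3 - 2 = 4$, whence $\mathsf{coll}(c_1, c_2, c_3) \geq 4$. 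By \cref{cor:few-colls} (whose failure probability $O((\alpha/2 + \delta)^{3T} 2^{8T} / N)$ matches the stated bound $\tfrac{1}{N} e^{C_3 T \log \alpha}$ for a suitable $C_3$), with high probability no triple of nodes attains four collisions, and on this event no two maximal cliques of $G_k$ can share an edge.

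The step I expect to require the most care is this final translation: one must verify that a shared hyperedge pins each shared couple down to having its two parents be exactly the parent couples $p_1, p_2$ of the two sibling classes, producing the exact count of four collisions needed to invoke \cref{cor:few-colls}. This hinges on the characterization of maximal cliques as the sets $\mathcal{K}(\cS)$ established in the second paragraph, together with the bookkeeping that inside each coupled node the two individuals lie in the two distinct classes; making this witness consistency precise is where the argument is most delicate, whereas the $k=0$ case and the ``at most two'' count are purely combinatorial consequences of transitivity of siblinghood.
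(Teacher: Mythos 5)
Your proposal is correct and takes essentially the same route as the paper: both reduce a shared hyperedge between two distinct maximal cliques to a triple of coupled nodes whose parent set at level $k+1$ has size two, count $2\cdot 3 - 2 = 4$ collisions via \cref{def:collision}, and rule this out by \cref{cor:few-colls}, with the $k=0$ and ``at most two cliques'' parts handled deterministically through the identification of maximal cliques with sibling classes. The only difference is cosmetic: you inline the four-collision count, whereas the paper invokes it through \cref{lem:clique_has_unique_parent} (whose proof is exactly that count), and your $\mathcal{K}(\cS)$ characterization makes precise the paper's informal identification of maximal cliques with maximal sets of siblings.
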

\begin{proof}
    Note that maximal cliques in the siblinghood hypergraph correspond to maximal sets of siblings. The claim for extant nodes is relatively trivial - extants are individuals, and so the maximal sets of siblings partition the set of extant nodes. 
    
    For $k > 0$, since each individual in a coupled node has one pair of parents, a coupled node can have at most two parents. Thus it can be part of at most two sets of siblings. Hence, it is part of at most two maximal cliques.
    
    Finally, we need to establish that the maximal cliques in $G_k$ are edge-disjoint. To do this, it suffices to show that the intersection between any two maximal cliques is less than 3, so there can be no hyper-edge. Indeed, if three nodes that are simultaneously in two maximal cliques, these three nodes would themselves form a clique with two different parents in level $k+1$, which occurs with probability at most $1 - \frac{1}{N} e^{C_3 T \log \alpha}$ by  \cref{lem:clique_has_unique_parent}.
\end{proof}

\subsection{The joint LCA and its uniqueness}
\label{sec:joint_LCA}
% In the previous section, we established properties that characterize nodes which are able to be reconstructed.
% We want to also show that the lineage of these nodes and their block symbols can be recovered \textit{consistently}.
% To enable this discussion in \cref{sec:reconstruction}, we introduce the notion of a three-way \textbf{lowest common ancestor} (LCA) in pedigrees.
% Here, the word ``lowest'' is used to respect the pedigree's graded structure $V_0, \ldots V_T$, where a smaller index refers to a more recent ancestor.

The next two lemmas are crucial in \cref{sec:reconstruction} to show that we can accurately collect symbols for accurately reconstructed coupled nodes. Here we define the \textit{joint lowest common ancestor}, which is a special type of LCA for a triple of coupled nodes. 

% Here we employ the notion of a three-way \textbf{lowest common ancestor} (LCA) in pedigrees. Here, the word ``lowest'' is used to respect the pedigree's graded structure $V_0, \ldots V_T$, where a smaller index refers to a more recent ancestor.

\begin{definition}
\label{def:joint_LCA}
Let $u, v, w$ denote coupled nodes in $\cP$. We say that $u, v, w$ have a \textbf{joint LCA} $z$ if it holds that $z \in \mathsf{LCA}(u, v, w)$ and there exist distinct children $c_u, c_v, c_w$ of $z$ so that for all $x \in \{u, v, w\}$, $c_x$ is an ancestor of $x$. 
\end{definition}

% In other words, a joint LCA $z \in \mathsf{LCA}(u, v, w)$ has the property that $u, v, w$ are all descendants of distinct children of $z$.
% We now give structural lemmas that rule out the possibility of incorrect reconstruction, to be used in \cref{sec:reconstruction}. 

\begin{lemma}[Joint LCA is unique]
\label{lem:unique_joint_LCA}
Suppose that each triple of coupled nodes in $\cP$ has at most 3 collisions. Further suppose that $u, v, w$ have a joint LCA $z \in \mathsf{LCA}(u, v, w)$. Then $z$ is the unique LCA of $u, v, w$. 
\end{lemma}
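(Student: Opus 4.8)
The plan is to argue by contradiction: assuming there is a second lowest common ancestor $z' \neq z$ of $u,v,w$, I will exhibit at least four collisions among the ancestors of the triple, contradicting the standing hypothesis $\mathsf{coll}(u,v,w) \le 3$. Throughout I would count collisions via the out-degree characterization of \cref{lem:collision_interp}. Writing $\mathcal{T} = \mathsf{anc}(u,v,w)$, that lemma gives $\mathsf{coll}(u,v,w) = \sum_{j \ge 2}(j-1)k_j$, where $k_j$ is the number of nodes of out-degree $j$ in $\mathcal{T}$ (out-degree meaning the number of edges, counted with multiplicity, from a node to its children lying in $\mathcal{T}$). Hence it suffices to locate \emph{pairwise distinct} nodes of $\mathcal{T}$ whose out-degrees, after subtracting one from each, sum to at least $4$.

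First I would record that the joint LCA $z$ already supplies two collisions: by \cref{def:joint_LCA} it has three distinct children $c_u,c_v,c_w$, each an ancestor of a distinct target and hence a member of $\mathcal{T}$, so $z$ has out-degree at least $3$ and contributes $3-1 = 2$. Next I would show that $z'$ is incomparable to $z$. If $z'$ were a strict descendant of $z$, it would be a common ancestor of $u,v,w$ lying strictly below $z$, contradicting $z \in \mathsf{LCA}(u,v,w)$ (\cref{def:LCA}); symmetrically, if $z'$ were a strict ancestor of $z$, then $z$ would be a common ancestor strictly below $z' \in \mathsf{LCA}(u,v,w)$, again a contradiction. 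So $z$ and $z'$ are incomparable, and in particular $z$ is not a descendant of $z'$.

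The crux is to extract two further collisions from the side of $z'$, at nodes distinct from $z$. Since $z' \in \mathsf{LCA}(u,v,w)$, the three targets cannot all be reached from $z'$ through a single child $c'$, as then $c'$ would be a common ancestor of $u,v,w$ strictly below $z'$. Thus $z'$ reaches $u,v,w$ through at least two distinct children and has out-degree $\ge 2$. If it uses three distinct children, then $z'$ alone contributes $2$, and since $z \neq z'$ we already have four collisions. Otherwise $z'$ reaches the targets through exactly two children, so one child $d$ of $z'$ is a common ancestor of two of the targets, say $u$ and $v$; this already gives $z'$ out-degree $2$, i.e.\ one collision. To find a fourth, I would pass to a \emph{lowest} common ancestor $p$ of the pair $\{u,v\}$ among the descendants of $d$ (such a node exists since $d$ is one such common ancestor; take a minimal one). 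By minimality $p$ is a genuine LCA of $\{u,v\}$, so it reaches $u$ and $v$ through two distinct children, giving out-degree $\ge 2$ and one more collision. Being a strict descendant of $z'$ (it lies below the child $d$), the node $p$ differs from $z'$; and since $z$ is not a descendant of $z'$, we also have $p \neq z$.

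Summing the per-node contributions of the three pairwise-distinct nodes $z$ (at least $2$), $z'$ (at least $1$), and $p$ (at least $1$) through \cref{lem:collision_interp} then forces $\mathsf{coll}(u,v,w) \ge 4$, contradicting the hypothesis and establishing uniqueness. I expect the main obstacle to be the bookkeeping in this last step: one must verify that $z$, $z'$, and $p$ are genuinely distinct so that their collision counts are additive rather than double-counted, and one must count out-degrees with edge multiplicity (as in \cref{lem:collision_interp}) so that the degenerate multiplicity-two edges permitted in coupled pedigrees are handled correctly.
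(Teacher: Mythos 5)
Your proposal is correct and follows essentially the same argument as the paper: assume a second LCA $z' \neq z$, count collisions via the out-degree characterization of \cref{lem:collision_interp}, split on whether $z'$ is itself a joint LCA (giving $2+2=4$) or not (giving $2+1+1=4$), and derive a contradiction with the $\leq 3$ collision hypothesis. The only cosmetic difference is your choice of the third node $p$ (a minimal common ancestor of the pair $\{u,v\}$ below a child of $z'$) where the paper takes the lowest node in $\desc{z'} \cap \anc{u,v,w}$ ancestral to exactly two of the targets — the same idea, with your write-up being somewhat more explicit about the pigeonhole step and the pairwise distinctness of $z$, $z'$, $p$.
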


\begin{proof}
    For the sake of contradiction, suppose that $u,v,w$ have another LCA $z' \neq z$.
    By the definition of LCA, $z'$ is neither an ancestor nor a descendant of $z$.
    
    If $z'$ is a joint LCA of $u, v, w$, then both $z$ and $z'$ have outdegree $3$ in $\mathsf{anc}(u, v, w)$, which by~\cref{lem:collision_interp} implies that $\mathsf{anc}(u, v, w)$ has at least $2\times(3-1)=4$ collisions.
    
    If $z'$ is not a joint LCA, then $z'$ has outdegree $2$ in $\mathsf{anc}(u, v, w)$. 
    Moreover, there exists a unique lowest node $y \in \desc{z'} \cap \mathsf{anc}(u,v,w)$ that is an ancestor of precisely two nodes in $\{ u, v, w\}$. 
    In particular, $y$ has outdegree at least $2$ in $\mathsf{anc}(u,v,w)$. 
    Observe that the nodes $y, z, z'$ are all distinct. Hence by~\cref{lem:collision_interp}, the number of collisions is at least $2\times(2 - 1) + 1\times(3 - 1) = 4$. 
    
    In either case, $\mathsf{anc}(u, v, w)$ has at least $4$ collisions, which is a contradiction.
\end{proof}

\begin{lemma}[Inheritance paths go through LCA]
    \label{lem:bad_ancestral_paths}
    Suppose that each triple of coupled nodes in $\cP$ has at most 3 collisions. Further suppose that $u, v, w \in \cP$ have an LCA $z$. 
    Let $z'$ denote a strict ancestor of $z$. 
    Then for some $x \in \{ u, v, w \}$, all paths from $z'$ to $x$ in $\cP$ pass through $z$. 
\end{lemma}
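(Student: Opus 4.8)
The plan is to argue by contradiction. Suppose the conclusion fails, so that for \emph{every} $x \in \{u,v,w\}$ there is a directed path $P_x$ from $z'$ to $x$ in $\cP$ that avoids $z$. I will show that this forces the ancestral subpedigree $\mathcal{T} := \mathsf{anc}(u,v,w)$ to contain at least $4$ collisions, contradicting the standing hypothesis that every triple has at most $3$ collisions. Throughout I would work with $\mathcal{T}$ and the out-degree characterization of collisions from \cref{lem:collision_interp}: the number of collisions of $u,v,w$ equals $\sum_{j\ge 2}(j-1)k_j$, where $k_j$ counts the nodes of out-degree $j$ in $\mathcal{T}$; equivalently it equals $\sum_{a}\max(0,\mathrm{outdeg}_{\mathcal{T}}(a)-1)$, the total ``excess'' out-degree. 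The task is then to locate enough excess out-degree to reach $4$.

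First I would extract one robust collision from $z$ itself. Since $z \in \mathsf{LCA}(u,v,w)$, every child of $z$ lying on a directed path to some leaf belongs to $\mathcal{T}$, so $\mathrm{outdeg}_{\mathcal{T}}(z)\ge 1$; and if it were exactly $1$, the unique $\mathcal{T}$-child of $z$ would be a proper descendant of $z$ that is a common ancestor of all of $u,v,w$, contradicting \cref{def:LCA}. Hence $\mathrm{outdeg}_{\mathcal{T}}(z)\ge 2$, giving at least one collision. The same minimality argument shows that at least two of the three leaves, say (after relabeling) $u$ and $v$, are reached from $z$ through \emph{distinct} children $c_u \neq c_v$: if instead every $z$-to-$v$ and every $z$-to-$w$ path passed through a single common child $c$, that $c$ would be a proper descendant of $z$ and a common ancestor of all three, again contradicting minimality.

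The core of the argument is to convert the two avoiding paths $P_u, P_v$ into two \emph{additional, independent} collisions. For this I would pass to the undirected graph underlying $\mathcal{T}$ and use the bookkeeping identity, which follows from \cref{lem:collision_interp} together with the edge/vertex count and the fact that in the relevant regime the sinks of $\mathcal{T}$ are exactly the pairwise-incomparable triple $u,v,w$: writing $E,n$ for the number of edges and nodes of $\mathcal{T}$, the collision count equals $E-n+3$, which is the cyclomatic number of the (connected) graph $\mathcal{T}$ plus $2$. Thus ``at most $3$ collisions'' is equivalent to $\mathcal{T}$ having at most one independent cycle. Now for each $x\in\{u,v\}$ the two $z'$-to-$x$ routes — the through-$z$ route $z'\rightsquigarrow z \to c_x \rightsquigarrow x$ and the avoiding route $P_x$ — are distinct, so extracting a simple cycle from the pair yields a cycle $C_x$; because $P_x$ misses $z$, the node $z$ (and in fact the edge $z\to c_x$) must lie on the non-shared portion of the through-$z$ route, so $C_x$ passes through $z$ and uses $z\to c_x$. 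Since $c_u \neq c_v$ and a simple cycle is incident to $z$ in exactly two edges, the edge $z\to c_u$ lies in $C_u$ but not in $C_v$; hence $C_u \neq C_v$ as elements of the cycle space, and over $\mathrm{GF}(2)$ two distinct nonzero cycles are automatically linearly independent. This gives cyclomatic number at least $2$, i.e.\ at least $4$ collisions, the desired contradiction.

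The step I expect to be the main obstacle is the last one: guaranteeing that the cycles produced by $P_u$ and $P_v$ are genuinely independent rather than accidentally equal or jointly reducible to a single cycle. The avoiding paths may share nodes and edges with each other, with the backbone $z'\rightsquigarrow z$, and with the $z\rightsquigarrow x$ routes, so the argument must be carried out at the level of the cycle space rather than by naively exhibiting vertex-disjoint cycles. The clean device above — pinning down the two cycle-edges incident to $z$ and using $c_u\neq c_v$ to separate $C_u$ from $C_v$ — is precisely what makes independence robust to such sharing. Verifying the cycle-extraction detail, namely that the standard ``two distinct paths between common endpoints contain a cycle through their divergence region'' construction really places $z$ and the edge $z\to c_x$ on $C_x$, is the one place where genuine care is required.
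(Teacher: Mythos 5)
Your proof is correct, and it takes a genuinely different route from the paper's. The paper's proof picks $z'$ to be the \emph{lowest} ancestor of $z$ that has $z$-avoiding paths to all of $u,v,w$, takes a spanning tree $\cT$ of $\desc{z}\cap\anc{u,v,w}$ (carrying $2$ collisions) and a spanning tree $\cT'$ of the union of avoiding paths, and argues that $z'$ has outdegree at least $3$ in $\anc{u,v,w}$, contributing $2$ further collisions since $z'\notin\cT$; the four collisions are thus localized at specific vertices. You instead argue globally in the cycle space: the identity $\mathsf{coll}(u,v,w)=E-n+3$ (valid because $\anc{u,v,w}$ is connected and its sinks are exactly $u,v,w$, by \cref{lem:collision_interp}) turns the hypothesis into ``first Betti number at most one,'' and you produce two $\mathrm{GF}(2)$-independent cycles by pinning each to a distinct child edge $z\to c_u$, $z\to c_v$. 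Your route buys something real: it needs no minimality choice for $z'$ at all, and in particular it handles the configuration where all avoiding paths leave $z'$ through a single child $c$ that is \emph{not} an ancestor of $z$ (so the avoiding paths route through a second LCA incomparable to $z$). In the paper's proof, the assertion that $z'$ has outdegree at least $2$ in $\cT'$ leans on the minimality of $z'$, which only yields a contradiction when the unique child would itself be an ancestor of $z$; your pinned-cycle argument closes exactly this case, so it is arguably the more airtight of the two, at the cost of the bookkeeping identity and the connectivity/sink verification that the paper's vertex-local count avoids. Two small points to tighten: your relabeling sentence should quantify over all three leaves (``every $z$-to-$u$, $z$-to-$v$, and $z$-to-$w$ path''), and the claim that the sinks of $\anc{u,v,w}$ are exactly $\{u,v,w\}$ uses that $u,v,w$ lie on a common level, which is implicit in the paper's definition of collisions and should be stated.
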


\begin{proof}
    \begin{figure}
    \centering
    \begin{tikzpicture}[scale=0.8,
        > = stealth,
        auto,
        node distance = 3cm, 
        semithick  
    ]
    
	\tikzset{rect state/.style={draw,rectangle,thick,minimum size=4mm}}
	\tikzset{circ state/.style={draw,circle,thick,minimum size=4mm}}
	\tikzset{dashed state/.style={draw,circle,dashed,thick,minimum size=4mm}}
	
	\node[rect state] (Z) at (4,4) {$z$};
	\node[rect state] (Z2) at (1,6) {$z'$};
	\node[rect state] (P) at (-2,2) {};
	\node[circ state] (u) at (-1,0) {$u$};
	\node[circ state] (v) at (1,0) {$v$};
	\node[circ state] (w) at (3,0) {$w$};
	
	\path[->] (Z) edge [draw=red] (u);
	\path[->] (Z) edge [draw=red] (v);
	\path[->] (Z) edge [draw=red] (w);
	\path[->] (Z2) edge [draw=blue] (P);
	\path (Z2) edge [line width=5pt,draw=white] (w);
	\path (P) edge [line width=5pt,draw=white] (u);
	\path (P) edge [line width=5pt,draw=white] (v);
	\path[->] (Z2) edge [draw=blue] (w);
	\path[->] (P) edge [draw=blue] (u);
	\path[->] (P) edge [draw=blue] (v);
	\path[->] (Z2) edge (Z);
    \end{tikzpicture}

    \caption{
    ``Proof-by-picture'' of \cref{lem:bad_ancestral_paths}. 
    %Notice that $z$ and $z'$ contribute (at least) $2$ collisions each.
    }
    \label{fig:bad_ancestral_paths_pic}
    \end{figure}
    
     To draw a contradiction, suppose that for all $x \in \{u, v, w \}$ that $z'$ has a path to $x$ that does not go through $z$. 
     Suppose further, without loss of generality, that $z'$ is the lowest node in $\cP$ that is an ancestor of $z$ and has this property. 
     
     Let $\cT$ denote a spanning tree on $\desc{z} \cap \mathsf{anc}(u,v,w)$ (red edges in \cref{fig:bad_ancestral_paths_pic}). 
     Also select a spanning tree $\cT'$ on the union of all paths from $z'$ to $u, v, w$ that do not go through $z$ (blue edges in \cref{fig:bad_ancestral_paths_pic}). 
     Observe that $z'$ has outdegree at least $2$ in $\cT'$. Since $z'$ also has a path to $z$, then $z'$ has outdegree at least $3$ in $\mathsf{anc}(u,v,w)$. 
     Moreover, $\cT$ has $2$ collisions. Since $z'$ is not contained in $\cT$, we conclude by~\cref{lem:collision_interp} that $\mathsf{anc}(u,v,w)$ has at least $2 + 1 \times (3 - 1) = 4$ collisions. 
     The first terms accounts for the collisions in $\cT$, and the second applies \cref{lem:collision_interp} to $z'$. 
     This is a contradiction.
\end{proof}

Note that by \cref{cor:few-colls}, \cref{lem:unique_joint_LCA,lem:bad_ancestral_paths} hold for all triples $u, v, w \in \cP$ with high probability.

%%------------------------------------- UNUSED LEMMA
% \begin{lemma}
%     \label{lem:ancestor_outdegree}
%     Suppose that $u, v, w \in \cP$ have an LCA $z$. Let $z'$ denote a strict ancestor of $z$. Then $z'$ has outdegree at most $2$ in $\mathsf{anc}(u, v, w)$. 
% \end{lemma}

% \begin{proof}
%     If $z'$ has outdegree at least $3$ in $\mathsf{anc}(u,v,w)$, then using~\cref{lem:collision_interp} and that $z \neq z'$, the number of collisions in $\mathsf{anc}(u,v,w)$ is at least $2*(3 - 1) = 4$. But this contradicts that there are at most $3$ collisions in $\mathsf{anc}(u,v,w) \subset \cP$ (see~\cref{cor:few-colls}).
% \end{proof}
% \ynote{\cref{lem:ancestor_outdegree} is not used anywhere, an oversight? looks like it can probably be removed with no consequence.}
% \pnote{Not an oversight-- thought it was necessary at some point, but it's not. Fine to remove.}
%----------------------------------------------------------

% TODO replace 0.998 with $1-\delta$, work out dependence

\section{Lemmas that enable reconstruction}

In this section, we prove bounds on ``overlap statistics'' previously explored in \cref{sec:techniques}.
Since we now have switched to talking about coupled pedigrees, we re-define its notion now.

% Recall that each individual in $u \in \iP$ is equipped with a haploid gene sequence $\sigma_u \in \Sigma^B$. 
% In addition, we may regard each non-extant coupled node $v = \{v_1, v_2\} \in \cP$ as having a diploid gene sequence $\sigma_v \in \binom{\Sigma}{2}^B$ where $\sigma_v(b) := \{ \sigma_{v_1}(b), \sigma_{v_2}(b) \}$. In either scenario, we say that three gene sequences $\sigma, \sigma', \sigma''$ overlap in block $b$ if 

\begin{definition}[Diploid blocks] \label{def:diploid}
Let $\iP$ induce the coupled pedigree $\cP$.
Given (haploid) gene sequences $(\sigma_u)_{u \in V(\iP)}$, we associate with each non-extant couple $v = \{v_1, v_2\}$ node a \textbf{diploid sequence} $\sigma_v$ defined in terms of each block $b$ as a multiset $\sigma _v(b) :=  \sigma_{v_1}(b) \cup \sigma_{v_2}(b)$.
% $\sigma_v(b) := \{ \sigma_{v_1}(b), \sigma_{v_2}(b) \} \in \Sigma^2$.
Each extant node's block is thought of as a singleton set.
\end{definition}

\begin{definition}[Diploid overlap] \label{def:diploid-overlap}
Three diploid sequences $\sigma, \sigma', \sigma''$ \textbf{overlap} in block $b$ if
\[
    \sigma(b) \cap \sigma'(b) \cap \sigma''(b) \neq \emptyset.
\]
The term \textbf{fraction of mutual overlaps} between coupled nodes $u, v, w$ in refers to the statistic
\[
\frac{\# \textrm{ overlapping blocks of } \sigma_u, \sigma_v, \sigma_w}{B}
= \frac{ \left|\{ b \in [B]: \sigma_{u}(b) \cap \sigma_v(b) \cap \sigma_w(b) \neq \emptyset  \}\right| }{B}. 
\]
\end{definition} 

\label{sec:symbol-inheritance}
\subsection{Distinguishing siblings from non-siblings: Coincidence probability bounds}

In this section, we establish the following high-probability separation condition for triples of coupled nodes at the same level:
\begin{itemize}
    \item if $u,v,w$ are mutually siblings, they overlap in at least $1/4$ fraction of blocks.
    \item if $u,v,w$ are not mutually siblings, they overlap in at most $3/16$ fraction of blocks.
\end{itemize}
In order to reconstruct the pedigree, we perform inference on the underlying pedigree structure from the symbols at the extant level. The key step of our reconstruction algorithm is to infer which triples of nodes are mutually siblings based on the overlap between their reconstructed symbols. 
The conditions stated above justify using the number of overlapping symbols in triples as a statistic for determining siblinghood.
The first fact (\cref{lem:siblings_symbols}) is easy to prove. In contrast, the second fact (\cref{lem:nonsiblings_symbols}) is rather non-trivial; we prove it using casework.

\begin{lemma}[Symbol overlap in siblings]
\label{lem:siblings_symbols}
With probability $1 - O(\alpha^{3T} N^3 \exp(-\gamma^2 B))$, the fraction of mutual overlap in symbols between any triple of coupled nodes $u$, $v,w \in \cP$ that are mutually siblings is at least $\tfrac{1}{4} - \gamma$ for any arbitrarily small $\gamma > 0$. 
%\ynote{A slight fib. Change statement from $\gamma$ to $\gamma(B)$, or "high probability" to "$1-\exp(-B)$"}
\end{lemma}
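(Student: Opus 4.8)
The plan is to reduce the statement to a per-block coincidence computation followed by a concentration-and-union-bound argument. Fix a triple of coupled nodes $u, v, w$ at a common level that are mutually siblings. By the definition of siblinghood for coupled nodes together with the transitivity of the sibling relation on individuals, there is a single parent couple $\{f, m\}$ in $\cP_{\mathsf{indiv}}$ and individuals $u_i \in u$, $v_j \in v$, $w_k \in w$ that are all children of $\{f, m\}$. First I would analyze a single block $b \in [B]$: conditioned on the founder symbols and the topology, each of $u_i, v_j, w_k$ independently copies either $\sigma_f(b)$ or $\sigma_m(b)$, each with probability $1/2$. If $\sigma_f(b) \neq \sigma_m(b)$, all three agree exactly when they all copy the same parent, which happens with probability $2 \cdot (1/2)^3 = 1/4$; if $\sigma_f(b) = \sigma_m(b)$ (which can happen under inbreeding), they always agree. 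Hence the three sibling individuals coincide in block $b$ with probability at least $1/4$ in every case.

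Next I would pass from individual coincidence to diploid overlap. Since $\sigma_{u_i}(b) \in \sigma_u(b)$, $\sigma_{v_j}(b) \in \sigma_v(b)$, and $\sigma_{w_k}(b) \in \sigma_w(b)$ by \cref{def:diploid}, the event that $u_i, v_j, w_k$ share a common symbol is contained in the event that the diploid multisets $\sigma_u(b), \sigma_v(b), \sigma_w(b)$ overlap. Therefore the per-block overlap probability for $u, v, w$ is at least $1/4$ as well. Because the inheritance choices and founder symbols are drawn independently across the $B$ blocks and the topology is symmetric in the blocks, the indicators $X_b := \mathbf{1}[\sigma_u(b) \cap \sigma_v(b) \cap \sigma_w(b) \neq \emptyset]$ are i.i.d. Bernoulli with mean $p \geq 1/4$.

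With this setup the concentration step is routine. The fraction of overlapping blocks is $\frac{1}{B}\sum_{b} X_b$, and since its mean is at least $1/4$, the lower-tail Chernoff--Hoeffding bound (the third inequality in the excerpt, with $N = B$) gives $\Pr\big( \tfrac{1}{B}\sum_b X_b < \tfrac14 - \gamma \big) \le \Pr\big( \tfrac1B \sum_b X_b < p - \gamma\big) \le 2\exp(-2\gamma^2 B) = O(\exp(-\gamma^2 B))$ for any $\gamma > 0$. Finally I would union bound over all mutually-sibling triples. Conditioning on the high-probability concentration event of \cref{lem:concentration}, the total number of coupled nodes is $O((\alpha/2 + \delta)^T N)$, so there are $O(\alpha^{3T} N^3)$ triples to consider, and the union bound yields the claimed failure probability $O(\alpha^{3T} N^3 \exp(-\gamma^2 B))$.

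The computation is genuinely easy, as the authors indicate; the only points that require a little care are (i) verifying that the three mutually-sibling individuals really descend from one common parent couple, so that their inheritance is governed by just the two parental symbols, and (ii) observing that inbreeding (the possibility $\sigma_f(b) = \sigma_m(b)$) and the presence of the extra individual in each couple can only increase the overlap probability, so that $1/4$ is a valid lower bound requiring no case analysis. In contrast to the companion bound \cref{lem:nonsiblings_symbols} for non-siblings, here there is no hard step.
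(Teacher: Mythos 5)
Your proposal is correct and follows essentially the same route as the paper's proof: reduce to the three sibling individuals sharing a common parent couple, observe the per-block overlap probability is at least $\tfrac14$, apply the Chernoff--Hoeffding lower-tail bound, and union bound over the $O(\alpha^{3T}N^3)$ triples. Your explicit case analysis on whether $\sigma_f(b) = \sigma_m(b)$ is just a slightly more careful phrasing of the paper's symbol-survival count (each parental symbol reaches all three children with probability $\tfrac18$, two symbols per block), so the two arguments are the same in substance.
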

\begin{proof}

It suffices to consider the overlap of the individuals $u_1, v_1, w_1$ in $u, v, w$, respectively, that are siblings, \textit{i.e.}, $u_1, v_1, w_1$ have a common parent in $\cP_{\mathsf{indiv}}$. We claim that the expected fraction of overlap for $u_1, v_1, w_1$ is at least 1/4. Indeed, any individual symbol at the parent (couple) node survives to all three children with probability $1/8$, and there are $2B$ symbols at the parent (one per block per member of the couple). 
The Chernoff--Hoeffding bound gives that for any fixed triple $(u,v,w)$ of siblings, the probability that it has less than $1/4 - \gamma$ mutual overlap is at most $\exp(-\gamma^2 B)$. To be explicit, let $X_i$ denote the indicator of an overlap between $u, v, w$ in block $b$. 

\begin{align*}
    \Pr(\text{average overlap} < 1/4 - \gamma) &= \Pr\left( \frac{1}{B} \sum_{i = 1}^B X_i < 1/4 + \gamma\right) \\
    &= \Pr\left( \frac{1}{B} \sum_{i = 1}^B (X_i - \bE[X_i]) < 1/4 - \bE[X_1] + \gamma \right) \\ 
    &\leq \Pr\left( \frac{1}{B} \sum_{i = 1}^B (X_i - \bE[X_i]) < -\gamma \right) \\
    &\leq 2 \exp( - 2 B \gamma^2 ). 
\end{align*}
In the second line we use that $X_i$ are i.i.d., in the third line we use that the expectation is at least $1/4$, and to finish we apply Chernoff--Hoeffding. A union bound over all $O((\alpha^{T}N)^3)$ triples of siblings yields the result.
\end{proof}

\begin{lemma}[Symbol overlap in non-siblings]
\label{lem:nonsiblings_symbols}
Fix $\gamma > 0$. With probability $1 - O(1/N_T) - O(\alpha^{3T}N^3 \exp(-\gamma^2 B))$, every triple of coupled nodes $u$, $v$, and $w$ that are at the same level but are \textbf{not} mutual siblings share overlap in less than $\tfrac{3}{16} + \gamma$ fraction of their symbols.
\end{lemma}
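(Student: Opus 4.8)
The plan is to mirror the proof of \cref{lem:siblings_symbols}: reduce the high-probability statement to a bound on the per-block \emph{coincidence probability} and then apply concentration. Since inheritance is independent across the $B$ blocks, the fraction of mutual overlaps between $u,v,w$ is an average of $B$ i.i.d.\ indicators, each equal to $1$ with probability $p := \Pr[\sigma_u(b)\cap\sigma_v(b)\cap\sigma_w(b)\neq\emptyset]$ (computed for a fixed pedigree topology). It therefore suffices to show $p \le 3/16$ for every non-sibling triple; the Chernoff--Hoeffding bound then gives overlap $< 3/16 + \gamma$ with probability $1 - 2\exp(-2\gamma^2 B)$ for a fixed triple, and a union bound over the $O((\alpha^T N)^3)$ triples produces the $O(\alpha^{3T}N^3\exp(-\gamma^2 B))$ error term. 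Before computing $p$ I would condition on the event from \cref{cor:few-colls} that \emph{every} triple of nodes has at most $3$ collisions in its ancestral pedigree; this is a high-probability event contributing the stated $O(1/N_T)$ error term, and it lets me treat the topology of $\mathcal{T} := \mathsf{anc}(u,v,w)$ as an arbitrary but collision-bounded DAG in the remaining analysis.

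To bound $p$, note that because founders carry distinct symbols, any symbol common to all three diploid sequences in a block originates at a single ancestor $g \in \mathsf{anc}(u,v,w)$ whose block-$b$ symbol propagates down to at least one individual of each of $u,v,w$; hence $p \le \sum_g \Pr[\text{$g$'s block-$b$ symbol reaches } u, v, \text{ and } w]$, where the sum is over common ancestors together with a choice of one of $g$'s two symbols. The first structural observation is that for a \emph{non-sibling} triple every common ancestor sits at level $\ge k+2$: a common parent couple at level $k+1$ would have to be an ancestor of each of $u,v,w$ via a single length-one edge, forcing the three to be children of a common couple, i.e.\ mutual siblings. I would then use \cref{lem:bad_ancestral_paths} and \cref{lem:unique_joint_LCA} to control how the three inheritance paths emanating from a given ancestor merge: with at most $3$ collisions the paths are essentially tree-like, each length-$d$ transmission to a node contributes a factor $(1/2)^d$, and when two of the three paths share an initial segment the number of independent halving factors drops (which is precisely what can raise the coincidence above the generic cousin value).

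Now I would carry out the casework flagged in the statement, organized by the number of collisions in $\mathcal{T}$ (which is $0,1,2,$ or $3$) and by the merge pattern of $u,v,w$. In the collision-free (tree) case the triple either has a joint LCA at depth $d\ge 2$, contributing at most $2\cdot(1/2)^{3d}\le 1/32$, or two of them (say $u,v$) first merge at a couple whose descendants reach both $u$ and $v$ along a shared path while $w$ joins at a strictly higher ancestor --- a ``niece''-type configuration whose coincidence I compute to be at most $1/16$; either way $p \le 1/16 < 3/16$. The collision-bearing cases are where inbreeding (as in \cref{fig:example-inbreeding}) creates \emph{several} low common ancestors, each contributing a comparable amount, so that contributions add; the extremal configurations saturate when the collision budget of $3$ is spent to create the maximal number of grandparent-level common ancestors whose symbols reach all three, and a direct computation bounds the total by $3/16$.

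I expect the main obstacle to be precisely this collision-bearing casework: one must enumerate all the ways $\le 3$ collisions can be placed in the ancestral pedigree of a non-sibling triple, correctly account for the diploid two-symbol structure and for the correlations between inheritance paths that pass through a common inbred couple, and use inclusion--exclusion (rather than a naive union bound over ancestors) where several ancestors can simultaneously supply a common symbol, in order to verify that $3/16$ is genuinely the worst case and is never exceeded.
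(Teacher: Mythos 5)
Your framework matches the paper's: condition on the event of \cref{cor:few-colls} that every triple has at most $3$ collisions (giving the $O(1/N_T)$ term), bound the per-block coincidence probability by $3/16$ via casework on the structure of $\mathsf{anc}(u,v,w)$, and finish with Chernoff--Hoeffding plus a union bound over $O((\alpha^T N)^3)$ triples. Your treatment of the tree-structured cases is also essentially the paper's \cref{lem:bLCA-case1,lem:bLCA-case3} (two LCAs giving $\le 1/8$, one LCA with a tree above giving $\le 1/16$). But there is a genuine gap: you have not actually proved the bound in the collision-bearing cases, and that is where the entire content of the lemma lies. The constant $3/16$ in the statement is not produced by ``contributions from several low common ancestors adding up,'' as you suggest; in the paper's worst case (\cref{lem:lca-not-tree}) there is a \emph{unique} LCA $z$, and the excess overlap comes from correlated inheritance through a single inbred node. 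Concretely, one must first show (via \cref{lem:witness}) that there is a unique \emph{witness} node $g$ lying strictly below $z$ on one of the three branches, and then compute conditional probabilities through $g$: because $g$ is reached by two paths, it can hold the same LCA-symbol in \emph{both} of its diploid slots, i.e.\ $\Pr[\sigma(g)=\{x,x\}]\le 1/4$, after which that symbol passes to two of the three nodes with probability $1$ rather than $1/2$ each. The resulting computation, $2\cdot\left(\tfrac14\cdot 1\right)\cdot\left(\tfrac12\cdot\tfrac12\right)+2\cdot\left(\tfrac12\cdot\tfrac14\right)\cdot\left(\tfrac12\cdot\tfrac12\right)=\tfrac{3}{16}$, is exactly what saturates the bound. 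Your proposed union bound over ancestors $g$ of the quantity $\Pr[\text{$g$'s symbol reaches }u,v,w]$ gives no handle on this: the hard part is evaluating that probability when the descendant structure of the ancestor contains the cycle, and your proposal explicitly defers this (``I expect the main obstacle to be precisely this collision-bearing casework'') rather than resolving it.

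A secondary issue: your enumeration ``by number of collisions $0,1,2,3$'' is slightly miscalibrated. With $0$ or $1$ collisions the triple has no common ancestor at all (\cref{clm:bounded-collisions}) and the overlap is identically zero under the infinite-alphabet assumption; the tree cases you analyze have exactly $2$ collisions, and all the danger is concentrated in the $3$-collision configurations, which split into two LCAs (\cref{lem:bLCA-case1}), one LCA with the cycle entirely above it (\cref{lem:bLCA-case2}, bounded by $1/8$ because the parents of $u,v,w$ each contain an individual unrelated to $z$), and one LCA with the cycle not entirely above it (\cref{lem:lca-not-tree}, the $3/16$ case). Until the last of these is carried out --- including the uniqueness of the witness, which itself requires the collision-counting characterization of \cref{lem:collision_interp} --- the claimed bound is asserted, not proved.
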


\subsubsection{Proof of Lemma~\ref{lem:nonsiblings_symbols}}

\begin{remark}
\label{rmk:collision_conditioning}
In this proof, we condition on the high probability event from \cref{cor:few-colls} that all triples $u, v, w$ of coupled nodes have at most $3$ collisions in their ancestral subpedigree $\anc{u, v, w}$. 
\end{remark}

It is clear that if $u,v,w$ are completely unrelated, then their mutual overlap is zero, since we assume an infinite alphabet. If $u, v, w$ have a common ancestor, then typically their ancestral pedigree has two collisions, and all triples have at most three collisions in their ancestral pedigree by our conditioning in \cref{rmk:collision_conditioning}. We refer to triples with three collisions as being \textit{inbred} and think of the extra collision as the \textit{site} of inbreeding, a notion that we later formalize in this section. 

Recall the definition of tree subpedigree (\cref{def:tree_pedigree}), which we refer to simply as a \textit{tree} in what follows. Also recall that an edge of multiplicity 2 in a pedigree is considered to be an undirected cycle of length 2. Thus, a tree subpedigree consists only of simple (multiplicity 1) edges. Our strategy for proving \cref{lem:nonsiblings_symbols} follows the recipe below for casework.

\begin{enumerate}
    \item $u,v,w$ have exactly two LCAs, and the ancestral pedigree of 
    $u, v, w$ is a tree.
    \item $u,v,w$ have exactly one LCA, and the LCA has a cycle above it.
    \item  $u,v,w$ have exactly one LCA, and the ancestral pedigree of 
    $u, v, w$ is a tree.
    % The$u,v,w$ has exactly one LCA, and there is no inbreeding up to this ancestor.
    \item $u, v, w$ have exactly one LCA, and the ancestral pedigree of $u, v, w$ contains a cycle that is not completely above the LCA. 
    
    % $u,v,w$ have exactly one LCA, but there is a node below the LCA that has inbreeding above it.
\end{enumerate}

We now assert that the above cases cover all possibilities; this is proven in the next two claims.

% To establish this, we assume without loss of generality that $\mathsf{anc}(u, v, w)$ has at most three collisions, since~\cref{cor:few-colls} rules out the possibility of having more than three collisions out as an arbitrarily low (depending on the parameters of the model) probability event.
% Next, we state Claims~\ref{clm:bounded-collisions}~and~\ref{clm:bounded-LCA}, which show that our assertion holds. 

\begin{claim}\label{clm:bounded-collisions}
For $u$, $v$, and $w$ to have a single LCA, their ancestors must have at least 2 collisions.
\end{claim}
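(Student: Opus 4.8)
The plan is to prove a slightly stronger statement — that the mere existence of a \emph{common ancestor} of $u,v,w$ already forces at least two collisions — and then observe that an LCA is in particular a common ancestor, so the claim follows immediately. Throughout I would work inside the ancestral subpedigree $\cT = \anc{u,v,w}$, whose only outdegree-zero nodes are $u,v,w$ themselves (no two of them are comparable, since they lie at the same level), and I would use the characterization of collisions from \cref{lem:collision_interp}, namely $\mathsf{coll}(u,v,w) = \sum_{x \in \cT}(\mathrm{outdeg}_{\cT}(x)-1)^+$.

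The heart of the argument is a counting lemma: from any node $z \in \cT$, the number $L$ of elements of $\{u,v,w\}$ reachable from $z$ along directed (parent-to-child) paths satisfies $L \le 1 + \mathsf{coll}(u,v,w)$. To see this, let $R$ be the set of nodes reachable from $z$ in $\cT$, equipped with the induced edges. Every child in $\cT$ of a node of $R$ again lies in $R$, so outdegrees — and hence the set of sinks — agree whether computed in $R$ or in $\cT$; in particular the sinks of $R$ are exactly the $L$ reachable elements of $\{u,v,w\}$. Since every node of $R$ other than $z$ is reached by a directed path from $z$ whose final edge lies in $R$, each such node has in-degree at least one in $R$, giving $|E(R)| \ge |R|-1$. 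Combining this with the identity $\sum_{x\in R}(\mathrm{outdeg}_{R}(x)-1)^+ = |E(R)| - (|R| - L)$ (the sinks contribute nothing to the edge sum, and there are $|R|-L$ non-sink nodes) yields $\mathsf{coll}(u,v,w) \ge \sum_{x \in R}(\mathrm{outdeg}_{R}(x)-1)^+ \ge (|R|-1) - |R| + L = L-1$. The first inequality holds because collisions are a sum of nonnegative terms over all of $\cT \supseteq R$ and the outdegrees agree on $R$.

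To finish, suppose $u,v,w$ have an LCA $z$; this is in particular the situation when they have a single LCA. By \cref{def:LCA}, $z$ is an ancestor of each of $u,v,w$, so all three are reachable from $z$, i.e. $L = 3$. The counting lemma then gives $\mathsf{coll}(u,v,w) \ge L - 1 = 2$, which is exactly the claim.

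I expect the main (and essentially only) subtlety to be the bookkeeping in the counting lemma: one must verify that restricting attention to the reachable set $R$ changes no outdegree (this is precisely where ``all children of a reachable node are reachable'' is used) and that the reachable subgraph from a single root satisfies $|E(R)| \ge |R|-1$. Everything else is a direct application of \cref{lem:collision_interp}. As a sanity check, the argument correctly reports zero collisions when $u,v,w$ are unrelated (each node reaches only one sink, so $L=1$ and no common ancestor exists) and exactly two collisions in the generic sibling case (the common parent has outdegree three in $\cT$, contributing $3-1=2$), consistent with that being the single-LCA tree case.
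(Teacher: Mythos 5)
Your proof is correct and takes essentially the same route as the paper: the paper's entire proof is the one-line assertion that a common ancestor of $u,v,w$ forces at least two collisions in $\anc{u,v,w}$, which is precisely the stronger statement you establish. Your reachability and edge-counting argument via \cref{lem:collision_interp} simply supplies the rigorous justification that the paper leaves implicit.
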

\begin{proof}
All three nodes need a common ancestor, which means there are at least 2 collisions are present in $\anc{u, v, w}$. 
\end{proof}

\begin{claim} \label{clm:bounded-LCA}
The nodes $u$, $v$, and $w$ have at most two LCAs, with two LCAs only if $\mathsf{anc}(u, v, w)$ has three collisions. Furthermore, if there are two LCAs, then $\anc{u, v, w}$ is a tree pedigree.

% Furthermore, if there are two LCAs, then the ancestral pedigrees of the two LCAs both contain \textbf{zero} colllisions.
\end{claim}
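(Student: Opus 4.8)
The plan is to work entirely inside the ancestral subpedigree $\mathcal{T} = \anc{u,v,w}$, conditioning as in \cref{rmk:collision_conditioning} on the high-probability event (\cref{cor:few-colls}) that $\mathsf{coll}(u,v,w) \le 3$, and to convert every statement about LCAs into a statement about out-degrees in $\mathcal{T}$ so that \cref{lem:collision_interp} applies. The first fact I would record is that every node $z \in \mathsf{LCA}(u,v,w)$ has out-degree at least $2$ in $\mathcal{T}$: since $u,v,w$ are distinct and at the same level, none is an ancestor of another, so $z \notin \{u,v,w\}$ and $z$ has out-degree $\ge 1$; and if its out-degree were exactly $1$, its unique child in $\mathcal{T}$ would lie on every path from $z$ to $u,v,w$ and hence be a strictly lower common ancestor, contradicting $z \in \mathsf{LCA}$. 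Thus by \cref{lem:collision_interp} each LCA already contributes at least $1$ to $\mathsf{coll}(u,v,w)$, and any LCA that is a \emph{joint} LCA is, by \cref{lem:unique_joint_LCA}, the unique LCA.

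To bound the number of LCAs I would show that several LCAs force several independent out-degree-$\ge 2$ nodes. Suppose $z$ is an LCA of out-degree exactly $2$ that is not joint. Its two children in $\mathcal{T}$ together reach all of $u,v,w$, so by pigeonhole one child is an ancestor of at least two targets, say $v$ and $w$; the lowest node beneath that child that is a common ancestor of $v$ and $w$ then has out-degree $\ge 2$ (its paths to $v$ and $w$ must leave through different children) and is a \emph{strict} descendant of $z$. Applying this to three hypothetical LCAs $z_1,z_2,z_3$ (none joint, by the previous paragraph) produces for each a branching node strictly below it; since the $z_i$ are pairwise incomparable, each such node differs from every $z_j$, so $\mathcal{T}$ contains at least four distinct out-degree-$\ge 2$ nodes and \cref{lem:collision_interp} gives $\mathsf{coll}(u,v,w) \ge 4$, contradicting the conditioning. (If instead some $z_i$ has out-degree $\ge 3$ it already contributes $2$, and with two more LCAs the count is again $\ge 4$.) Hence there are at most two LCAs.

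The ``only if'' part is the same counting with two LCAs. If both $z_1,z_2$ have out-degree $2$, the branching-node construction on $z_1$ yields a third out-degree-$\ge 2$ node, strictly below $z_1$ and therefore distinct from both $z_1$ and $z_2$ (as $z_2$ is incomparable to $z_1$); if one of them has out-degree $\ge 3$ that alone supplies the extra unit. Either way $\mathsf{coll}(u,v,w) \ge 3$, and the conditioned bound $\mathsf{coll}(u,v,w) \le 3$ forces exactly three collisions. Here \cref{clm:bounded-collisions} provides the companion lower bound of $\ge 2$ collisions whenever a common ancestor exists, so together the two claims pin down the dichotomy (one LCA with two collisions, or two LCAs with three) that makes the four cases exhaustive.

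The delicate point is the final ``tree'' assertion, and I expect it to be the main obstacle. With three collisions $\mathcal{T}$ is not acyclic in the literal sense: the two LCAs $z_1,z_2$ together with two of their shared descendants already close up one undirected cycle. What I actually expect to prove and use is that this is the \emph{only} cycle and that it is localized at the two LCAs. Concretely, because the three collisions are exhausted by $z_1$, $z_2$, and the single branching node $p$ found above, \cref{lem:collision_interp} forces every other node of $\mathcal{T}$—in particular every strict ancestor of $z_1$ or of $z_2$—to have out-degree exactly $1$, and \cref{lem:bad_ancestral_paths} rules out any strict ancestor of an LCA reaching the targets along a second branch. Thus no further cycle can form among the strict ancestors, so they are tree-like and the sole cycle of $\mathcal{T}$ sits precisely at the two LCAs and their shared children. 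The real work, and the place where the literal wording of the claim has to be interpreted, is in exhausting the collision budget and invoking \cref{lem:bad_ancestral_paths} to confine the cycle to this ``diamond'' at the LCAs.
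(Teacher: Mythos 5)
Your proposal is correct, and on the two counting assertions it follows the paper's own route---collision accounting against the budget of \cref{rmk:collision_conditioning}---but executes it more rigorously. Where the paper's proof simply asserts that a second, incomparable LCA ``must'' cost one collision beyond the two from \cref{clm:bounded-collisions} (and a third yet another), you derive that cost from \cref{lem:collision_interp}: every LCA has out-degree at least $2$ in $\anc{u,v,w}$, and a non-joint LCA of out-degree exactly $2$ forces a branching node strictly below it, which incomparability keeps distinct from the other LCAs. The resulting totals ($\ge 3$ collisions for two LCAs, $\ge 4$ for three) are exactly the paper's, with the hand-waved step made precise.

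The genuine divergence is the ``tree'' clause, and there your treatment is sounder than the paper's. The paper disposes of it with ``this immediately yields the final part of the claim,'' but you are right that the clause is false as literally written: two incomparable LCAs always close an undirected cycle (if $\anc{u,v,w}$ were a tree, the unique undirected path between $z_1$ and $z_2$ would have a single lowest node, which would be a common ancestor of $u,v,w$ strictly below both, contradicting that they are LCAs), and this cycle is exactly the diamond drawn in \cref{fig:case1}. So with two LCAs, $\anc{u,v,w}$ is never a tree pedigree in the sense of \cref{def:tree_pedigree}. What the downstream argument (\cref{lem:bLCA-case1}) actually uses is your corrected statement: the three-collision budget is exhausted by $z_1$, $z_2$, and the single merge node $p$, hence every other node of $\anc{u,v,w}$ has out-degree at most one, the diamond is the unique undirected cycle, the subpedigree of paths from $u,v,w$ up to any one fixed LCA is a tree, and each LCA's own ancestral pedigree is collision-free. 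One simplification to your sketch: \cref{lem:bad_ancestral_paths} is not needed---in a DAG every undirected cycle contains a local maximum, i.e.\ a node of out-degree $\ge 2$ on the cycle, and the budget confines such nodes to $\{z_1, z_2, p\}$, which already localizes all cycles to the diamond.
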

\begin{proof}
By the previous claim, creating a single LCA for three nodes requires 2 collisions in $\anc{u, v, w}$. By definition, one LCA cannot be an ancestor of another LCA. This means there must be at least one more collision in $\anc{u, v, w}$ to create the second LCA, bringing the total number of collisions required in $\anc{u, v, w}$ to three. This immediately yields the final part of the claim by \cref{rmk:collision_conditioning}. 

To establish that there are at most two LCAs, suppose we add a third LCA. Then by the same argument, this LCA cannot be an ancestor of either of the two other LCAs, and so there must be another collision to explain it. This leads to four collisions among the ancestors, which we have ruled out.
\end{proof}

We now upper bound the expected overlap between $u$, $v$ and $w$ by doing the above casework on the structure of their ancestral pedigrees. We simply upper bound the expected overlap, relying on the independence of inheritance in the different blocks so that we can apply a Chernoff--Hoeffding bound.

\begin{lemma}[Case 1: exactly two LCAs] \label{lem:bLCA-case1}
%Fix $\gamma > 0$. 
Suppose that $u$, $v$, and $w$ have exactly two LCAs. 
Then the expected fraction of mutual overlap is at most $1/8$.
%Then the maximum fraction of symbols they can share is $1/8 + \gamma$.
\end{lemma}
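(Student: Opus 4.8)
The plan is to bound the per-block coincidence probability $\Pr[\sigma_u(b) \cap \sigma_v(b) \cap \sigma_w(b) \neq \emptyset]$ by $1/8$ for a single block $b$; since the blocks are i.i.d.\ and identically structured, this immediately gives that the expected fraction of mutual overlaps is at most $1/8$. Throughout I condition on the event of \cref{rmk:collision_conditioning} that $\anc{u,v,w}$ has at most $3$ collisions, so that by \cref{clm:bounded-LCA} there are exactly two lowest common ancestors, which I call $z_1$ and $z_2$. The first thing I would establish is a structural fact: because the pedigree is graded and $u,v,w$ are not mutually siblings, neither LCA can sit at the parent level of the leaves. Indeed, if some $z_i$ were a parent of all three then $u,v,w$ would be siblings, and a node reaching all three leaves through a single child cannot be an LCA. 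Hence every directed path from $z_i$ down to each of $u,v,w$ has length at least $2$, and since $z_i$ is lowest, the three leaves are reached through at least two distinct children of $z_i$, so one child of $z_i$ is an ancestor of at least two leaves.

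The first main step is a reduction: any symbol witnessing an overlap in block $b$ can be assumed to be present at $z_1$ or at $z_2$. Such a symbol originates at a single founder and therefore sits at some common ancestor of $u,v,w$; every common ancestor is an ancestor of one of the two LCAs, and by \cref{lem:bad_ancestral_paths} a symbol originating strictly above an LCA $z_i$ can reach at least one leaf only by passing through $z_i$, so the witnessing symbol is in fact present at $z_i$ and, by the same funneling applied at $z_i$, is transmitted from $z_i$ to all three leaves. Writing $A_i$ for the event that some symbol in block $b$ of $z_i$ is inherited by all of $u,v,w$ along directed paths out of $z_i$, this yields the containment $\{\text{overlap in }b\} \subseteq A_1 \cup A_2$, and hence $\Pr[\text{overlap in }b] \le \Pr[A_1] + \Pr[A_2]$.

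The second step is to show $\Pr[A_i] \le 1/16$. Since $z_i$ is a coupled node it carries at most two symbols in block $b$, so by a union bound it suffices to bound by $1/32$ the probability that one fixed symbol of $z_i$ reaches all three leaves. Here I use the individual-level inheritance rule exactly as in the proof of \cref{lem:siblings_symbols}: a fixed symbol is transmitted across each edge with probability $1/2$, independently across blocks and across distinct children. Because the leaves split into a group reached through one child of $z_i$ and a group reached through another, and all leaves lie at distance at least $2$, the survival probability factors across two independent branches; the worst (largest) case is a $2$–$1$ split with all leaf-distances equal to $2$, giving a factor $(1/2)^3 = 1/8$ for the branch feeding two leaves and $(1/2)^2 = 1/4$ for the branch feeding the third, a product of $1/32$. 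Multiplying by the two symbols of $z_i$ gives $\Pr[A_i] \le 1/16$, and summing over $i \in \{1,2\}$ gives $\Pr[\text{overlap in }b] \le 1/8$, which is the claimed bound on the expected overlap fraction.

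The step I expect to be most delicate is the case analysis underlying the per-LCA bound $\Pr[A_i] \le 1/16$: one must verify that over all admissible branching patterns below $z_i$ consistent with the grading, with $z_i$ being lowest, and with the global three-collision budget (whether or not the two LCAs share the branch that feeds the doubly-covered pair of leaves), no configuration transmits a fixed symbol to all three leaves with probability exceeding $1/32$. The grading hypothesis is precisely what excludes the dangerous short-distance configuration in which a leaf is a direct child of an LCA, which would otherwise inflate the coincidence probability up to the sibling value $1/4$. Making this exclusion airtight, together with the rigorous justification that the funneling reduction of the first step loses nothing even though $\anc{u,v,w}$ carries its single cycle, is the crux of the argument.
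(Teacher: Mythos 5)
Your overall route matches the paper's: reduce to a per-block bound, funnel every witnessing symbol through one of the two LCAs, count at least five edges in the tree below each LCA (using gradedness and non-siblinghood to force all root-to-leaf distances to be at least $2$), obtain $(1/2)^5$ per symbol, and finish with a union bound over two symbols per LCA and two LCAs. However, there is a genuine gap: you never establish that each LCA carries two \emph{distinct} symbols in block $b$, and your key assertion that ``a fixed symbol is transmitted across each edge with probability $1/2$'' is false without it. If the two individuals of an LCA $z_i$ carry the \emph{same} value $x$ in block $b$ (which requires a collision among $z_i$'s own ancestors, a configuration your argument never rules out), then every child of $z_i$ inherits $x$ with probability $1$, not $1/2$, so the probability that $x$ reaches all three leaves through the minimal five-edge tree is $(1/2)^3 = 1/8$ rather than your claimed $2\cdot(1/2)^5 = 1/16$; two such LCAs would give a total of $1/4$ and the lemma would fail. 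The same issue undermines both your union bound over ``at most two symbols'' and your funneling step: when the two slots of $z_i$ hold equal values, the three leaves may trace their copies of $x$ back to \emph{different} individuals of $z_i$, so the overlap event is not contained in the union, over the two symbol instances at $z_i$, of the events that a single instance propagates to all three leaves.

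The missing observation---and it is the first sentence of the paper's proof---is that in Case 1 the three-collision budget of \cref{rmk:collision_conditioning} is exhausted by the two LCAs and the single branch point below them (each contributing one collision via \cref{lem:collision_interp}), so the ancestral pedigree of each LCA is collision-free; by the infinite-alphabet assumption this forces the two individuals of each LCA to hold distinct symbols in every block. With that fact in hand, values and instances coincide at the LCAs, every leaf's copy of a witnessing symbol traces back to the same individual of the same LCA, and your computation goes through verbatim. This is a one-line repair given the collision budget you already invoke, but as written your proof asserts a transmission probability that the hypotheses alone do not guarantee, and the scenario it overlooks is exactly the inbreeding pathology that this sequence of lemmas is designed to control.
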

\begin{proof}
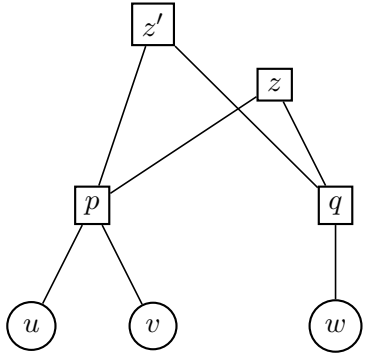
\begin{figure}
    \centering
    \begin{tikzpicture}[scale=0.8,
        > = stealth,
        auto,
        node distance = 3cm, 
        semithick  
    ]
    
	\tikzset{rect state/.style={draw,rectangle,thick,minimum size=4mm}}
	\tikzset{circ state/.style={draw,circle,thick,minimum size=4mm}}
	\tikzset{dashed state/.style={draw,rectangle,dashed,thick,minimum size=4mm}}
	\node[rect state] (Z') at (-1,5) {$z'$};
	\node[rect state] (Z) at (1,4) {$z$};
	\node[rect state] (P) at (-2,2) {$p$};
	\node[rect state] (Q) at (2,2) {$q$};
	\node[circ state] (u) at (-3,0) {$u$};
	\node[circ state] (v) at (-1,0) {$v$};
	\node[circ state] (w) at (2,0) {$w$};
	
	\path (Z') edge (P);
	\path (Z') edge (Q);
	\path (Z) edge (P);
	\path (Z) edge (Q);
	\path (P) edge (u);
	\path (P) edge (v);
	\path (Q) edge (w);
    \end{tikzpicture}
    \caption{The topologies of \cref{lem:bLCA-case1} with two LCAs. Others are obtained by swapping the roles of $u,v,w$.}
    \label{fig:case1}
\end{figure}
\cref{fig:case1} illustrates the topology of interest.
First we note that neither of the LCAs can have repeated symbols, since their ancestral pedigrees contain no collisions. 
Consider the ancestral pedigree from $u$, $v$, and $w$ up to any one particular LCA, noting that this pedigree is a tree by \cref{clm:bounded-LCA}. Any configuration containing $u$, $v$, $w$ and their ancestors leading up to that LCA has at least 5 edges, since $u, v, w$ are not mutual siblings. Therefore, the probability that a single symbol propagates from that LCA to all of $u$, $v$, and $w$ is $\leq (1/2)^5 = 1/32$, which yields an expected 1/16 fraction of overlap since there are $2|B|$ symbols at the LCA (since it is a coupled node).
Since there are two such LCAs, the expectation is at most $1/8$.
\end{proof}

% \gnote{the proofs from here through the ``Unique witness'' lemma in case 4 have been rewritten. Please check}

In the remaining cases, we assume there is exactly one LCA. Note that any common symbols across $u$, $v$, and $w$ must be present in this LCA---if $u$, $v$, and $w$ inherit a symbol that is not present in this LCA, then by tracing their paths of inheritance for the symbol we can find another LCA. However, this does not guarantee that \emph{all} common symbols in $u$, $v$, and $w$ can be traced back to inheritance from the LCA--- if there is inbreeding, some nodes in $\{u,v,w\}$ can potentially inherit a symbol via an ancestor of the LCA through a path does not go through the LCA, while the rest inherit it from the LCA. 

\begin{lemma}[Case 2: one LCA with cycle above] \label{lem:bLCA-case2}
Suppose that $u$, $v$, and $w$ have exactly one LCA $z$. Furthermore, this LCA has at least one collision in its ancestral pedigree. Then the fraction of mutual overlap is at most $1/8$ in expectation.
\end{lemma}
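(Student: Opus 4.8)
The plan is to bound the expected number of blocks in which $u$, $v$, and $w$ mutually overlap by $B/8$ and then divide by $B$; equivalently, I will show that in any fixed block $b$ the probability that $\sigma_u(b)$, $\sigma_v(b)$, $\sigma_w(b)$ share a symbol is at most $1/8$. Since inheritance acts independently across blocks, it suffices to analyze a single block and then use linearity of expectation.

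First, conditioning (via \cref{rmk:collision_conditioning}) on $\mathsf{anc}(u,v,w)$ having at most $3$ collisions, I would pin down the topology. A single LCA already forces at least $2$ collisions (\cref{clm:bounded-collisions}), and the hypothesized cycle above $z$ contributes at least one more, so we are exactly in the $3$-collision regime: one collision lives strictly above $z$, and the remaining two lie in the funnel from $u,v,w$ up to $z$, which is therefore a tree. Two structural consequences are worth extracting. Since $u,v,w$ are not mutual siblings and all descend from $z$, the node $z$ sits at least two levels above them (if $z$ were one level up, all three would be its children, hence siblings), so the tree below $z$ has at least $5$ edges. Moreover there are no ``bypass'' paths: because the extra cycle sits entirely among the ancestors of $z$, any path from an ancestor of $z$ to one of $u,v,w$ that avoided $z$ would reconverge with a $z$-path below $z$ and produce a cycle through a descendant of $z$, contradicting the hypothesis (this is exactly what \cref{lem:bad_ancestral_paths} supplies for each of $u,v,w$). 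Hence every symbol shared by $u,v,w$ is present at $z$, and it reaches all three only through the below-$z$ tree; as $z$ is a coupled node, $\sigma_z(b)$ (\cref{def:diploid}) contains at most two symbols.

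The key step, and the reason the bound is $1/8$ rather than the $1/16$ one would naively expect, is the effect of the cycle above $z$ on $z$'s own two symbols. Because the two individuals comprising $z$ now have a common ancestor, they may carry the \emph{same} symbol $s$ in block $b$; when this happens every child of $z$ inherits $s$ with probability $1$ (copying either parent yields $s$), so the out-edges of $z$ become ``free.'' I would then split into two cases. If $z$ is not doubled in block $b$, then $\sigma_z(b)$ has two distinct symbols, each of which must traverse every edge of the below-$z$ tree with an independent factor $1/2$; with at least $5$ edges this gives overlap probability at most $2\cdot(1/2)^5 = 1/16$. If $z$ is doubled, there is a single symbol $s$ whose propagation is free along the out-edges of $z$, but the remaining steps still require each of the three leaves $u,v,w$ to copy a specific parent at its final level, contributing three independent factors of $1/2$, so $s$ reaches all three with probability at most $(1/2)^3 = 1/8$. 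In either case the per-block overlap probability is at most $1/8$, and summing over the $B$ blocks and dividing by $B$ yields the claimed bound in expectation.

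The main obstacle I anticipate is making the doubling accounting and the edge count airtight. One must verify that, even when the out-edges of $z$ are free, at least three genuinely ``costly'' inheritance steps remain, one per leaf, no matter how the below-$z$ tree branches (whether $z$ has out-degree two with a shared child or out-degree three as in the cousin configuration). Crucially, the ``cycle entirely above $z$'' hypothesis forces the below-$z$ subpedigree to be a tree, so no node strictly below $z$ can itself be doubled and manufacture further free edges without spending collisions the $3$-collision budget (\cref{cor:few-colls}) does not permit; configurations with a cycle reaching below $z$ instead fall under the later cycle-below-the-LCA case. The no-bypass reduction must likewise be argued with care so that a symbol cannot enter some of $u,v,w$ from above $z$ while bypassing the $z$ bottleneck, which is precisely where \cref{lem:bad_ancestral_paths} is doing the work.
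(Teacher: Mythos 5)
Your proof is correct and takes essentially the same route as the paper's: both use the $3$-collision budget to force every shared symbol through $z$ and to ensure the parent couples of $u$, $v$, $w$ are non-doubled (each contains an individual unrelated to $z$), so the three final parent-to-child inheritance steps contribute independent factors of $1/2$ and yield the bound $(1/2)^3 = 1/8$; your doubled/non-doubled case split on $z$ is a refinement that the paper sidesteps by simply bounding the upstream probability that the parents share a symbol by $1$ (``even if the parents were to get $100\%$ overlap''). One minor attribution slip: \cref{lem:bad_ancestral_paths} only guarantees the through-$z$ property for \emph{some} $x \in \{u,v,w\}$, not for each, but your direct collision-counting argument (a bypass path reconverging below $z$ would exceed the budget) is exactly the paper's justification and does cover all three nodes.
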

\begin{proof}
% \begin{figure}
%     \centering
%     \begin{tikzpicture}[scale=0.8,
%         > = stealth,
%         auto,
%         node distance = 3cm, 
%         semithick  
%     ]
    
% 	\tikzset{rect state/.style={draw,rectangle,thick,minimum size=4mm}}
% 	\tikzset{circ state/.style={draw,circle,thick,minimum size=4mm}}
% 	\tikzset{dashed state/.style={draw,rectangle,dashed,thick,minimum size=4mm}}
% 	\node[rect state] (A) at (-1,5) {};
% 	\node[rect state] (B) at (1,5) {};
% 	\node[rect state] (C) at (0,6) {};
% 	\node[rect state] (Z) at (0,4) {$z : \{x,x\}$};
% 	\node[rect state] (P) at (-2,2) {$p$};
% 	\node[rect state] (Q) at (2,2) {$q$};
% 	\node[circ state] (u) at (-3,0) {$u$};
% 	\node[circ state] (v) at (-1,0) {$v$};
% 	\node[circ state] (w) at (2,0) {$w$};
	
% 	\path (C) edge (A);
% 	\path (C) edge (B);
% 	\path (B) edge (Z);
% 	\path (A) edge (Z);
% 	\path (Z) edge (P);
% 	\path (P) edge (u);
% 	\path (P) edge (v);
% 	\path (Z) edge (Q);
% 	\path (Q) edge (w);
%     \end{tikzpicture}
%     \caption{Examples of structures from \cref{lem:bLCA-case2}.
%     Each edge represents a path of length $> 1$.}
%     \label{fig:case2}
% \end{figure}
We know that $u$, $v$, and $w$, must have at least two distinct parents between them that are connected to $z$ (else $z$ would be their parent). This means there are at least two edges in the graph between $z$ and the parents of $u$, $v$, and $w$, and at least three edges between $u$, $v$, and $w$ and their respective parents. 

Since we know there are at most three collisions among the ancestors of $u$, $v$, and $w$, there can be only one collision in the ancestral pedigree of $z$, and the presence of this collision means there are no other collisions in $\anc{u,v,w}$. Therefore, each of the parent couples of $u$, $v$, and $w$ have an individual that is unrelated to $z$, and so there are no repeated symbols within any of the parent couples. So even if the parents were to get $100\%$ overlap in the blocks due to inheritance from $z$, it holds that $u$, $v$, and $w$ inherit at most $1/8$ fraction of these blocks on expectation.

Finally, all common symbols between $u$, $v$, and $w$ must have been inherited from $z$--- if a common symbol was instead inherited by some $x \in \{u,v,w\}$ from some ancestor of $z$, this would create a fourth collision in $\mathsf{anc}(u,v,w)$.
\end{proof}

\begin{lemma}[Case 3: one LCA and $\anc{u, v, w}$ is a tree] \label{lem:bLCA-case3}
Suppose $u$, $v$, and $w$ have exactly one LCA and $\anc{u,v,w}$ is a tree.
% Then with probability $\exp(-B)$, $u$, $v$ and $w$ have at most $1/16 + \gamma$ overlap.
%, this LCA $z$ has no collisions in its ancestral pedigree, and the induced subpedigree that connects the LCA to $u$, $v$, and $w$ is a tree.
Then the fraction of mutual overlap is at most $1/16$ in expectation.
\end{lemma}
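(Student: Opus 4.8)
The plan is to bound the expected overlap by a union bound over the symbols sitting at the unique LCA $z$, where the per-symbol propagation probability is controlled by counting the edges of the subtree that connects $z$ to $u,v,w$. By hypothesis $z$ is a coupled node whose ancestral pedigree $\anc{z}$ is a tree, so $z$ carries $2B$ \emph{distinct} symbols (one per block per member of the couple, with no repeats since $\anc{z}$ contains no collisions). As noted in the paragraph preceding \cref{lem:bLCA-case2}, in the single-LCA tree case every symbol common to $u,v,w$ must already be present at $z$; hence a block $b$ is overlapping only if one of the two symbols of $z$ in block $b$ is inherited by all three of $u,v,w$.

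First I would reduce the propagation probability to an edge count. Let $\mathcal{S}$ be the subtree of $\anc{u,v,w}$ spanning $z,u,v,w$, and let $E$ be its number of edges. Inheritance along each edge is an independent fair coin (the relevant child individual copies the specific parent carrying the symbol with probability $\tfrac12$, independently across the edges and across the branchings of $\mathcal{S}$), so a fixed symbol of $z$ reaches all of $u,v,w$ with probability exactly $(1/2)^{E}$. By linearity of expectation over the $2B$ symbols at $z$, the expected number of overlapping blocks is at most $2B\cdot(1/2)^{E}$, and hence the expected fraction of overlap is at most $2\cdot(1/2)^{E}$.

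The crux is therefore the structural lower bound $E\ge 5$. Since $\cP$ is graded and $z$ is an ancestor of the same-level nodes $u,v,w$, all three downward paths from $z$ share a common length $h$. If $h=1$ then $z$ is a common parent of $u,v,w$, which would force the three to be mutual siblings and contradict the hypothesis; hence $h\ge 2$. Because $z$ is the unique LCA, the three paths cannot all enter the same child of $z$ (otherwise that child would be a lower common ancestor), so they split at $z$ into at least two children. A short combinatorial check then gives $E\ge 5$: the configuration $z\to c_1,\ z\to c_2,\ c_1\to u,\ c_1\to v,\ c_2\to w$ realizes $E=5$, whereas $E=4$ would leave only one internal node at depth $1$, which would then have to be the parent of all three leaves at depth $2$ and hence a lower common ancestor, a contradiction. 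This is exactly the ``at least $5$ edges'' count already invoked in the proof of \cref{lem:bLCA-case1}.

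Combining the two steps yields expected overlap at most $2\cdot(1/2)^{5}=1/16$, as claimed. The main obstacle is the edge-count $E\ge 5$: one must use gradedness to pin the three path lengths to a common $h\ge 2$, and the single-LCA and non-sibling hypotheses to prevent the paths from collapsing further, which is precisely where the tree assumption (ruling out shortcuts through cycles) is essential. Once $E\ge 5$ is established, the probabilistic estimate is immediate from the independence of inheritance across distinct blocks and distinct edges.
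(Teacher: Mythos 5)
Your proposal is correct and follows essentially the same route as the paper's proof: both reduce the expected overlap to $2\cdot(1/2)^{E}$ via the distinctness of the two symbols per block at $z$ and the per-edge probability $1/2$ along the spanning tree, and both conclude via the edge count $E \geq 5$. The only difference is that you spell out the combinatorial argument for $E \geq 5$ (gradedness forcing a common path length $h \geq 2$, and uniqueness of the LCA forcing a split at $z$), which the paper justifies only by reference to \cref{fig:case3}.
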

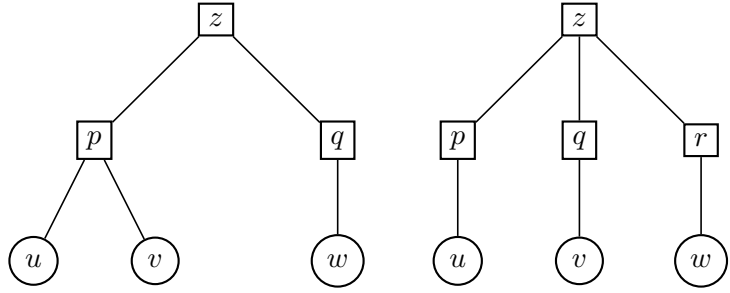
\begin{figure}
    \centering
    \begin{tikzpicture}[scale=0.8,
        > = stealth,
        auto,
        node distance = 3cm, 
        semithick  
    ]
    
	\tikzset{rect state/.style={draw,rectangle,thick,minimum size=4mm}}
	\tikzset{circ state/.style={draw,circle,thick,minimum size=4mm}}
	\tikzset{dashed state/.style={draw,rectangle,dashed,thick,minimum size=4mm}}
	\node[rect state] (Z) at (0,4) {$z$};
	\node[rect state] (P) at (-2,2) {$p$};
	\node[rect state] (Q) at (2,2) {$q$};
	\node[circ state] (u) at (-3,0) {$u$};
	\node[circ state] (v) at (-1,0) {$v$};
	\node[circ state] (w) at (2,0) {$w$};
	
	\path (Z) edge (P);
	\path (P) edge (u);
	\path (P) edge (v);
	\path (Z) edge (Q);
	\path (Q) edge (w);
    \end{tikzpicture}
    \qquad
    \begin{tikzpicture}[scale=0.8,
        > = stealth,
        auto,
        node distance = 3cm, 
        semithick  
    ]
    
	\tikzset{rect state/.style={draw,rectangle,thick,minimum size=4mm}}
	\tikzset{circ state/.style={draw,circle,thick,minimum size=4mm}}
	\tikzset{dashed state/.style={draw,rectangle,dashed,thick,minimum size=4mm}}
	\node[rect state] (Z) at (0,4) {$z$};
	\node[rect state] (P) at (-2,2) {$p$};
	\node[rect state] (Q) at (0,2) {$q$};
	\node[rect state] (R) at (2,2) {$r$};
	\node[circ state] (u) at (-2,0) {$u$};
	\node[circ state] (v) at (0,0) {$v$};
	\node[circ state] (w) at (2,0) {$w$};
	
	\path (Z) edge (P);
	\path (Z) edge (Q);
	\path (Z) edge (R);
	\path (P) edge (u);
	\path (Q) edge (v);
	\path (R) edge (w);
    \end{tikzpicture}
    \caption{Exhaustive list of topologies from \cref{lem:bLCA-case3}, up to re-labelling of $u,v,w$.
    Each edge represents a path of length $> 1$.}
    \label{fig:case3}
\end{figure}
\begin{proof}
The lack of any cycles in $\anc{u,v,w}$ means that all inheritance of common symbols comes from the lone LCA $z$. Any such union of paths from $z$ to $u$, $v$ and $w$ forms a directed tree with at least five edges; see \cref{fig:case3}. In addition, $z$ has two distinct symbols in every block. Therefore, for any particular symbol the probability that all three of $u,v,w$ inherit it is $\leq (1/2)^5 = 1/32$, which yields an expected fraction of at most $1/16$ overlapping blocks. 
\end{proof}

The final case is the most complicated one to analyze. 

% Finally, the next lemma concludes the proof of~\cref{lem:nonsiblings_symbols}.
% Its proof is also the longest out of the four.

\begin{lemma}[Case 4: one LCA with cycle not completely above]
\label{lem:lca-not-tree}
% Suppose $u$, $v$, and $w$ have exactly one LCA, this LCA $z$ has no collisions in its ancestral pedigree, and the induced subpedigree that connects the LCA to $u$, $v$, and $w$ is \emph{not} a tree. 

Suppose $u$, $v$, and $w$ have exactly one LCA and $\anc{u, v, w}$ contains a cycle that does not lie completely above $z = \mathsf{LCA}(u, v, w)$. 
% this LCA $z$ has no collisions in its ancestral pedigree, and the induced subpedigree that connects the LCA to $u$, $v$, and $w$ is \emph{not} a tree.
% %Then with probability $\exp(-B)$, $u$, $v$ and $w$ have at most $3/16 + \gamma$ overlap.
Then the fraction of mutual overlap is at most $3/16$ in expectation.
\end{lemma}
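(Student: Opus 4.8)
The plan is to reduce, exactly as in \cref{lem:bLCA-case3}, the expected fraction of mutual overlap to the probability that a single symbol sitting at the unique LCA $z$ is transmitted to all three of $u,v,w$. First I would argue that every symbol shared by $u,v,w$ must already be present in the block of $z$: no strict descendant of $z$ is a common ancestor of all three (else $z$ would not be the \emph{lowest} common ancestor), and any symbol originating strictly above $z$ can reach at least one $x \in \{u,v,w\}$ only through $z$ by \cref{lem:bad_ancestral_paths}, hence already lies in $z$'s block. Moreover, since producing a single LCA already costs at least two collisions (\cref{clm:bounded-LCA}) and the cycle below $z$ costs a third, the conditioning of \cref{rmk:collision_conditioning} leaves no collision strictly above $z$; thus the two individuals of the couple $z$ are unrelated and $z$ carries two distinct symbols in each block. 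A union bound over these two symbols gives
\[ \E[\text{fraction of mutual overlap}] \leq 2\,P, \qquad P := \Pr[\text{a fixed symbol of } z \text{ reaches all of } u, v, w], \]
so it suffices to show $P \le 3/32$.

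Next I would analyze the descendant structure $\cT = \desc{z} \cap \anc{u,v,w}$, which carries all three collisions. Because $u,v,w$ lie at a common level and are not mutual siblings, every directed path from $z$ to one of them has the same length $\ell$, and $\ell \ge 2$ (if $\ell = 1$ the three nodes would be children of $z$, hence siblings). The cycle contributes a single \emph{merge} --- a node of $\cT$ reached from $z$ by two internally disjoint paths --- and I would enumerate the possible topologies according to (i) the level of the merge relative to $z$ and to the parents of $u,v,w$, and (ii) which of $u,v,w$ lie below the merge, using \cref{lem:collision_interp} to check that each candidate spends exactly the three-collision budget and keeps $z$ as the unique LCA (\cref{lem:unique_joint_LCA}). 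This reduces the analysis to a short finite list of configurations.

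For each topology I would bound $P$ by conditioning on which members of the relevant couples actually carry the symbol and then summing over the independent per-edge transmission events; the dependence introduced by the cycle --- a node inheriting the symbol along either of two paths --- is handled by conditioning on the symbol content of the merge couple, exactly as in the computation for \cref{fig:example-2b}. The extremal configuration is precisely the sib-marriage analogue of \cref{fig:example-2b} occurring just below $z$: a couple $c$ whose two individuals are both children of $z$ (a double edge) is the common parent of two of the nodes, say $u,v$, while the third node $w$ descends through a separate child of $z$. A direct computation then gives $\Pr[s \in u, s \in v] = \tfrac14 + 2\cdot\tfrac1{16} = \tfrac38$ and $\Pr[s \in w] = \tfrac14$, whence $P = \tfrac38 \cdot \tfrac14 = \tfrac{3}{32}$ and $\E[\text{overlap}] \le 2\cdot \tfrac{3}{32} = \tfrac{3}{16}$; every other topology yields a strictly smaller value because its paths are longer or its merge provides less redundancy.

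Finally, the expectation bound is upgraded to a high-probability statement per triple via the Chernoff--Hoeffding inequality (overlaps in distinct blocks are independent given the topology), and a union bound over the $O(\alpha^{3T}N^3)$ triples together with the $O(1/N)$ failure probability of the few-collisions conditioning (\cref{cor:few-colls}) matches the probability claimed in \cref{lem:nonsiblings_symbols}. I expect the enumeration in the second step to be the main obstacle: one must verify that no configuration consistent with a single LCA, a below-$z$ cycle, and the three-collision budget beats the sib-marriage case, including the borderline configurations where the cycle straddles the level of $z$ and where the merge coincides with a parent of one of $u,v,w$.
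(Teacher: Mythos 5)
Your skeleton matches the paper's in spirit: trace every three-way shared symbol back to the unique LCA $z$ (via \cref{lem:bad_ancestral_paths}), use that $z$ carries two distinct symbols per block, bound the per-block overlap probability by conditioning on the block content of the inbred ``merge'' couple, and finish with Chernoff--Hoeffding and a union bound. You also correctly identify the tight configuration: the paper's Subcase~2 bound $2(\tfrac14\cdot 1)(\tfrac12\cdot\tfrac12)+2(\tfrac12\cdot\tfrac14)(\tfrac12\cdot\tfrac12)=\tfrac{3}{16}$ is exactly your $2\cdot\tfrac38\cdot\tfrac14$ for the sib-couple joined to $z$ by a double edge. But there are two genuine gaps. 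The first is a false structural premise underlying your enumeration: you take $\cT=\desc{z}\cap\anc{u,v,w}$ to ``carry all three collisions'' with the merge ``reached from $z$ by two internally disjoint paths.'' What is actually guaranteed (\cref{lem:witness}) is only that the \emph{lowest} point of the unique cycle lies strictly below $z$; the \emph{top} of the cycle can be a strict ancestor of $z$, with one side of the cycle passing through $z$ and the other bypassing it --- this straddling picture is precisely what \cref{fig:subcase2-tree} depicts. In that case one of the three collisions sits strictly above $z$ (so it is not in $\cT$), the merge $g$ has only one path from $z$, and --- the part that actually bites --- $g$ can inherit $z$'s ancestral symbol through the bypass, i.e.\ from outside $\desc{z}$. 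Then ``summing independent per-edge transmission events within $\cT$'' undercounts $P$, and $z$'s two symbols are no longer exchangeable (the one shared with $z$'s ancestors has strictly more routes), so your reduction $\E\le 2P$ must take a maximum and $P$ cannot be computed inside $\cT$. You defer this as a ``borderline configuration,'' but it is the generic case the paper analyzes; the fix is the paper's device of bounding $\Pr[\sigma(g)=\{x,x\}]\le\tfrac14$ and $\Pr[\sigma(g)=\{x,y\}]\le\tfrac12$ by counting both incoming routes to $g$ regardless of where the second route originates.

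The second gap is that the extremality claim --- ``every other topology yields a strictly smaller value because its paths are longer or its merge provides less redundancy'' --- is the entire content of the lemma, and it is asserted with a one-line heuristic rather than proven. It is not a priori clear, for instance, that putting the double edge on $w$'s branch, or making \emph{both} individuals of a single node $u$ descend from $z$ through two distinct children, loses to the sib-marriage case; each such candidate requires its own computation (they give $\tfrac18\cdot\tfrac12$ and $\tfrac{7}{16}\cdot\tfrac{1}{16}$ respectively, both below $\tfrac{3}{32}$, but that must be checked, not presumed). The paper deliberately avoids an exact enumeration for this reason: it splits on whether some pairwise LCA differs from $\mathsf{LCA}(u,v,w)$ and on which branch carries the witness, and uses coarse bounds valid for whole families of topologies at once (e.g.\ when $\anc{v,w}$ is a tree, it bounds the three-way overlap by the two-way overlap of $v,w$, giving $\tfrac18$ uniformly). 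If you insist on exact enumeration, you must produce the finite list --- including the straddling configurations above --- and verify every entry; as written, the proposal postpones exactly the step where all the work lies.
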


As an aid in proving \cref{lem:lca-not-tree}, it is helpful to first identify the \textbf{``most recent'' inbred} node. We make this notion precise now.

\begin{definition}[Witness]
    % Consider the subpedigree $\mathcal{T} = \anc{u,v,w}$ formed by the ancestors of $u, v$, and $w$. 
    We call a node $g \in \anc{u, v, w}$ a \emph{witness to inbreeding} or simply a \emph{witness} if $g$ is the lowest node in $\anc{u, v, w}$ that is part of an undirected cycle.  
\end{definition}
% \ynote{added the clarification "undirected"}

\begin{lemma}[Unique witness]
    \label{lem:witness}
    Under the conditions of~\cref{lem:lca-not-tree}, there exists a unique witness in $\anc{u, v, w}$. Moreover, this witness lies strictly below the LCA $z$.
\end{lemma}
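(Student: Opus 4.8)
The plan is to first reduce the statement to a fact about a single, canonically determined cycle, and then to locate that cycle relative to $z$.

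First I would observe that $\mathcal{T} := \anc{u,v,w}$ is connected (every node of $\mathcal{T}$ is an ancestor of at least one of $u,v,w$, and $u,v,w$ are mutually joined through their common ancestor $z$), and I would compute its first Betti number. Since every non-founder coupled node has indegree exactly $2$ and the only outdegree-$0$ nodes of $\mathcal{T}$ are $u,v,w$, writing $k_j$ for the number of nodes of outdegree $j$ gives $|E| = \sum_j j k_j$ and $|V| = 3 + \sum_{j \geq 1} k_j$, so that the cyclomatic number equals $|E| - |V| + 1 = \sum_{j\geq 2}(j-1)k_j - 2 = \mathsf{coll}(u,v,w) - 2$ by \cref{lem:collision_interp}. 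In Case 4 a cycle is present, so the cyclomatic number is at least $1$ and hence $\mathsf{coll}(u,v,w) \geq 3$; combined with the conditioning $\mathsf{coll} \leq 3$ from \cref{rmk:collision_conditioning} this forces $\mathsf{coll}(u,v,w) = 3$ and cyclomatic number exactly $1$. A connected graph of cyclomatic number $1$ has a unique simple cycle $\mathcal{C}$, and the nodes lying on some undirected cycle are exactly the nodes of $\mathcal{C}$. This already yields existence of a witness and reduces everything to proving that $\mathcal{C}$ has a single lowest node.

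Next I would show that $\mathcal{C}$ has exactly one local minimum (valley). Traversing $\mathcal{C}$ in the graded DAG, the number of valleys equals the number of peaks (local maxima), say $p \geq 1$, and each peak has $\mathcal{T}$-outdegree $\geq 2$, hence is a distinct collision site contributing at least one collision by \cref{lem:collision_interp}. Separately, $z$ is itself a collision site: if $z$ had outdegree $1$ in $\mathcal{T}$ its unique child would be a common ancestor of $u,v,w$ strictly below $z$, contradicting that $z$ is an LCA; and when $\mathrm{outdeg}_{\mathcal{T}}(z)=2$ the single-LCA requirement forces a further branching node strictly below $z$ (one child of $z$ must be an ancestor of two of $u,v,w$, and that lineage must split again), contributing yet another collision. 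Assembling these contributions against the exact budget $\mathsf{coll}(u,v,w)=3$ shows that $p \geq 2$ would force at least four collisions, a contradiction. Hence $p = 1$, so $\mathcal{C}$ is a ``diamond'' with a unique valley $g$ and a unique peak, and $g$ is the unique lowest node of $\mathcal{C}$, i.e. the unique witness.

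Finally I would locate $g$ strictly below $z$. Since $p=1$, the two upward cycle-paths leaving $g$ reconverge above it, so every node of $\mathcal{C}$ lies in $\{g\} \cup \anc{g}$. If $g \in \anc{z}$ (that is, $g = z$ or $g$ is a strict ancestor of $z$), then $\{g\}\cup\anc{g} \subseteq \anc{z}$ and the entire cycle would lie above $z$, contradicting the Case 4 hypothesis that $\mathcal{C}$ is not completely above $z$; thus $g \notin \anc{z}$. To upgrade this to $g \in \desc{z}\setminus\{z\}$ I would rule out the remaining possibility that $g$ is incomparable to $z$ (the cycle ``straddling'' $z$): in that case its peak would be a strict ancestor of $z$ from which some $x \in \{u,v,w\}$ is reached along a cycle-path avoiding $z$, contradicting \cref{lem:bad_ancestral_paths}, or else creating a fourth collision. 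Therefore $g$ is strictly below $z$, completing the proof.

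The hard part is the collision bookkeeping in the middle step: one must verify that the collisions charged to the peaks of $\mathcal{C}$, to $z$, and to the forced sub-branching below $z$ are genuinely \emph{distinct} nodes, so that their total exceeds the budget of $3$ precisely when $\mathcal{C}$ has more than one valley (or straddles $z$). Handling the overlap cases --- for instance when $z$ itself is the peak of $\mathcal{C}$, or when the forced branch node below $z$ coincides with the valley --- is where the argument requires care, and I expect it is most cleanly organized as short casework on $\mathrm{outdeg}_{\mathcal{T}}(z) \in \{2,3,4\}$.
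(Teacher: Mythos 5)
Your first step is correct, and it is in fact a cleaner route to cycle-uniqueness than the paper's own: the identity that the cyclomatic number of $\anc{u,v,w}$ equals $\mathsf{coll}(u,v,w)-2$, combined with the conditioning $\mathsf{coll}(u,v,w)\le 3$ from \cref{rmk:collision_conditioning}, immediately forces a single cycle, whereas the paper argues via spanning trees that two cycles would create four collisions. The proof breaks at your second step: the claim that $p\ge 2$ valleys force at least four collisions is false, and the overlap you flag at the end (a peak coinciding with the branch node forced below $z$) is not a detail to be settled by casework --- it is exactly the configuration that defeats the claim. Concretely, let $z$ have two children in $\mathcal{T}$, one starting a path down to $w$ and the other a path down to a couple $P_1$; let the two out-edges of $P_1$ start paths toward $u$ and toward $v$, and let $g_1$ (on the $u$-path) and $g_2$ (on the $v$-path) be nodes a few levels below $P_1$; finally let $P_2$ be a couple with fresh ancestry, incomparable to $z$, whose two out-paths end at $g_1$ and $g_2$, supplying their second parents; every other parent anywhere is a fresh, unrelated node. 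The only nodes of outdegree $\ge 2$ are $z$, $P_1$, $P_2$, so $\mathsf{coll}(u,v,w)=3$ exactly; $z$ is the unique LCA of $u,v,w$ (neither $P_1$ nor $P_2$ is an ancestor of $w$); and the unique cycle through $P_1$, $g_1$, $P_2$, $g_2$ is not above $z$, so all hypotheses of \cref{lem:lca-not-tree} hold --- yet this cycle has two peaks and two valleys. Here $P_1$ is simultaneously a peak and the forced branch node below $z$, which is precisely the double-counting your budget argument cannot rule out. Your third step fails on the same example: the peak of a cycle not contained in $\anc{z}$ need not be a strict ancestor of $z$ ($P_2$ is incomparable to $z$ here, and one can even arrange the entire cycle to be incomparable to $z$), so neither \cref{lem:bad_ancestral_paths} nor a fourth collision excludes the straddling case.

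The comparison with the paper is instructive. The paper's proof commits to much less than yours: it proves only that the cycle is unique (the content of your first step), defines the witness as the lowest node of that cycle, and disposes of the second claim in one line by observing that a witness lying in $\anc{z}$ would place the whole cycle above $z$. It never asserts your diamond picture, and the example above shows that picture is genuinely false under the stated hypotheses, not merely hard to prove. (The same example, with $g_1$ and $g_2$ placed at a common level, also shows that ``the lowest node of the cycle'' can be ambiguous, and its fully-incomparable variant strains the conclusion ``strictly below $z$''; so the paper's one-line treatment of the second claim is itself looser than it should be --- but that is a separate issue from your proposal, whose central structural claim is refuted outright.)
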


\begin{proof}
    We know that $\mathcal{T} := \anc{u, v, w}$ is not a tree, so there exists a cycle in $\mathcal{T}$. We show that there can only be one cycle. Suppose that there exist two cycles $\mathcal{C}, \mathcal{C}'$ in $\mathcal{T}$. Then we claim that $\mathsf{coll}(u, v, w) \geq 4$. 
    
    Consider a spanning tree $\mathcal{T}'$ of $\mathcal{T}$. Then $\mathcal{T}'$ has two collisions. Moreover, $\mathcal{T}' \cup \mathcal{C}$ contains a single cycle, so we conclude that there exists a node in $\mathcal{T}'$ whose outdegree is increased by one upon adding the edges from $\mathcal{C}$ to $\mathcal{T}'$ (Otherwise, $\mathcal{T}' \cup \mathcal{C}$ would still be a tree). Therefore, by~\cref{lem:collision_interp}, $\mathcal{T}' \cup \mathcal{C}$ has three collisions. By similar reasoning and using that $\mathcal{C} \neq \mathcal{C'}$, we conclude that $\mathcal{T}' \cup \mathcal{C} \cup \mathcal{C'}$ has $4$ collisions. Since $\mathcal{T}' \cup \mathcal{C} \cup \mathcal{C'} \subset \mathcal{T}$, we conclude that $\mathcal{T}$ has at least $4$ collisions. But under our conditioning, no subpedigree has $4$ or more collisions. It follows that in~\cref{lem:lca-not-tree} there is exactly one cycle in $\mathcal{T}$, and thus, exactly one witness. 
    
    To prove the final statement, note that if the witness is located above $z$ in $\anc{u,v,w}$, then the cycle lies completely above $z$.
    % \gnote{Revisit this after resolving whether we mean that $z$ has no symbol repetitions.}
\end{proof}

\begin{proof}[Proof of~\cref{lem:lca-not-tree}]
Consider $u, v, w$ and the subpedigree $\mathcal{T} = \anc{u,v,w}$ consisting of the ancestors of $u, v, w$. Recall that $z$ is the unique LCA of $u, v, w$. By~\cref{lem:witness}, there is a unique witness $g \in \mathcal{T}$, which is the lowest node in the unique cycle occurring in $\mathcal{T}$. 
% Let $\mathcal{T}'$ denote a spanning tree of $\mathcal{T}$. 

\medskip

\noindent
\textbf{Subcase 1}:  $\mathsf{LCA}(u, v) = \mathsf{LCA}(v,w) = \mathsf{LCA}(u,w) = \mathsf{LCA}(u, v, w)$.

Without further loss of generality, suppose that the witness $g$ lies along the path from $u$ to $z$. Then it follows that there is a unique path from $v$ to $z$ in $\mathcal{T}$. Otherwise, there would exist two cycles in $\mathcal{T}$, which is a contradiction as this would lead to $4$ collisions in $\mathcal{T}$. Similarly, there is a unique path from $w$ to $z$ in $\mathcal{T}$. Moreover, $\anc{z}$ is a tree. It follows that the subpedigree $\anc{v, w}$ of the ancestors of $v$ and $w$ is a tree. Observe that $z$ is at least two levels above $v, w$, and by the topology of this subcase, there are at least 4 edges in the tree subpedigree from $z$ to $v$ and $w$. This implies that the expected overlap between $v$ and $w$ is at most $2 \cdot (1/2)^4 = 1/8$. Thus the expected overlap between $u, v, w$ is at most the expected overlap between $u$ and $v$, which is bounded by $1/8$. 

\medskip

\noindent
\textbf{Subcase 2}: Without loss of generality, $\mathsf{LCA}(u, v) \neq \mathsf{LCA}(u, v, w)$. 

\begin{figure}
    \centering
    \begin{tikzpicture}[scale=0.8,
        > = stealth,
        auto,
        node distance = 3cm, 
        semithick  
    ]
    
	\tikzset{rect state/.style={draw,rectangle,thick,minimum size=4mm}}
	\tikzset{circ state/.style={draw,circle,thick,minimum size=4mm}}
	\tikzset{dashed state/.style={draw,rectangle,dashed,thick,minimum size=4mm}}
	\node[rect state] (Z) at (0,6) {$z : \{x,y\}$};
	\node[rect state] (G) at (-1,4) {$g$};
	\node[rect state] (P) at (-2,2) {$p$};
	\node[circ state] (u) at (-3,0) {$u$};
	\node[circ state] (v) at (-1,0) {$v$};
	\node[rect state] (Q) at (1,4) {$q$};
	\node[circ state] (w) at (1,0) {$w$};
	\node[dashed state] (1) at (-2,8) {};
	
	\path (Z) edge (G);
     \path[->] (Z.-140) edge [draw=red] node[above left=0.1mm,draw=none] {$\color{red} x$} (G.100);
     \path[->] (1.-120) edge [draw=red] node[above left=0.1mm,draw=none] {$\color{red} x$} (G.130);
	\path (G) edge (P);
	\path (P) edge (u);
	\path (P) edge (v);
	\path (Z) edge (Q);
	\path (Q) edge (w);
	\path (1) edge[dashed] (G);
	\path (1) edge[dashed] (Z);
    \end{tikzpicture}
    \qquad
    \begin{tikzpicture}[scale=0.8,
        > = stealth,
        auto,
        node distance = 3cm, 
        semithick  
    ]
    
	\tikzset{rect state/.style={draw,rectangle,thick,minimum size=4mm}}
	\tikzset{circ state/.style={draw,circle,thick,minimum size=4mm}}
	\tikzset{dashed state/.style={draw,rectangle,dashed,thick,minimum size=4mm}}
	\node[rect state] (Z) at (0,6) {$z: \{x,y\}$};
	\node[rect state] (G) at (1,4) {$g$};
	\node[rect state] (P) at (-2,2) {$p$};
	\node[circ state] (u) at (-3,0) {$u$};
	\node[circ state] (v) at (-1,0) {$v$};
	\node[circ state] (w) at (1,0) {$w$};
	\node[dashed state] (1) at (2,8) {};
	
	\path (Z) edge (G);
     \path[->] (Z.-50) edge [draw=red] node[circle,above right=0.0mm,draw=none] {$\color{red} x$} (G.100);
     \path[->] (1.-60) edge [draw=red] node[circle,below right=0.0mm,draw=none] {$\color{red} x$} (G.45);
	\path (G) edge (w);
	\path (Z) edge (P);
	\path (P) edge (u);
	\path (P) edge (v);
	\path (1) edge[dashed] (G);
	\path (1) edge[dashed] (Z);
\end{tikzpicture}
    \caption{Example of structures being analyzed in the proof of \cref{lem:lca-not-tree}, Subcase 2. Here $\{x, y\}$ depict the symbols of the LCA $z$ in a specific block. The red edges delineate the inheritance events (possibly occurring simultaneously) of a common symbol $x$.}
    \label{fig:subcase2-tree}
\end{figure}
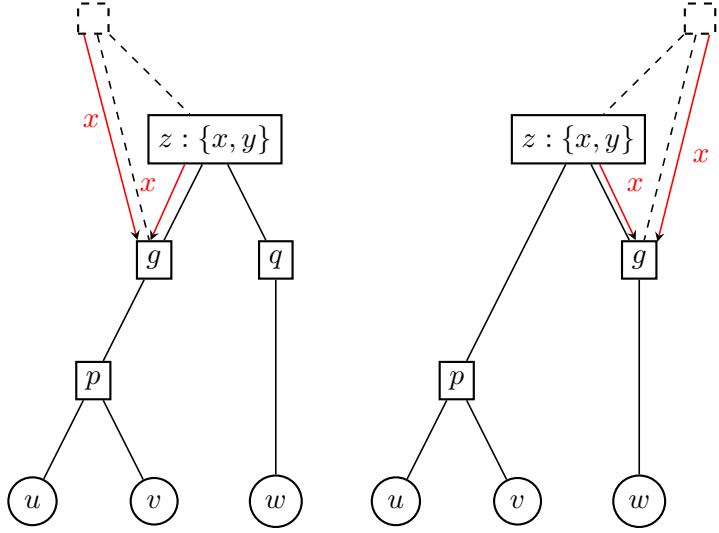

Let $p = \mathsf{LCA}(u, v)$. Either $g$ is on the branch that leads to $u$ and $v$, or it is on the branch that leads to $w$. 
First, suppose that $g$ is on the branch that leads to $u$ and $v$. Then we may further assume $g$ is on the path from $z$ to $p$. For if, say, $g$ is on the path from $p$ to $u$, then $\anc{v,w}$ is a tree, in which case we can argue as in Subcase 1 that the mutual expected overlap between $u, v, w$ is at most $1/8$. 

Therefore, it suffices to consider the cases $g$ is on the path from $z$ to $p$ or $g$ is on the path from $z$ to $w$ (\cref{fig:subcase2-tree}). 
In the first case, the descendants of $g$ form a tree with at least two edges. Moreover, there is a unique node $q$ at the same level as $g$ in $\mathcal{T}$, and this individual is located on the path from $z$ to $w$. Let $\sigma(z) = \{ x, y \}$ denote the (distinct) symbols of $z$ in a given block. By these facts, symmetry, and conditional independence of inheritance,
\begin{align*}
&\Pr[ \sigma(u) \cap \sigma(v) \cap \sigma(w) \neq \emptyset ] \\
&\leq
2\Pr[ \sigma(g) = \{ x,x \}, \, x \in \sigma(u) \cap \sigma(v) ]\Pr[ x \in \sigma(q), x \in \sigma(w)  ] \\
&\quad+ 2\Pr[\sigma(g) = \{ x,y \}, x \in \sigma(u) \cap \sigma(v) ] \Pr[x \in \sigma(q),   x \in \sigma(w)] \\
&\leq 2 \times \left( \frac{1}{4} \times 1 \right) \times \left( \frac{1}{2} \times \frac{1}{2} \right) + 2 \times \left( \frac{1}{2} \times \frac{1}{4} \right) \times \left(\frac{1}{2} \times \frac{1}{2}\right)\\
&= \frac{3}{16}. 
\end{align*}
The second line includes a factor of $2$ to account for either $x$ or $y$ being passed down to $u,v,w$.
The terms in the third line are ordered to correspond to the events in the two lines above. In particular, we have by conditional independence of inheritance that
\[
\Pr[ \sigma(g) = \{x,x\} ] \leq 1/4
\]
because there are at most $2$ paths from $z$ to $g$, and each has probability at most $1/2$ of passing down $x$. The bound 
\[
\Pr[ \sigma(g) = \{x,y\} ] \leq 1/2
\]
holds similarly.

Now suppose that $g$ is on the path from $z$ to $w$. Then
\begin{align*}
    \Pr[ \sigma(u) \cap \sigma(v) \cap \sigma(w) \neq \emptyset ] &\leq 2 \Pr[ x \in \sigma(u) \cap \sigma(v) ] \Pr[ x \in \sigma(w) ] \\
    &\leq 2 \cdot \frac{1}{8} \cdot \frac{3}{4}
    = \frac{3}{16}. 
\end{align*}

Above, we used the fact that tree pedigree from $z$ to $u, v$ has at least $3$ edges. We also used the fact 
\[
    \Pr[ x \in \sigma(w) ] \leq \frac{3}{4},
\]
which holds because there are at most two paths to $w$ from $z$, each path has probability at least $1/2$ of not passing down $x$, and so by conditional independence of inheritance, the probability that both paths do not pass down $x$ is at least $1/4$. 
\end{proof}

\textbf{Finally}, to finish the proof of \cref{lem:nonsiblings_symbols} using Lemmas~\ref{lem:bLCA-case1}, \ref{lem:bLCA-case2}, \ref{lem:bLCA-case3}, and \ref{lem:lca-not-tree}, note that in all four cases the expected overlap between coupled nodes $u, v, w$ is at most $3/16$. Thus, the probability that $u,v,w$ mutually share more than $3/16 + \gamma$ fraction of symbols in all cases is at most $2\exp(-2 B \gamma^2)$ by Chernoff--Hoeffding, similar to the analysis of \cref{lem:siblings_symbols}. Union bounding over all $O((\alpha^T N)^3)$ possible triples gives an $O(\alpha^{3T} N^3 \exp(-B\gamma^2))$ upper bound of the chance that there is some triple with at least $3/16 + \gamma$ overlap. By also ruling out the bad event in \cref{cor:few-colls} (which occurs with probability $O(1/N_T)$), we obtain the desired upper bound.

% For example, in case (1), let $X_i$ denote the random variable that indicates an overlap between $u, v, w$. Then

% For example, in case (1) we have 
% \[
%     \Pr(\text{overlap} > 3/16 + \gamma) < \Pr(\text{overlap} > 1/8 + \gamma) \overset{\text{Chernoff}}{\le} \exp(-B)
% \]

% \textbf{Finally}, to finish the proof of \cref{lem:nonsiblings_symbols} using Lemmas~\ref{lem:bLCA-case1}, \ref{lem:bLCA-case2}, \ref{lem:bLCA-case3}, and \ref{lem:lca-not-tree}, note that in all four cases the expectation is at most $3/16$.
% Thus, the probability that $u,v,w$ mutually share more than $3/16 + \gamma$ fraction of symbols in all three cases is at most $\exp(-B)$. For example, in case (1) we have 
% \[
%     \Pr(\text{overlap} > 3/16 + \gamma) < \Pr(\text{overlap} > 1/8) \overset{\text{Chernoff}}{\le} \exp(-B)
% \]
% Union bounding over all $O((\alpha^T N)^3)$ possible triples an $O(\alpha^{3T} N^3 \exp(-B))$ upper bound of the chance that there is some triple with at least $3/16$ overlap.
% By also ruling out the bad event in \cref{cor:few-colls} (which occurs with probability $O(1/N_T)$), we obtain the desired upper bound.

%\gnote{Todo: Do we need to explicitly finish the proof of~\cref{lem:nonsiblings_symbols}? Maybe rephrase the previous casework to be on upper bounding the expectation, and then have the proof with the Chernoff--Hoeffding bound here?}   -- done (youn 3/23)

%This concludes the proof of Lemma~\ref{lem:nonsiblings_symbols}.
%\end{proof}

\subsection{Which ancestors are reconstructible?}
% \gnote{Might need to re-do this definition to explicitly mention awesomeness of the node, or at least the fact that the symbols come from descendants of 3 different children (not just an arbitrary triple).}
% \begin{definition}[$b$-good]
% Let $b \in [B]$ be a specific block. Say that a coupled node $v$ in a pedigree $\cP$ is $b$-good if $v$ has at least two sets three extant descendants $x_1$, $y_1$, $z_1$ and $x_2$, $y_2$, $z_2$ in $\cP$ such that:
% \begin{enumerate}
%     \item $x_1, y_1$, and $z_1$ have a unique LCA, as do $x_2, y_2$, and $z_2$. Furthermore, $\mathsf{LCA}(x_1, y_1, z_1) = \mathsf{LCA}(x_2, y_2, z_2) = v$.
%     \item $x_1$, $y_1$, and $z_1$ all have the same symbol $\sigma_1$ in block $b$, and $x_2$, $y_2$, and $z_2$ all have the same symbol $\sigma_2$ in block $b$.
%     \item $\sigma_1 \neq \sigma_2$
% \end{enumerate}
% We furthermore define every extant node to be $b$-good, for all $b \in [B]$.
% \end{definition}

In this section, we characterize nodes that are of importance in our analysis: couples whose history \textit{lacks inbreeding (\textit{e.g.} graph structure is reconstructible using blocks)} and \textit{have ample extant information (\textit{e.g.} blocks are recoverable)}.
We present this in two parts respectively in \cref{def:awesome_node} and \cref{def:new_b_good}.

\begin{definition}[Awesome Node]
\label{def:awesome_node}
Call a node in the pedigree $\cP$ \emph{awesome} if:
\begin{enumerate}
\item It is $d$-rich.
\item It is not an ancestor of any extant node that has a collision within its own ancestral pedigree (including itself).
% \item It is not an ancestor of any extant node that appears in a triple of nodes that share two or more LCAs.
\end{enumerate}
\end{definition}

% \begin{claim}
% \label{claim:awesome_children_1}
% Fix any descendant $v$ of an awesome node $u$. If $v$ is $d$-rich, then $v$ is awesome.
% \end{claim}
% \begin{proof}
%     $v$ is not the ancestor of any extant node with a collision within its own ancestral pedigree. If it were, then $u$ would also be the ancestor of such a node, but $u$ is awesome by assumption.
% \end{proof}

\begin{definition}[$b$-goodness]
\label{def:new_b_good}
Let $b \in [B]$ be a specific block. 
Say that a coupled node $v$ in a pedigree $\cP$ is \textbf{$b$-good} if $v$ has at least two sets of three extant descendants $x_1$, $y_1$, $z_1$ and $x_2$, $y_2$, $z_2$ in $\cP$ such that:
\begin{enumerate}
    \item $v$ is a joint LCA of $x_1, y_1, z_1$ and is a joint LCA of $x_2, y_2, z_2$. 
    \item $x_1$, $y_1$, and $z_1$ all have the same symbol $\sigma_1$ in block $b$, and $x_2$, $y_2$, and $z_2$ all have the same symbol $\sigma_2$ in block $b$.
    \item $\sigma_1 \neq \sigma_2$.
\end{enumerate}
We furthermore define every extant node to be $b$-good, for all $b \in [B]$.
\end{definition}

We now deliver the main message of this section: \textit{most nodes have these properties}, given the assumptions of our model (\cref{prop:many-awesome} and \cref{lem:awesome-are-b-good}). 
Therefore, this characterization enables a natural reconstruction algorithm (\cref{sec:reconstruction}).

\begin{proposition}[Many awesome nodes]
\label{prop:many-awesome}
 Let $d > 0$ (as in \cref{def:awesome_node}) be a constant, let $\alpha$ be a sufficiently large constant with respect to $d$, and let $N$ be sufficiently large with respect to both $d$ and $\alpha$. With probability at least $1 - \alpha^{-\Omega(T)}$, in every layer of the pedigree at least $1- 1/d$ fraction of the nodes are awesome.
\end{proposition}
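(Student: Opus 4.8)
The plan is to observe that a node can fail to be awesome for exactly two reasons—failing to be $d$-rich (the first condition of \cref{def:awesome_node}) or being an ancestor of an extant node whose ancestral pedigree contains a collision (the second condition)—and to bound the fraction of level-$k$ nodes of each type by $\tfrac{1}{2d}$, so that a union bound leaves at least a $1 - 1/d$ fraction awesome. For the first type I would simply invoke \cref{lem:many-rich} with $\tau = \tfrac{1}{2d}$: since $\alpha$ is large relative to $d$, with probability $1 - T\exp(-C_2 \alpha N)$ at least a $1 - \tfrac{1}{2d}$ fraction of level-$k$ nodes are $d$-rich at every level simultaneously. This failure probability is exponentially small in $N$ and hence negligible compared to the target $\alpha^{-\Omega(T)}$.

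The work is in controlling the second type. Call an extant node \emph{bad} if its ancestral pedigree contains a collision, and let $X$ be the number of bad extant nodes. Then the set of level-$k$ nodes violating the second condition is exactly the set of level-$k$ ancestors of bad extant nodes, which has size at most $2^k X$, because each coupled node has at most two parents and so each extant node has at most $2^k$ ancestors at level $k$. Since bad extant nodes form a subset of $\{u : \mathsf{coll}(u) \ge 1\}$, the expectation bound underlying \cref{cor:generic-ancestors} gives $\E[X] \le \E\big|\{u : \mathsf{coll}(u) \ge 1\}\big| \le 72(2\alpha + 4\delta)^T$, whence $\E[2^k X] \le 2^k \cdot 72(2\alpha+4\delta)^T$. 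Conditioning on the concentration event $E$ of \cref{lem:concentration} (which fails with only exponentially small probability), the number $M_k$ of coupled nodes at level $k$ is at least $\tfrac14 (\alpha/2 - \delta)^{T-k} N$, so Markov's inequality yields
\[
\Pr\!\left[ \#\{\text{level-}k\text{ nodes failing condition 2}\} \ge \tfrac{1}{2d} M_k,\ E \right]
\le \frac{2d \cdot 2^k \cdot 72 (2\alpha + 4\delta)^T}{\tfrac14 (\alpha/2 - \delta)^{T-k} N}
= O\!\left( \frac{d\,(4\alpha)^T}{N} \right),
\]
where in the last step I use that $\delta$ is a small constant, so the base simplifies to a constant multiple of $\alpha$, and that the exponent is maximized at $k = T$. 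A union bound over the $T+1$ levels contributes an extra factor of $T$.

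Finally I would translate this into the claimed rate. Since $T = \eps \log N$ with $\eps = O(1/\log\alpha)$ sufficiently small, we have $(4\alpha)^T / N = \exp\big(T(\log(4\alpha) - 1/\eps)\big)$, and choosing $\eps$ small enough that $1/\eps$ exceeds $\log(4\alpha)$ by a constant multiple of $\log\alpha$ makes this $\alpha^{-\Omega(T)}$; the extra $dT$ factor is absorbed. Combining the two cases by a union bound, with probability $1 - \alpha^{-\Omega(T)}$ every level has at most $\tfrac{1}{2d} M_k$ non-$d$-rich nodes and at most $\tfrac{1}{2d} M_k$ nodes failing the second condition, hence at least a $1 - 1/d$ fraction of awesome nodes. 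The main obstacle is precisely the second case: the crude $2^k$ ancestor count combined with the level-size normalization only yields a bound of order $(4\alpha)^T/N$, which is merely polynomially small in $N$, so the argument succeeds only because the ratio $T/\log N = \eps$ is taken small enough to beat this polynomial factor—this is exactly where the hypothesis $\eps = O(1/\log\alpha)$ is essential.
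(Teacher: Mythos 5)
Your proposal is correct and follows essentially the same route as the paper: \cref{lem:many-rich} with $\tau = 1/(2d)$ handles the richness condition, and the collision-count expectation behind \cref{cor:generic-ancestors}, combined with the at-most-$2^k$ ancestors-per-bad-extant-node count and the smallness of $\eps = T/\log N$ relative to $1/\log \alpha$, handles the second condition. The only cosmetic difference is that you apply Markov's inequality level-by-level against the threshold $\tfrac{1}{2d}M_k$, whereas the paper applies it once at the extant level (taking $C = \alpha^{T}$ in \cref{cor:generic-ancestors}) and then compares the resulting absolute count $2^{T}\cdot \alpha^{O(T)}$ of condition-2 violators to $N/(2d)$, a bound that works simultaneously for every level since no level is smaller than the founding one.
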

\begin{proof}
    Since $\alpha$ and $N$ are sufficiently large with respect to $d$, we can apply \cref{lem:many-rich} with $\tau = 1/(2d)$ and $\delta = d$. 
    This tells us that at least $1 - 1/(2d)$ fraction of nodes in each layer are $d$-rich with probability $1 - T \exp(-C_2 \alpha N)$, where the constant $C_2 = C_2(1/(2d), d)$ depends only on $d$.
    
    Applying \cref{cor:generic-ancestors} with $C = \alpha^{T}$, there are at most $\alpha^{O(T)}$ nodes at the extant level with collisions in their ancestral pedigree, with probability $1 - \alpha^{-\Omega(T)}$. 
    This means there are at most $2^T \cdot \alpha^{O(T)}$ \emph{ancestors} of these nodes. 
    It follows that the number of nodes that are $d$-rich but not awesome is at most
    $2^T \cdot \alpha^{O(T)}$. This is at most $\frac{N}{2d}$, provided $N$ is sufficiently large with respect to $d$ and $\alpha$ and we take $\eps = T / \log N$ to be small with respect to $1/\log(\alpha)$.
    
    % It follows that at most $2^T \cdot \alpha^{O(T)}$ nodes can be $d$-rich and not awesome, which is at most $N/(2d)$ by noting that $N$ is sufficiently large with respect to $d$ and taking $T = \eps \log(N)$ to be sufficiently smaller than $\log(N)$. 
    
    % Finally, we note that the probability of the first event is larger than the probability of the second event, since we view $N$ as going asymptotically to $\infty$ while $\alpha$ is fixed. 
    % So the first probability goes to 1 exponentially with $N$, while the second probability goes to 1 exponentially with $T$. 
    % The first probability goes to $1$ as $N \to \infty$, while the second does not.
    The first probability $1 - T\exp(-C_2 \alpha N)$ is exponentially small in $N$, while the second probability $1-\alpha^{-\Omega(T)}$ is exponentially small in $T = \eps \log N$.
    Therefore, the probability of both events occurring simultaneously can be lower bounded by $1 - \alpha^{-\Omega(T)}$, by taking the constant hidden in the $\Omega$ to be slightly smaller than what is found in the previous paragraph.
\end{proof}

\begin{lemma}[Awesome implies $b$-good]
\label{lem:awesome-are-b-good}
Let $d > 0$ (as in \cref{def:awesome_node}) be a sufficiently large constant. With probability $1 - \exp(- \Omega(B))$ over the symbol inheritance process, every awesome coupled node in $\cP$ is $b$-good for at least $99\%$ fraction of blocks $b \in [B]$.
\end{lemma}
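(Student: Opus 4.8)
The plan is to fix the pedigree topology $\cP$ (so that ``awesome'' becomes a deterministic property and the node count is $\mathrm{poly}(N)$ on the concentration event of \cref{lem:concentration}), fix an awesome node $v$, and first analyze a single block $b \in [B]$; the cross-block analysis and the union bound come at the end. The first thing to extract from awesomeness is structural. Since $v$ is $d$-rich it has an extant descendant $e$, and by \cref{def:awesome_node} the ancestral pedigree $\anc{e}$ is collision-free; as $\anc{v} \subseteq \anc{e}$, this forces $\anc{v}$ to be collision-free as well. Consequently the two individuals $v_1, v_2$ comprising $v$ trace back through disjoint founder sets, so in every block their symbols are distinct: writing $\sigma_v(b) = \{s_1, s_2\}$ we have $s_1 \neq s_2$ deterministically, which supplies condition (3) of \cref{def:new_b_good} for free. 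The same collision-freeness below $v$ shows that the descendant subpedigrees of distinct children of $v$ meet only at or above $v$ (a shared descendant would create a collision in some extant node's ancestry), so these subtrees are vertex-disjoint; combined with $d$-richness they contain an embedded $d$-ary tree of $d$-rich nodes rooted at each child. In particular any three extant nodes drawn from three distinct children of $v$ have $v$ as their unique LCA, hence $v$ is their joint LCA, so condition (1) of \cref{def:new_b_good} is automatic once the descendants are located.

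Next I would run a branching-process survival argument for the fixed block $b$. Each of the $\geq d$ many $d$-rich children of $v$ receives, in block $b$, the symbol $s_1$ or $s_2$ from $v$, each with probability $1/2$. Tracking a fixed symbol (say $s_1$) as it descends, each edge of the embedded $d$-ary tree re-transmits it independently with probability $1/2$, so its survival to the extant level from a given child is exactly bond percolation on a $d$-ary tree. The associated Galton--Watson process has offspring distribution $\binomial(d, 1/2)$ with mean $d/2 > 1$ for $d \geq 3$, hence is supercritical with non-extinction probability a constant $q = q(d) > 0$ bounded away from $0$ (indeed $q \to 1$ as $d \to \infty$), and survival to any finite depth is at least $q$. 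Because the children's subtrees are disjoint and inheritance is independent across edges and blocks, the events ``child $c_i$ received $s_1$ and transmitted it to some extant descendant'' are independent across $i$, each of probability at least $q/2$. Thus the number of children witnessing $s_1$ at the extant level stochastically dominates $\binomial(d, q/2)$, and a Chernoff--Hoeffding bound makes it at least $3$ except with probability $\exp(-\Omega(d))$; the identical statement holds for $s_2$, and the two witnessing sets of children are automatically disjoint since a child receives only one symbol. Union bounding over $s_1, s_2$, the node $v$ fails to be $b$-good with probability at most $p := \exp(-\Omega(d))$.

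Finally I would lift this to all blocks and all nodes. Since the inheritance randomness in distinct blocks is independent, the indicators $\{v \text{ is not } b\text{-good}\}_{b \in [B]}$ are independent Bernoulli with parameter at most $p$. Choosing $d$ a large enough constant makes $p \leq 1/200$, so the expected number of bad blocks for $v$ is at most $B/200$, and a Chernoff--Hoeffding bound gives that $v$ is bad on more than $1\%$ of blocks with probability at most $\exp(-\Omega(B))$. A union bound over all awesome nodes---of which there are at most $O((\alpha/2+\delta)^T N) = \mathrm{poly}(N)$ by \cref{lem:concentration}---costs a factor $\mathrm{poly}(N) = \exp(O(\log N))$; since $B = \beta \log N$ with $\beta$ a large constant, this factor is absorbed into $\exp(-\Omega(B))$, yielding the claimed bound.

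The main obstacle is the survival/branching-process step: one must verify that awesomeness genuinely delivers a clean, vertex-disjoint $d$-ary tree of descendants so that (a) the percolation is exactly supercritical Galton--Watson with a depth-uniform positive survival probability, and (b) the survival events down different children are independent, which is what lets the per-block failure probability be driven below an arbitrary constant by increasing $d$. Everything else---the determinism of $s_1 \neq s_2$, the joint-LCA condition, and the two layers of concentration (first over blocks, then over nodes)---follows routinely once this structural input is in place.
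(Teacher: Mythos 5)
Your proposal is correct and follows essentially the same route as the paper's proof: extracting from awesomeness an embedded $d$-ary tree of collision-free descendants (the paper's \cref{claim:awesome-tree}) together with the determinism of the two distinct symbols per block, running a supercritical broadcast/percolation argument with copy probability $1/2$ down that tree to obtain three witnessing children per symbol with failure probability vanishing as $d \to \infty$, and finishing with a Chernoff--Hoeffding bound across blocks. Your write-up is somewhat more explicit than the paper's on the stochastic domination by $\binomial(d, q/2)$ and on the final union bound over nodes being absorbed into $\exp(-\Omega(B))$, but these are presentational refinements of the same argument rather than a different approach.
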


The figure ``$99\%$'' is an arbitrary choice for simplification.
It can be replaced by anything arbitrarily close to $1$, which changes the constant factor of $\Omega(B)$ found in the lemma above.
To prove \cref{lem:awesome-are-b-good}, first we need a structural claim about awesome nodes:
\begin{claim}
\label{claim:awesome-tree}
For any awesome coupled node, the subpedigree formed by it and its awesome descendants contains an induced $d$-ary tree that goes down to the extant level.
\end{claim}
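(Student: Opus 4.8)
The plan is to build the $d$-ary tree greedily from the top, using $d$-richness to supply $d$ children at each step, and to use the collision-freeness built into awesomeness to guarantee that the resulting subpedigree is genuinely a tree (no reconvergence of branches and no extra edges). Two structural facts drive the argument: first, that awesomeness is inherited by $d$-rich descendants, so that every node I select is again awesome; and second, that among the descendants of an awesome node no two downward paths can reconverge, since a reconvergence would manifest as a collision in the ancestral pedigree of some extant node, contradicting condition (2) of \cref{def:awesome_node}.

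First I would prove a downward-closure lemma: if $v$ is awesome and $c$ is a $d$-rich descendant of $v$, then $c$ is awesome. This is immediate from the definitions—$c$ is $d$-rich by hypothesis, and every extant descendant of $c$ is an extant descendant of $v$, so condition (2) of \cref{def:awesome_node} for $c$ follows from condition (2) for $v$; in particular the extant descendants of $v$ are themselves awesome. Next I would establish the tree property: letting $D$ be the set of descendants of $v$ that are ancestors of at least one extant node, I claim the subpedigree induced by $D$ is a tree pedigree (\cref{def:tree_pedigree}). Supposing otherwise, an undirected cycle in $D$ has a lowest node $c$ and a top node $w$ from which two directed paths, beginning with distinct children of $w$, descend to $c$; choosing any extant descendant $x$ of $c$, both of those children lie in $\anc{x}$, so $w$ has outdegree at least $2$ in $\anc{x}$, and \cref{lem:collision_interp} gives $\mathsf{coll}(x) \geq 1$, contradicting the awesomeness of $v$. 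Finally I would carry out the greedy construction: starting at $v$, repeatedly use $d$-richness to select $d$ of its $d$-rich children at each node; by the downward-closure lemma each selected node is awesome, hence again $d$-rich, so the recursion continues until it bottoms out at the extant level (the base case of $d$-richness). All selected nodes lie in $D$, and since $D$ induces a tree, the selected connected set induces exactly the $d$-ary tree just built, with no additional edges.

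The step I expect to require the most care is the tree property, as it is where the analysis must convert a purely topological statement (no reconvergence of descendant branches) into the combinatorial language of collisions; the key is to identify the correct node $w$ whose outdegree in some extant ancestral pedigree is inflated and to invoke \cref{lem:collision_interp} to turn that outdegree into a collision count. The remaining steps are essentially bookkeeping: the downward-closure lemma is a direct unwinding of \cref{def:awesome_node}, and the greedy construction is immediate once one notes that $d$-richness propagates all the way down to the extant level by its recursive definition.
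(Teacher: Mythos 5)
Your overall strategy---downward closure of awesomeness to $d$-rich descendants, a cycle-implies-collision argument for the tree property, and a greedy top-down selection of $d$ children---is the same as the paper's, and your steps 1 and 3 are correct. The gap is in step 2, exactly the step you flagged as delicate: the structural claim that an undirected cycle in $D$ ``has a lowest node $c$ and a top node $w$ from which two directed paths, beginning with distinct children of $w$, descend to $c$'' is false. In a graded DAG an undirected cycle can alternate between several local maxima and several local minima, in which case no node of the cycle has two directed paths \emph{along the cycle} to any other single node. The coupled version of the subpedigree in \cref{fig:sib-bad-ex} is precisely such a configuration: three parent couples and three child couples forming a hexagonal cycle in which each parent couple is a local maximum adjacent to two \emph{distinct} child couples, and each child couple is a local minimum. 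For a cycle of this shape your argument never produces a node $w$ of outdegree $2$ inside a single extant node's ancestral pedigree, so the contradiction does not materialize as written; note also that such a hexagon need not lie inside $\anc{x}$ for \emph{any} single extant $x$, so one cannot simply apply \cref{lem:collision_interp} to the cycle itself.

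The conclusion survives, but the repair needs the one idea your write-up omits: route through the common ancestor $v$. Take any local minimum $c$ of the cycle; by gradedness its two cycle-neighbors are parents $p_1 \neq p_2$ of $c$, and both lie in $D$, hence both are descendants of $v$ (possibly $v$ itself, for at most one of them). This yields two \emph{distinct} directed paths from $v$ to $c$, ending in the distinct edges $p_1 \to c$ and $p_2 \to c$, and their union is contained in $\anc{x}$ for any extant descendant $x$ of $c$. That union contains an undirected cycle, and any local maximum of it has outdegree at least $2$ in $\anc{x}$, so \cref{lem:collision_interp} gives $\mathsf{coll}(x) \geq 1$, contradicting condition (2) of \cref{def:awesome_node} for $v$. (In the degenerate case where the cycle is a double edge, the unique parent of $c$ already has outdegree $2$ in $\anc{x}$.) The collision is thus witnessed by a cycle through $v$, not by the original cycle. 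This repaired argument is essentially what the paper's proof intends---it takes the lowest-level node of the cycle and asserts that its ancestral pedigree contains a cycle---and the paper is admittedly terse at the same point; the difference is that the paper's intermediate assertion is true, whereas yours, as stated, is not.
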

\begin{proof}[Proof of~\cref{claim:awesome-tree}]
% \pnote{I think we should show that there are no collisions first. Otherwise the first paragraph does not guarantee you have a d-ary subtree beneath you. Eg: think of a node having d children that marry amongst each other. Then you have at most a d/2 tree.} \gnote{agreed. Done - check?}

First, we show that this subpedigree has no undirected cycles within it, which establishes the tree structure. Then, we argue that each node has $d$ children within this subpedigree.

Suppose that there an undirected cycle within this subpedigree. We show that this implies the presence of a collision within the subpedigree, contradicting the awesomeness of all nodes in the subpedigree. Note that there must be a node within this subpedigree with a cycle in its ancestral pedigree - for instance, take the node at the lowest level within the cycle. Applying~\cref{lem:collision_interp} to this awesome node, we see it has a collision among its ancestors, which contradicts condition 2) of \cref{def:awesome_node}.

    Now we establish that each node has at least $d$ children in the subpedigree. An awesome coupled node $v$ has at least $d$ children that are $d$-rich, since it is $d$-rich itself. Furthermore, none of these children have descendants with collisions in their ancestral pedigree, so they are all awesome, which finishes the proof.
\end{proof}

\begin{proof}[Proof of~\cref{lem:awesome-are-b-good}]
    Every awesome coupled node in $\cP$ has exactly $2$ distinct symbols in each block. Indeed, assume for contradiction that there is an awesome coupled node $v$ with a block in which it only has one distinct symbol. Due to the infinite alphabet assumption, we know that we can trace any symbol in a block back to a unique founder. Hence, there must be a collision in the ancestral pedigree of $v$, which is a contradiction with condition 2) of (\cref{def:awesome_node}).
    
    Now we can proceed with showing that every awesome coupled node is $b$-good for $99\%$ fraction of blocks $b \in [B]$. Fix an awesome node $v$ and a block $b \in [B]$. 
    
    We use condition (1) of awesomeness to show that, with probability tending to 1 as $d \to \infty$, there exist two sets of three extant nodes that both have $v$ as a joint LCA, where the first set has a symbol $\sigma_1$ in block $b$, and the second set has a symbol $\sigma_2 \neq \sigma_1$. %Proving this claim completes the proof of $b$-goodness, since by condition (3) of~\cref{def:awesome_node}, there is no other LCA for either of these two sets, and so we dedicate the rest of the proof to this claim.
    
    Towards this end, let us follow the inheritance of $\sigma_1$ among an induced $d$-ary tree of awesome descendants, as guaranteed by~\cref{claim:awesome-tree}. The inheritance follows a broadcast process with copy probability $1/2$ on this $d$-ary tree. The probability that the symbol makes it to at least three distinct children of $v$, and this symbol in turn survives to the extant nodes can be expressed as 
    \begin{equation} 
    \label{eq:survival}
    \left(1 - (1/2)^d \left(1 + d + \binom{d}{2} \right)\right) \cdot c_{d, 1/2}
    \end{equation}
    where $c_{d, 1/2}$ refers to the survival probability of percolation on the $d$-ary tree with copy probability $1/2$. The first term refers to the probability that the symbol is inherited by at least 3 of the $d$ awesome children of $v$. Additionally, these three extant nodes have $v$ as an LCA, as they have paths of inheritance from $v$ that do not all intersect at any other node. 
    
    Naturally, \cref{eq:survival} also gives the probability that $\sigma_2$ is similarly inherited. Furthermore, from standard results about Galton-Watson processes (see e.g. \cite{KimAxe15}), we know that as $d \to \infty$, $c_{d, 1/2} \to 1$. Hence, we conclude that \cref{eq:survival} tends to 1 as $d \to \infty$. Thus it follows from the union bound the probability that there exist two sets of three extant nodes that both have $v$ as a lowest common ancestor, the first set has $\sigma_1$ in block $b$, and the second set has $\sigma_2$, also tends to 1 as $d \to \infty$. 
    
    Hence, given a specific block $b$, the probability that an awesome coupled node is $b$-good is at least $0.995$. The high probability of this occurring for all blocks follows from a standard Chernoff--Hoeffding bound.
\end{proof}

% ========================== Section: Reconstruction algorithms =========
\section{Reconstructing the Pedigree}\label{sec:reconstruction}

On the following page, we provide pseudocode for \nameref{algo:reconstruct_all} which is the proposed reconstruction procedure, with details of the inner procedures following it (\nameref{algo:collect-symbols}, \nameref{algo:test-siblings}, and \nameref{algo:construct-parents}). Note that for the first iteration of \nameref{algo:reconstruct_all}, we do not need to collect symbols as the extant genetic data is given to us. Thus we simply test siblinghood at iteration $k = 1$ by using the true gene sequences.

\begin{figure}

\begin{algorithm}[H]
\caption{Reconstruct a depth-$T$ coupled pedigree, given extant individuals $V_0$.}
\begin{algorithmic}[1]
\algolabel{\textsc{Rec-Gen}}{algo:reconstruct_all}
\Procedure{Rec-Gen}{$T, V_0$}
    \State $\hat{\cP} \gets (V = V_0, E = \varnothing)$ \Comment{Extant Pedigree with no edges}
    % \For{$d = 1$ to $T$}
    %     \State $\Call{Extend-Pedigree}{\hat{\cP}}$
    % \EndFor
    \For{$k = 1$ to $T$}
        \If{$k > 1$} 
            \ForAll{vertices $v$ in level $k-1$ of $\hat{\cP}$}
                \State $\Call{Collect-Symbols}{v, \hat{\cP}}$
            \EndFor
        \EndIf
        \State $\hat{G} \gets \Call{Test-Siblinghood}{\hat \cP}$
        \State \Call{Assign-Parents}{$\hat{\cP}, \hat{G}$}
    \EndFor
    \State \Return $\hat{\cP}$
\EndProcedure
\end{algorithmic}
\end{algorithm}

% \begin{algorithm}
% \caption{Extend a depth $(k-1)$ reconstruction to a depth $k$ reconstruction.}
% \begin{algorithmic}[1]
% \algolabel{\texttt{Extend-Pedigree}}{algo:reconstruct_k}
% \Procedure{Extend-Pedigree}{depth $(k-1)$ pedigree $\hat{\cP} = (\hat{V},\hat{E},\hat{s})$}
%     \ForAll{vertices $v$ in level $k-1$ of $\hat{\cP}$}
%         \State $\Call{Collect-Symbols}{v, \hat{\cP}}$
%     \EndFor
%     \State $\hat{G} \gets \Call{Test-Siblinghood}{\cP}$
%     % \State \label{algo:line:mostsymbols} $V \gets \{v \in V_{k-1}(\hat{\cP}) : (\text{\# recovered symbols of } v) \geq 0.99|B|\}$
%     % \State $E \gets \varnothing$
%     %     \ForAll{distinct triples $\{u,v,w\} \subset 2^{V}$ at level $k-1$} 
%     %         \If{$\geq 0.21|B|$ blocks $b$ such that $\hat{s}_{u}(b) \cap \hat{s}_{v}(b) \cap \hat{s}_{w}(b) \neq \varnothing$}
%     %             \State $E \gets E \cup \{u,v,w\}$
%     %         \EndIf
%     %     \EndFor
%     % %\EndIf
%     % \State $\hat{G} = (V,E)$ \label{algo:line:hypergraph} \Comment{3-wise sibling hypergraph}
%     \State \Call{Assign-Parents}{$\hat{\cP}, \hat{G}$}
% \EndProcedure
% \end{algorithmic}
% \end{algorithm}
% ==== Helper algorithms =====================

\begin{algorithm}[H]
\caption{Empirically reconstruct the symbols of top-level node $v$ in $\cP$.}
\begin{algorithmic}[1]
\algolabel{\textsc{Collect-Symbols}}{algo:collect-symbols}
\Procedure{Collect-Symbols}{$v, \hat{\cP}$}
    \ForAll{blocks $b \in [B]$}
        \Repeat 
            \State \label{algo:line:jointlca} Find extant triple $(x,y,z)$ such that: \par 
            \hskip\algorithmicindent\hskip\algorithmicindent 1) $v$ is a joint LCA of $x, y, z$, \par
            \hskip\algorithmicindent\hskip\algorithmicindent 2) $x$, $y$, and $z$ all have the same symbol $\sigma$ in $b$, and \par
            \hskip\algorithmicindent\hskip\algorithmicindent 3) $\sigma$ is not yet recorded for block $b$ in $v$. 
            \State Record the symbol $\sigma$ for block $b$ in $v$. 
        \Until two distinct symbols are recorded for block $b$, or no such triple exists.
    \EndFor
\EndProcedure
\end{algorithmic}
\end{algorithm}

\begin{algorithm}[H]
\caption{Perform statistical tests to detect siblinghood}
\begin{algorithmic}[1]
\algolabel{\textsc{Test-Siblinghood}}{algo:test-siblings}
\Procedure{Test-Siblinghood}{depth $(k-1)$ pedigree $\hat{\cP}$}
    \State \label{algo:line:mostsymbols} $V \gets \{v \in V_{k-1}(\hat{\cP}) : (\text{\# fully recovered blocks of } v) \geq 0.99|B|\}$
    \State $E \gets \varnothing$
        \ForAll{distinct triples $\{u,v,w\} \subset 2^{V}$ at level $k-1$} 
            \If{$\geq 0.21|B|$ blocks $b$ such that $\hat{s}_{u}(b) \cap \hat{s}_{v}(b) \cap \hat{s}_{w}(b) \neq \varnothing$}
                \State $E \gets E \cup \{u,v,w\}$
            \EndIf
        \EndFor
    %\EndIf
    \State \Return $\hat{G} = (V,E)$ \label{algo:line:hypergraph} \Comment{3-wise sibling hypergraph}
\EndProcedure
\end{algorithmic}
\end{algorithm}

\begin{algorithm}[H]
\caption{Construct ancestors, given top-level 3-way sibling relationship.}
\begin{algorithmic}[1]
\algolabel{\textsc{Assign-Parents}}{algo:construct-parents}
\Procedure{Assign-Parents}{$\cP, G$}
    \Repeat
        \State $\cC \gets \Call{Any-Maximal-Clique}{G}$
        \State Remove one copy of all hyper-edges in $\cC$ from $G$.
        \State If $|\cC| \geq d$, attach a level-$k$ parent in $\cP$ for all nodes from $\cC$. \label{algo:line:drich}
    \Until no maximal cliques of size $\geq d$ remain in $G$.
\EndProcedure
\end{algorithmic}
\end{algorithm}

\end{figure}

The goal of the rest of this section is to prove the correctness of \nameref{algo:reconstruct_all}.
We now formally state our guarantee:
\begin{theorem}[Main theorem, formal] \label{thm:main-formal}
Let $\hat{\cP}$ be the depth-$T$ coupled pedigree output by the algorithm \nameref{algo:reconstruct_all}, applied to the gene sequences in $V_0(\cP)$.
With probability tending towards $1$ as $N \to \infty$, $\hat{\cP}$ is an induced subpedigree of $\cP$ such that $|V_i(\hat{\cP})| \geq \eta(\alpha) |V_i(\cP)|$ for all levels $i \in \{0, \ldots, T\}$, where $\eta(\alpha) \to 1$ as $\alpha \to \infty$.
The probability is over the randomness of the coupled pedigree $\cP$ and the inheritance procedure with parameters set as in \cref{sec:model}.
% (over the randomness of coupled pedigree $\cP$, \cref{sec:model})

\end{theorem}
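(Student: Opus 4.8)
The plan is to prove the theorem by induction on the generation index $k$, making precise each step of the informal analysis of \cref{sec:analysis} by invoking the structural and symbol-inheritance lemmas of \cref{sec:structure,sec:symbol-inheritance}. First I would condition, once and for all, on the conjunction of the high-probability events that drive the argument: concentration of the generation sizes (\cref{lem:concentration}); that every triple of coupled nodes has at most three ancestral collisions (\cref{cor:few-colls}); that at least a $1 - 1/d$ fraction of nodes in each layer are awesome (\cref{prop:many-awesome}); the symbol-overlap separation, namely that mutually sibling triples overlap in at least $\tfrac14 - \gamma$ of their blocks (\cref{lem:siblings_symbols}) while non-sibling triples overlap in fewer than $\tfrac{3}{16} + \gamma$ (\cref{lem:nonsiblings_symbols}); and that every awesome node is $b$-good for at least $99\%$ of blocks (\cref{lem:awesome-are-b-good}). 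A union bound over these events, using $B = \beta \log N$ with $\beta$ large and $T = \eps \log N$ with $\eps$ small relative to $1/\log\alpha$, shows the total failure probability tends to $0$ as $N \to \infty$; choosing $\gamma$ with $\tfrac{3}{16} + \gamma < 0.21 < \tfrac14 - \gamma$ calibrates the threshold used in \nameref{algo:test-siblings}.

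The inductive invariant I would maintain after phase $k$ is that $\hat\cP_k$ is, via an injective identification $\hat u \mapsto u$, an induced subpedigree of $\cP$ on levels $0,\dots,k$, and that every awesome node of $\cP$ at levels $\le k$ appears in $\hat\cP_k$. The base case $k=0$ is immediate since the extant data is given exactly. For the inductive step I would verify the three subroutines in turn. For \nameref{algo:collect-symbols}: because $\hat\cP_{k-1}$ is an induced subpedigree, whenever a reconstructed node $v$ is a joint LCA of an extant triple $x,y,z$ in $\hat\cP_{k-1}$, the identified node $u$ is a joint LCA of $x,y,z$ in $\cP$; \cref{lem:unique_joint_LCA} then makes $u$ their \emph{unique} LCA, and \cref{lem:bad_ancestral_paths} forces any symbol common to $x,y,z$ to be present at $u$, which is precisely the soundness statement (fact (A)). Completeness for awesome nodes (fact (B)) follows since awesome implies $b$-good on $99\%$ of blocks, and $b$-goodness supplies the two witnessing extant triples needed to recover both symbols of a block, so awesome nodes clear the $0.99|B|$ filter. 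For \nameref{algo:test-siblings}: combining fact (A) with the overlap separation shows the tested hypergraph $\hat G$ equals the true siblinghood hypergraph $G_{k-1}$ restricted to the well-recovered nodes, so there are no false positives (fact (C)) and all genuine sibling triples among well-recovered couples are detected.

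For \nameref{algo:construct-parents}: by \cref{lem:main_sibling_structure} the maximal cliques of $G_{k-1}$ are edge-disjoint, so the greedy clique-removal loop correctly isolates each maximal sibling group, and by \cref{lem:clique_has_unique_parent} each clique of size $\ge d$ has a unique parent couple in $\cP$, which I identify with the newly created node; this yields the injectivity in (i) and the edge preservation in (iii), maintaining the induced-subpedigree invariant. To propagate the ``awesome nodes are recovered'' half of the invariant and to obtain the size bound, I would use \cref{claim:awesome-tree}: an awesome node at level $k$ has at least $d$ awesome children at level $k-1$, which by induction are present in $\hat\cP_{k-1}$ and, being $b$-good, are well-recovered; these children are mutual siblings, hence form a detected clique of size $\ge d$, so the awesome node is reconstructed. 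Since at least $1 - 1/d$ of the level-$k$ nodes are awesome, setting $\eta(\alpha) = 1 - 1/d$ with $d$ growing as $\alpha \to \infty$ gives $|V_k(\hat\cP)| \ge \eta(\alpha)\,|V_k(\cP)|$.

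The main obstacle I anticipate is the bookkeeping that keeps the induced-subpedigree invariant self-consistent across the recursion: every joint-LCA computation in \nameref{algo:collect-symbols} is performed on the \emph{reconstructed} pedigree $\hat\cP_{k-1}$, so one must argue that its descendant and LCA structure transfers faithfully to $\cP$ (using that it is an induced subpedigree together with \cref{lem:unique_joint_LCA} and \cref{lem:bad_ancestral_paths}), and one must ensure that the well-recovered filter never severs a true edge between two reconstructed nodes --- that is, that the nodes created at each level are precisely the awesome ones (plus possibly harmless extras) whose parent-child edges are all captured. A secondary but real difficulty is controlling the accumulation of the various failure probabilities over all $T = \eps\log N$ phases while keeping the per-level union bounds over the $O((\alpha^T N)^3)$ triples summable, which is what ultimately pins down the admissible ranges of $\beta$, $\eps$, and $d$.
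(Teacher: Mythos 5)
Your proposal is correct and follows essentially the same route as the paper's own proof: the same conditioning on the key high-probability events (collision bounds, overlap separation, abundance of awesome nodes, awesome implies $b$-good), the same inductive invariant (an injectively identified induced subpedigree containing all awesome nodes at each level, with soundness of \nameref{algo:collect-symbols} via \cref{lem:unique_joint_LCA} and \cref{lem:bad_ancestral_paths}, correctness of \nameref{algo:test-siblings} via the overlap gap, and unique parents for maximal cliques via \cref{lem:clique_has_unique_parent} and \cref{lem:main_sibling_structure}), and the same choice $\eta(\alpha) = 1 - 1/d(\alpha)$. The only cosmetic differences are that the paper splits the induction into a separate base case (\cref{lem:extant_reconstruct}) and inductive step (\cref{lem:level_k_reconstruct}), and spells out the small numerical slack (sibling overlap $0.249$ minus $0.03$ for unrecovered blocks still exceeds the $0.21$ threshold) that your calibration of $\gamma$ leaves implicit.
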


% \ynote{Keep in mind to address the following promise, found in Section 1:
% \textbf{
% Our formal version of \cref{thm:informal_recon} (see \cref{thm:main-formal}) implies that with high probability {\sc Rec-Gen} outputs a reconstructed pedigree $\hat \cP$ whose size is at least $0.9 |\cP|$ such that every node $\hat u \in \hat \cP$ can be identified with exactly one node $u \in \cP$, and this identification preserves relationships in the sense that $\hat u$ is a child of $\hat v$ in $\hat \cP$ if and only if $u$ is a child of $v$ in $\cP$. In graph-theoretic terminology, our reconstruction $\hat \cP$ is a (very large) induced subgraph of the truth $\cP$.}
% }
We define $\eta(\alpha) := 1 - (1/d(\alpha))$ where, for a given value of $\alpha$, $d(\alpha)$ is defined to be the largest value of $d$ such that \cref{prop:many-awesome} holds. Observe that $d(\alpha) \to \infty$ as $\alpha \to \infty$ because \cref{prop:many-awesome} holds for arbitrarily large values $d$. Therefore, $\eta(\alpha) \to 1$ as $\alpha \to \infty$.

% In particular, we know $\eta(\alpha) \to 1$ as $\alpha \to 1$, since $d(\alpha) \to \infty$ because \cref{prop:many-awesome} holds for arbitrarily large values $d$, provided $\alpha$ is large.

We make use of the following high-probability events, provided $\alpha$ is a large enough constant so that $d = d(\alpha)$ satisfies the hypothesis of \cref{lem:awesome-are-b-good}, $N$ is sufficiently large with respect to $\alpha$, the total number of generations is $T = \eps \log N$, where $\eps = O(1/\log \alpha)$, and the gene sequence length is $B = \Omega( \log N)$. 

\begin{proposition}[Key Reductions] \label{prop:reductions}
With probability tending towards $1$ as $N \to \infty$, the pedigree $\cP$ satisfies:
\begin{enumerate}
    %\item At least $99\%$ of nodes in each level are $d$-rich (\cref{lem:many-rich}).
    %%%% == Don't actually need d-richness (lem:many-rich) explicitly since Awesomeness {prop:many-awesome} subsumes it.
    \item For each level $k$, each clique of $G_k$ has a single parent (\cref{lem:clique_has_unique_parent}). 
        \label{item:clique_has_unique_parent}
    \item For each level $k$, the maximal cliques of $G_k$ are edge-disjoint, in such a way that each $v \in V_k(\cP)$ is contained in at most two maximal cliques (\cref{lem:main_sibling_structure}). 
        \label{item:main_sibling_structure}
    \item Each triple $u,v,w$ of nodes, has at most 3 collisions (\cref{cor:few-colls}), implying
        \begin{enumerate}
            \item their joint LCA is unique (\cref{lem:unique_joint_LCA}), and
                \label{item:unique_joint_LCA}
            \item all inheritance paths for some node $x \in \{u,v,w\}$ go through the unique LCA (\cref{lem:bad_ancestral_paths}). 
                \label{item:bad_ancestral_paths}
        \end{enumerate}
    \item The fraction of overlap is at least $24.9\%$ for siblings in $\cP$ while for non-mutual siblings it is at most $18.85\%$ (Lemmas~\ref{lem:siblings_symbols}~and~\ref{lem:nonsiblings_symbols}).
        \label{item:siblings_symbols}
    \item For each level $k$, at least $\eta(\alpha)$ fraction of nodes in $V_k(\cP)$ are awesome (\cref{prop:many-awesome}).
        \label{item:many-awesome}
    \item If $u \in V(\cP)$ is awesome, then it is $b$-good for $99\%$ of blocks $b \in [B]$ (\cref{lem:awesome-are-b-good}).
        \label{item:awesome-are-b-good}
\end{enumerate}
\end{proposition}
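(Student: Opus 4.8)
The plan is to treat \cref{prop:reductions} as a bookkeeping exercise: it collects six guarantees, each of which is a near-verbatim restatement of an already-proven result, so the entire argument reduces to a single union bound together with a verification that every individual failure probability vanishes as $N \to \infty$ under the stated regime ($\alpha$ a large absolute constant, $B = \Omega(\log N)$, and $T = \eps \log N$ with $\eps = O(1/\log\alpha)$). Concretely, item~\ref{item:clique_has_unique_parent} is \cref{lem:clique_has_unique_parent}, item~\ref{item:main_sibling_structure} is \cref{lem:main_sibling_structure}, items~\ref{item:unique_joint_LCA} and~\ref{item:bad_ancestral_paths} follow from the at-most-$3$-collisions event of \cref{cor:few-colls} via \cref{lem:unique_joint_LCA} and \cref{lem:bad_ancestral_paths}, item~\ref{item:many-awesome} is \cref{prop:many-awesome}, and item~\ref{item:awesome-are-b-good} is \cref{lem:awesome-are-b-good}. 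There is thus no new structural content to establish; the work is entirely quantitative.

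The one genuine choice to make is in item~\ref{item:siblings_symbols}, where I would fix once and for all a small constant, say $\gamma = 10^{-3}$, and then invoke \cref{lem:siblings_symbols} to obtain that every mutually-sibling triple overlaps in at least $1/4 - \gamma \geq 0.249$ of its blocks, and \cref{lem:nonsiblings_symbols} to obtain that every non-mutually-sibling triple overlaps in at most $3/16 + \gamma \leq 0.1885$ of its blocks. This is precisely where the thresholds $24.9\%$ and $18.85\%$ in the statement originate, and the gap between them is exactly what later licenses the $0.21B$ cutoff used inside \nameref{algo:test-siblings}.

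I would then verify term by term that each failure probability is $o(1)$. The genealogical events (items~\ref{item:clique_has_unique_parent}--\ref{item:bad_ancestral_paths}) fail with probability at most a constant times $N^{-1} e^{C_3 T \log \alpha}$ or $(\alpha/2 + \delta)^{3T} 2^{8T}/N$; substituting $T = \eps \log N$ turns the $T$-exponential factors into polynomial ones, $e^{C_3 T \log\alpha} = N^{C_3 \eps \log \alpha}$ and $(\alpha/2+\delta)^{3T} 2^{8T} = N^{O(\eps \log\alpha)}$, so each probability has the shape $N^{-1 + O(\eps\log\alpha)}$, which tends to $0$ precisely because $\eps$ is small relative to $1/\log\alpha$. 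For item~\ref{item:siblings_symbols}, the two lemmas fail with probability $O(\alpha^{3T} N^3 \exp(-\gamma^2 B)) + O(1/N)$; writing $\alpha^{3T} = N^{3\eps\log\alpha}$ and $B = \beta \log N$ gives $N^{3 + 3\eps\log\alpha - \gamma^2 \beta}$, which is $o(1)$ once $\beta$ is chosen large relative to $1/\gamma^2$. Finally item~\ref{item:many-awesome} fails with probability $\alpha^{-\Omega(T)}$, a negative power of $N$ since $T = \eps\log N$, and item~\ref{item:awesome-are-b-good} with probability $e^{-\Omega(B)} = N^{-\Omega(\beta)}$, both $o(1)$. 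A union bound over these six events then closes the proof.

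The only real obstacle is the competition in items~\ref{item:clique_has_unique_parent}--\ref{item:siblings_symbols} between the $T$-exponential growth factors and the polynomial-in-$N$ denominators. The argument closes only because $T$ is logarithmic in $N$ with a sufficiently small leading constant, so I would be careful to make explicit that the hidden constants in the $O(\eps\log\alpha)$ exponents can be driven strictly below $1$ by taking $\eps = O(1/\log\alpha)$, and that $\beta$ (hence $B$) may be chosen after $\gamma$ so that $\gamma^2 \beta > 3 + 3\eps\log\alpha$. With these inequalities secured, every summand in the union bound is $o(1)$ and the proposition follows.
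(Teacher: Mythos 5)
Your proposal is correct and matches the paper's treatment: the paper gives no separate proof of \cref{prop:reductions}, treating it exactly as you do---as a compilation of \cref{lem:clique_has_unique_parent,lem:main_sibling_structure,cor:few-colls,lem:unique_joint_LCA,lem:bad_ancestral_paths,lem:siblings_symbols,lem:nonsiblings_symbols,prop:many-awesome,lem:awesome-are-b-good} closed by a union bound, which the paper only mentions in passing after the statement. Your explicit choice of $\gamma = 10^{-3}$ (recovering the $24.9\%$ and $18.85\%$ thresholds exactly) and your term-by-term verification that each failure probability is $o(1)$ under $T = \eps \log N$, $\eps = O(1/\log\alpha)$, and $B = \beta\log N$ with $\beta$ chosen after $\gamma$ makes rigorous precisely what the paper leaves implicit.
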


The ``probability tending towards 1'' portion of \cref{thm:main-formal} can be quantified via a union bound on the probability of failure of any of the events in \cref{prop:reductions}, while the ``$|V(\hat{\cP})| \geq \eta(\alpha) |V(\cP)|$'' guarantee comes from the fact that we recover $100\%$ of the awesome nodes in conjunction with Condition~\ref{item:many-awesome}.
With this as a simplification, we proceed with the proof of \cref{thm:main-formal}.

The upcoming lemma (\cref{lem:extant_reconstruct}) proves the correctness of the very first iteration (depth 1 from depth 0), and therefore serves as the base case.
The inductive step (\cref{lem:level_k_reconstruct}) is presented immediately afterwards.
For the remainder of this section, we write $\hP{k}$ to denote the depth-$k$ reconstructed pedigree after the $k$th iteration of \nameref{algo:reconstruct_all}, ($\hP{0}$ is the depth-$0$ pedigree of all the extant nodes).
In contrast, let $\cP_{k}$ denote the subpedigree of $\cP$ (the ground truth) induced by graded levels $V_0$ up to $V_k$.

\begin{lemma}
\label{lem:extant_reconstruct}
Let $\hat{G}_0$ denote the estimated $3$-regular siblinghood hypergraph for the extant nodes (line~\ref{algo:line:hypergraph} of \nameref{algo:test-siblings}).
Consider the pedigree $\hP{1}$ created by \nameref{algo:construct-parents} applied to $(\hP{0}, \hat{G}_0)$. 
Then there exists an injective homomorphism $\phi:\hP{1} \to \cP_1$ so that the induced subgraph on $\phi(\hP{1})$ is isomorphic to $\hP{1}$.
Moreover, $\phi(\hP{1})$ contains $A_{\leq 1}$, where $A_{\leq 1}$ is the set of awesome nodes at levels $\leq 1$ in $\cP$.
\end{lemma}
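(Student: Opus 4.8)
The plan is to reduce everything to one clean fact: at the base case the estimated hypergraph $\hat{G}_0$ coincides \emph{exactly} with the true siblinghood hypergraph $G_0$, after which \nameref{algo:construct-parents} reconstructs precisely the true parent couples of the large maximal sibling cliques. First I would argue $\hat G_0 = G_0$. At iteration $k=1$ no symbol collection occurs: \nameref{algo:test-siblings} is run on the true extant gene sequences, so every extant node has all $B$ blocks recovered and passes the filter on line~\ref{algo:line:mostsymbols}, giving $V = V_0$. By Condition~\ref{item:siblings_symbols} of \cref{prop:reductions}, every true sibling triple overlaps in at least $24.9\%$ of blocks while every non-sibling triple overlaps in at most $18.85\%$; since the test threshold $0.21 B$ lies strictly between these two values, \nameref{algo:test-siblings} produces no false positives and no false negatives. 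Hence $\hat G_0 = G_0$.

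Next I would record the structure of the maximal cliques. Applying Condition~\ref{item:main_sibling_structure} at level $0$, the maximal cliques of $G_0$ are vertex-disjoint and partition $V_0$. Because two extant individuals are siblings exactly when they share a parent couple, each maximal clique $\cC$ is precisely the set of all extant children of a single couple, and by Condition~\ref{item:clique_has_unique_parent} (noting $|\cC|\geq d\geq 3$ whenever a parent is attached) that couple $\tilde c \in V_1(\cP)$ is the unique parent of $\cC$. I then define $\phi$ to be the identity on $V_0$ and, for each level-$1$ node $\hat u$ that \nameref{algo:construct-parents} creates from a maximal clique $\cC$ with $|\cC|\geq d$, set $\phi(\hat u) := \tilde c$. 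Injectivity follows since distinct maximal cliques are vertex-disjoint while all children of a fixed $\tilde c$ lie in a single clique, so distinct cliques have distinct true parents. To see that $\phi$ is an edge-preserving map whose image induces a copy of $\hP{1}$, note that $\tilde c$ is a parent of every node of $\cC$ and, since $\cC$ is \emph{exactly} the set of children of $\tilde c$, has no further child in $V_0$; as $\cP_1$ contains no edges within a level, the induced subgraph on $\phi(\hP{1})$ has precisely the edges of $\hP{1}$.

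Finally I would verify the awesome-containment claim. At level $0$ this is immediate, since the awesome level-$0$ nodes $A_0$ form a subset of $V_0 \subseteq \phi(\hP{1})$. For an awesome node $\tilde c$ at level $1$, \cref{def:awesome_node} gives that $\tilde c$ is $d$-rich, hence has at least $d$ children, all of which are mutual siblings and therefore lie in one maximal clique $\cC$ of $G_0$ with $|\cC|\geq d$. Since $\hat G_0 = G_0$, \nameref{algo:construct-parents} attaches a parent to $\cC$ on line~\ref{algo:line:drich}, and by the previous paragraph the $\phi$-image of that parent is the unique parent of $\cC$, namely $\tilde c$. Thus $\tilde c \in \phi(\hP{1})$, establishing $A_{\leq 1} \subseteq \phi(\hP{1})$.

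The main obstacle is really the first and second steps: pinning down that $\hat G_0 = G_0$ exactly and that maximal cliques are in bijection with true parent couples in a way that leaves \emph{no extra induced edges}. The delicate point is verifying that a maximal sibling clique is exactly the full set of children of its parent couple (rather than a strict subset), which is what rules out spurious edges in the induced subgraph and simultaneously guarantees that a $d$-rich awesome node generates a clique of size at least $d$. Once this correspondence is in place, the verification of injectivity, the homomorphism property, and awesome-containment is routine bookkeeping.
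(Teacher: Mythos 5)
Your proposal is correct and follows essentially the same route as the paper: establish $\hat G_0 = G_0$ from the overlap separation (Condition~\ref{item:siblings_symbols}), map each constructed level-$1$ node to the unique parent couple of its maximal clique (Conditions~\ref{item:clique_has_unique_parent} and~\ref{item:main_sibling_structure}), and use $d$-richness of awesome nodes to get containment of $A_{\leq 1}$. The only cosmetic difference is in the edge-preservation step — you observe directly that a maximal clique of $G_0$ equals the full set of children of its parent couple, whereas the paper argues that any additional true child would have joined the clique via the $24.9\%$ overlap bound — but these are the same fact stated in two ways.
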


\begin{proof}
    
     Let $G_0$ denote the true siblinghood hypergraph on extant nodes with at least two siblings. By Condition~\ref{item:siblings_symbols}, we have that $\hat{G_0} \cong G_0$. 
     Since both graphs have the same set of vertices, we simply write $\hat{G_0} = G_0$.
     
    This gives a natural, explicit characterization of $\phi$.
    For an extant node $v \in V_0(\cP_1)$, define $\phi(v) = v$ so that it is the identity map on the extant. 
    Given couple $\hat{u} \in V_1(\hP{1})$, define $\phi(\hat{u})$ to be the parent couple $u \in V_1(\cP_1)$ of the children of $\hat{u}$. 
    The condition $\hat{G_0} \cong G_0$ implies that at least one such choice for $u$ exists, and moreover by Condition~\ref{item:clique_has_unique_parent}, $u$ is the unique parent.
    
    \textit{$\phi$ is injective}: Let $\hat{u}, \hat{v} \in V_1(\hP{1})$ with $\hat{u} \neq \hat{v}$.
    At the extant level, the maximal cliques in $G_0$ are vertex disjoint by Condition~\ref{item:main_sibling_structure}. 
    Hence, the children of $\hat{u}$ and the children of $\hat{v}$ have empty intersection. Moreover in $\cP_1$, vertex-disjoint maximal cliques have distinct parents. 
    Therefore, $\phi(\hat{u}) \neq \phi(\hat{v})$, as desired. 
    
    \textit{$\phi$ respects edges}: 
    We already know that $(\hat{u}, v) \in E(\hP{1}) \implies (\phi(\hat{u}),v) \in E(\cP_{1})$.
    Now suppose that $(\phi(\hat{u}), v)$ is an edge in $\cP_{1}$ for $\hat{u} \in V_1(\hP{1})$ and $v \in V_0(\hP{1})$. 
    Since $u$ is in the image of $\phi$, it follows that $u$ has at least $3$ children $w, x, y$ that passed the siblings test in our algorithm. 
    If $v$ is one of $w, x, y$, we're done, so suppose not. 
    By Condition~\ref{lem:siblings_symbols}, the extant triples $\{v, w, x\}$, $\{v, x, y\}$, and $\{v, w, y\}$ all have at least $24 \%$ overlap.
    Therefore, $v, w, x, y$ form a clique in $\hat{G_0}$, and line \ref{algo:line:drich} of \nameref{algo:construct-parents} states that $\hat{u}$ is a parent of all four, so $(\hat{u}, v)$ is an edge in $\hP{1}$. 
    
    \medskip
    
    \textit{The image of} $\phi$ \textit{contains the awesome nodes in} $\cP_{1}$: 
    This part is trivially true for the extant nodes, so consider only the awesome nodes $A_1 \subset V_1(\cP_1)$. 
    By definition, any awesome node $u \in V_1(\cP_1)$ is $d$-rich. 
    Since $d \geq 3$, the children of $u$ form a maximal clique of size at least $3$ in $G_0$.
    Therefore, \nameref{algo:construct-parents} creates a parent $\hat{u}$ for these children in $\hP{1}$, which gives the pre-image of $u$.
\end{proof}

% ======== Main Theorem / Lemma
\begin{lemma}
\label{lem:level_k_reconstruct}
Let $k \geq 2$ and suppose that we are given $\hP{k-1}$.
Assume that there exists an injective homomorphism $\phi: \hP{k-1} \to \cP_{k - 1}$ which satisfies 
\begin{enumerate}
    \item $\phi\vert_{\hP{0}} \equiv Id$,
    \item $\phi(\hP{k-1}) \subset \cP_{k-1}$ induces a subgraph isomorphic to $\hP{k-1}$, and
    \item $\phi(\hP{k-1})$ contains the awesome nodes in sets $A_0, A_1, \ldots, A_{k-1}$.
\end{enumerate}
Let $\hP{k}$ be the level-$k$ extension of $\hP{k-1}$, via lines 4 through 7 of \nameref{algo:reconstruct_all}.
Then there exists a level-$k$ extension of the map $\phi: \hP{k} \to \cP_{k}$ with the same properties.
\end{lemma}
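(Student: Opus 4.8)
The plan is to follow the structure of the base case \cref{lem:extant_reconstruct} but with one genuinely new ingredient: for $k \ge 2$ the algorithm no longer sees the true gene sequences at level $k-1$, so I must first certify that \nameref{algo:collect-symbols}, run on the reconstructed pedigree $\hP{k-1}$, produces symbols faithful enough to drive the siblinghood test. Accordingly I would organize the argument into three phases matching the three subroutines invoked on lines 4--7 of \nameref{algo:reconstruct_all}, and only then construct and verify the level-$k$ extension of $\phi$.

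\emph{Phase 1 (symbol collection).} Fix a level-$(k-1)$ node $v \in \hP{k-1}$ and let $p = \phi(v)$. Because the inductive hypothesis makes $\phi(\hP{k-1})$ an induced subgraph isomorphic to $\hP{k-1}$ with $\phi$ the identity on the extant, any extant triple $(x,y,z)$ for which $v$ is a joint LCA in $\hP{k-1}$ has $p$ as a joint common ancestor (with three distinct children) in $\cP$. I would then upgrade this to ``$p$ is the unique LCA of $x,y,z$ in $\cP$'' using \cref{lem:unique_joint_LCA} together with the global collision bound of \cref{cor:few-colls}, and conclude via \cref{lem:bad_ancestral_paths} that any symbol shared by $x,y,z$ must already be present at $p$. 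This gives soundness (fact (A)): every symbol \nameref{algo:collect-symbols} records for $v$ is a genuine symbol of $p$, and crucially this uses only the collision bound, so it holds for every reconstructed node. For completeness I restrict to awesome $p$, which by inductive hypothesis lie in the image and, by \cref{lem:awesome-are-b-good}, are $b$-good for at least $99\%$ of blocks; the witnessing triples can be chosen inside the induced $d$-ary awesome subtree of \cref{claim:awesome-tree}, which lives in $\phi(\hP{k-1})$, so their preimages exhibit $v$ as a joint LCA in $\hP{k-1}$ and \nameref{algo:collect-symbols} recovers both symbols. Hence every awesome node clears the $0.99B$ filter on line~\ref{algo:line:mostsymbols} of \nameref{algo:test-siblings}.

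\emph{Phases 2 and 3 (testing and assignment).} Let $V$ be the filtered vertex set. Soundness makes the recovered overlap of any triple at most its true overlap, so the non-sibling bound of Condition~\ref{item:siblings_symbols} keeps all non-sibling triples below the $0.21B$ threshold (no false positives); conversely, since every member of $V$ has at least $99\%$ of its blocks recovered, at least $97\%$ of blocks of any triple are fully recovered, so true siblings retain overlap above $0.21B$ (no false negatives). Thus $\hat{G}_{k-1}$ agrees with the true hypergraph $G_{k-1}$ on $\phi(V)$. The extension of $\phi$ then proceeds exactly as in the base case: for each maximal clique $\cC$ of size $\ge d$ produced by \nameref{algo:construct-parents}, the image $\phi(\cC)$ is a genuine clique in $G_{k-1}$, so by Condition~\ref{item:clique_has_unique_parent} (\cref{lem:clique_has_unique_parent}) it has a unique parent $p$, and I set $\phi(\hat u) = p$. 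Injectivity follows because two distinct maximal cliques cannot share a parent (they would merge), invoking the edge-disjointness of Condition~\ref{item:main_sibling_structure} to see that the clique-peeling in \nameref{algo:construct-parents} recovers the true maximal cliques; the forward edge direction and the containment of the level-$k$ awesome nodes (which are $d$-rich, hence receive a reconstructed parent on line~\ref{algo:line:drich}) are immediate.

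\emph{Main obstacle.} The delicate point is the \emph{induced} (reverse) edge direction: if $p = \phi(\hat u)$ has a child $v' = \phi(w)$ with $w \in \hP{k-1}$, I must show $(\hat u, w) \in E(\hP{k})$, for otherwise the induced image would carry an edge absent from $\hP{k}$. If $v'$ passes the filter it is a sibling of the clique members and, by maximality, lands in $\cC$, exactly as in the base case; the real work is to exclude the possibility that an in-image child $v'$ fails the recovery filter. I would address this by noting that being reconstructed forces $v'$ to sit atop an induced tree of reconstructed descendants with branching at least $d$ down to the extant (each reconstruction step requires a size-$\ge d$ clique of already-reconstructed children), and then arguing, via the collision bound, that the portion of this tree relevant to recovering $v'$'s symbols is acyclic enough for the percolation estimate of \cref{lem:awesome-are-b-good} to apply and force $v'$ past the filter. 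Pinning down precisely how the at-most-three-collisions budget interacts with this reconstructed descendant tree --- so that no spurious edge survives into the induced image --- is the crux of the inductive step and the place where the argument must be most careful.
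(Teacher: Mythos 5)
Your three phases are, almost item for item, the paper's own proof: Phase~1 is \cref{clm:symbol_consistency} (soundness via \cref{lem:unique_joint_LCA,lem:bad_ancestral_paths} and \cref{cor:few-colls}) together with \cref{clm:awesome_subtree_k,clm:awesome_symbol_recovery} (completeness for awesome nodes via \cref{claim:awesome-tree,lem:awesome-are-b-good}); Phase~2 is \cref{lem:sibling_graph_reconstruct}; Phase~3 is \cref{lem:sibling_graph_cliquing} plus the extension of $\phi$, with injectivity, the forward edge direction, and coverage of the level-$k$ awesome nodes argued exactly as you argue them. Up to that point the proposal and the paper coincide.

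The divergence is the step you call the main obstacle, and here you should know that the paper does not contain the argument you are looking for. Its reverse-direction step takes the maximal clique $\cC' \subset G_{k-1}$ containing $v$ and invokes \cref{lem:sibling_graph_cliquing} to assert that $\phi^{-1}(\cC')$ is a maximal clique of $\hat{G}_{k-1}$; but that assertion already presupposes that every in-image member of $\cC'$ --- in particular $\hat{v}$ --- is a vertex of $\hat{G}_{k-1}$, i.e., survived the filter on line~\ref{algo:line:mostsymbols} of \nameref{algo:test-siblings}, which is precisely the case in question (\cref{lem:sibling_graph_cliquing} concerns only the subgraph induced by $\phi(\hat{G}_{k-1})$ and says nothing about in-image nodes that were filtered out). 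Moreover, the repair you sketch --- a percolation estimate on the reconstructed $d$-ary descendant tree --- cannot close the gap, because the obstruction need not be a failure of \emph{recovery}: consider a couple $v$ whose two members are themselves siblings. This structure costs a single collision, so it is not excluded by \cref{cor:few-colls}, and it arises a constant number of times per generation in expectation. Its children are mutual siblings with inflated overlap, so $\hat{v}$ is built at step $k-1$ and lies in the image; yet $v$ carries a \emph{repeated} symbol in roughly half of its blocks, so by soundness at most one symbol can ever be recorded there, and $\hat{v}$ can never have $0.99|B|$ fully recovered (two-symbol) blocks --- no matter how rich its reconstructed descendant tree is. Meanwhile $v$'s parent $p$ can still be reconstructed from its other, awesome children, producing an edge $(p, v)$ in $\cP_k$ with no counterpart in $\hP{k}$. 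So the step you flag is a genuine gap, in your proposal and in the paper alike; closing it requires modifying something (the notion of ``fully recovered'' in the filter, the algorithm, or the induced-subgraph form of the conclusion), not a more careful interaction between the collision budget and the percolation estimate.
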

% ===========================

We prove this in two stages.
The first part (\cref{lem:sibling_graph_reconstruct}) asserts that we reconstruct the sibling relationships correctly, while the latter (\cref{lem:sibling_graph_cliquing}) assures that the cliques of this estimated siblinghood hypergraph are actually the faithful, ``largest possible'' groupings of siblings.

\begin{lemma}
\label{lem:sibling_graph_reconstruct}
    Assume the hypotheses of \cref{lem:level_k_reconstruct}, and let $\hat{G}_{k-1}$ be the estimated siblinghood hypergraph constructed by \nameref{algo:test-siblings}, line~\ref{algo:line:hypergraph}, on input $\hP{k-1}$.
    %%%%% ===== to be clear that we're referring to the algorithm's correctness, cite the algo.
    % Define the \emph{siblinghood hypergraph} $\hat{G}_{k-1}$ at level $k-1$ of $\hP{k-1}$ to have
    % \begin{itemize}
    %     \item vertex set given by all nodes at level $k-1$ such that~\nameref{algo:collect-symbols} recovers both symbols in at least $0.99|B|$ blocks, and
    %     \item hyperedge set given by triples that pass the overlap test described in \nameref{algo:test-siblings}. 
    % \end{itemize} 
    Then the subgraph of $G_{k-1}$ induced by $\phi(\hat{G}_{k-1})$ is isomorphic to $\hat{G}_{k-1}$, and moreover $\phi(\hat{G}_{k-1})$ contains all of the awesome nodes $A_{k-1}$ at level $k-1$.
\end{lemma}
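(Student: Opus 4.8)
The plan is to show that \nameref{algo:test-siblings}, run on the symbols recovered for $\hP{k-1}$, commits no errors on the vertices it retains and that every awesome node survives its filtering step; since $\phi$ is injective by the inductive hypothesis, this gives the claimed isomorphism onto the induced subgraph of $G_{k-1}$. I would organize the argument around three facts: \emph{symbol consistency} (every recovered symbol is genuine), \emph{completeness of recovery on awesome nodes} (they pass the $0.99|B|$-filter with their full block content intact), and the \emph{overlap separation} between siblings and non-siblings coming from \cref{lem:siblings_symbols} and \cref{lem:nonsiblings_symbols}.

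First I would establish symbol consistency: whenever \nameref{algo:collect-symbols} records a symbol $\sigma$ in block $b$ for a node $\hat v \in V_{k-1}(\hP{k-1})$, the true node $\phi(\hat v)$ carries $\sigma$ in block $b$, so that $\hat s_{\hat v}(b) \subseteq \sigma_{\phi(\hat v)}(b)$. The recording is triggered by an extant triple $(x,y,z)$ for which $\hat v$ is a joint LCA in $\hP{k-1}$ and which shares $\sigma$ in block $b$. Because $\phi$ is the identity on the extant and $\phi(\hP{k-1})$ induces an isomorphic copy of $\hP{k-1}$ (the inductive hypotheses), the distinct children of $\hat v$ witnessing the joint-LCA property map to distinct children of $\phi(\hat v)$ in $\cP$ that are genuine ancestors of $x,y,z$. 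The $\leq 3$-collision bound (\cref{cor:few-colls}) then forces $\phi(\hat v)$ to be an LCA of $x,y,z$ in all of $\cP$: a lower common ancestor, together with the out-degree-$3$ branching at $\phi(\hat v)$, would produce at least $4$ collisions, which is excluded. By \cref{lem:unique_joint_LCA}, $\phi(\hat v)$ is the \emph{unique} LCA of $x,y,z$. Finally, by the infinite-alphabet assumption $\sigma$ originates from a single founder $f$, which is a common ancestor of $x,y,z$ and hence an ancestor of the unique LCA $\phi(\hat v)$; if $f \neq \phi(\hat v)$, then \cref{lem:bad_ancestral_paths} forces every inheritance path from $f$ to at least one of $x,y,z$ through $\phi(\hat v)$, so $\phi(\hat v)$ must carry $\sigma$.

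Next I would check that the awesome nodes $A_{k-1}$ survive the filter. Fix $u \in A_{k-1}$; by inductive hypothesis $u = \phi(\hat u)$ for some $\hat u \in \hP{k-1}$. By \cref{lem:awesome-are-b-good} the node $u$ is $b$-good for at least $99\%$ of blocks, and (as in the proof of that lemma) awesome nodes carry exactly two distinct symbols per block. For each $b$-good block the two witnessing extant triples have $u$ as a joint LCA in $\cP$, hence $\hat u$ as a joint LCA in $\hP{k-1}$, so \nameref{algo:collect-symbols} locates them and records both symbols $\sigma_1 \neq \sigma_2$, making block $b$ fully recovered with $\hat s_{\hat u}(b) = \sigma_u(b)$. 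Thus $\hat u$ has at least $0.99|B|$ fully recovered blocks and survives the filter on line~\ref{algo:line:mostsymbols}, giving $A_{k-1} \subseteq \phi(V(\hat G_{k-1}))$. For the test itself, consider a retained triple $\{u,v,w\}$. \textbf{No false positives:} if it passes the test, symbol consistency yields at least $0.21|B|$ blocks on which $\sigma_{\phi(u)},\sigma_{\phi(v)},\sigma_{\phi(w)}$ truly overlap, exceeding the non-sibling ceiling of $18.85\%$ from \cref{lem:nonsiblings_symbols}, so $\phi(u),\phi(v),\phi(w)$ are mutual siblings, i.e.\ an edge of $G_{k-1}$. \textbf{No false negatives:} if $\phi(u),\phi(v),\phi(w)$ are siblings, then on the at least $(1-3\cdot 0.01)|B| = 0.97|B|$ blocks fully recovered for all three the recovered sets equal the true diploid blocks, and by \cref{lem:siblings_symbols} at least $0.249|B|$ blocks carry a true mutual overlap, so at least $(0.249 - 0.03)|B| = 0.219|B| \geq 0.21|B|$ blocks exhibit a recovered overlap and the test passes. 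Since $\phi$ is injective, these two directions show that under $\phi$ the edges of $\hat G_{k-1}$ correspond exactly to the edges of $G_{k-1}$ induced on $\phi(V(\hat G_{k-1}))$.

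I expect the main obstacle to be the symbol-consistency step, specifically transferring the joint-LCA property computed in the partial reconstruction $\hP{k-1}$ to the full pedigree $\cP$ (which contains ancestors invisible to the algorithm) and then chasing the shared symbol back to that LCA via \cref{lem:unique_joint_LCA} and \cref{lem:bad_ancestral_paths}. The bookkeeping in the no-false-negatives direction is routine but is where the constants $0.21$, $0.249$, $0.1885$, and the $0.99$ filter threshold must be verified to line up with the required $0.21|B|$ margin.
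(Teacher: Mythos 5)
Your route is the same as the paper's: your symbol-consistency step is the paper's \cref{clm:symbol_consistency}, your overlap bookkeeping (recovered overlaps are genuine, so $0.21 > 0.1885$ rules out false positives; $0.249 - 3\times 0.01 \geq 0.21$ rules out false negatives) is identical to the paper's, and your filter-survival step plays the role of the paper's \cref{clm:awesome_symbol_recovery}. In one place you are in fact more careful than the paper: in \cref{clm:symbol_consistency} the paper passes from ``$\hat u$ is a joint LCA in $\hP{k-1}$'' to ``$u$ is a joint LCA in $\cP$'' essentially by assertion before invoking \cref{lem:unique_joint_LCA,lem:bad_ancestral_paths}, whereas your collision count --- a strict descendant of $\phi(\hat v)$ that is a common ancestor of $x,y,z$, combined with the outdegree-$3$ branching at $\phi(\hat v)$, forces two independent undirected cycles in $\anc{x,y,z}$ and hence at least $4$ collisions --- is a correct and explicit justification of that step.

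The genuine gap is in the reverse transfer, inside your argument that awesome nodes survive the $0.99|B|$ filter: ``the two witnessing extant triples have $u$ as a joint LCA in $\cP$, \emph{hence} $\hat u$ as a joint LCA in $\hP{k-1}$.'' That ``hence'' is not a valid implication. $b$-goodness (\cref{def:new_b_good}) only asserts that witnessing triples exist somewhere in $\cP$; the inheritance paths from $u$ down to those witnesses may pass through non-awesome couples, and the inductive hypothesis guarantees only that \emph{awesome} nodes lie in the image of $\phi$. If a connecting path runs through a node absent from $\hP{k-1}$, then \nameref{algo:collect-symbols}, which sees only $\hP{k-1}$, cannot certify the joint-LCA condition for that triple, and your claim that it ``locates'' the witnesses fails. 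This is precisely why the paper introduces the awesome subtree (\cref{def:awesome_subtree}), proves $\phi^{-1}(S_u) = \desc{\hat{u}}$ (\cref{clm:awesome_subtree_k}), and uses the fact --- established inside the \emph{proof} of \cref{lem:awesome-are-b-good} via \cref{claim:awesome-tree}, not by its statement --- that the witnesses can be chosen with all inheritance paths contained in the induced $d$-ary tree of awesome descendants. With witnesses chosen that way, every node on the connecting paths is awesome, hence in the image of $\phi$; since $\phi(\hP{k-1})$ induces a subgraph isomorphic to $\hP{k-1}$, the whole branching structure is visible to the algorithm, and the LCA property transfers from $\cP$ down to $\hP{k-1}$ (any violation in $\hP{k-1}$ would map to one in $\cP$). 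You need this awesome-subtree argument, or an equivalent, to close the filter-survival step; the rest of your proof stands.
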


     The upcoming statements (\cref{clm:awesome_subtree_k}, \cref{clm:symbol_consistency} and \cref{clm:awesome_symbol_recovery}) are pivotal for the proof of~\cref{lem:sibling_graph_reconstruct}.
     
    \begin{definition}
    \label{def:awesome_subtree}
    For an awesome node $u \in \cP_k$, its \emph{awesome subtree} is the subgraph of $\cP_{k}$ that is the union of all paths from $u$ to extant nodes that consist entirely of awesome nodes.
    \end{definition}
    
    \begin{claim} 
\label{clm:awesome_subtree_k} Suppose that there is a reconstruction map $\phi: \hP{k-1} \to \cP_{k - 1}$ satisfying the hypotheses in \cref{lem:level_k_reconstruct}. 
Then for any awesome node $u = \phi(\hat{u}) \in V_{k-1}(\cP_{k-1})$, its awesome subtree $S_u$ satisfies $\phi^{-1}(S_u) = \desc{\hat{u}}$.
\end{claim}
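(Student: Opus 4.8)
The plan is to prove the set equality $\phi^{-1}(S_u) = \desc{\hat u}$ by establishing the two inclusions separately, with ``$\subseteq$'' following easily from edge preservation and ``$\supseteq$'' requiring the real work. Throughout I will exploit that $\phi$ is not merely a homomorphism but an \emph{induced} isomorphism onto its image (hypothesis 2 of \cref{lem:level_k_reconstruct}): for $\hat a, \hat b \in \hP{k-1}$ the pair $(\hat a, \hat b)$ is an edge of $\hP{k-1}$ if and only if $(\phi(\hat a), \phi(\hat b))$ is an edge of $\cP$. Consequently, a directed path in $\cP$ all of whose nodes lie in the image of $\phi$ pulls back to a directed path in $\hP{k-1}$, and conversely a path in $\hP{k-1}$ pushes forward to $\cP$.

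For the inclusion $\phi^{-1}(S_u) \subseteq \desc{\hat u}$: every node of $S_u$ is awesome, since by \cref{def:awesome_subtree} it lies on a fully awesome path out of $u$; hence by hypothesis 3 (and hypothesis 1 for the extant endpoints) it lies in the image of $\phi$. Given $v \in S_u$, the awesome path $u \rightsquigarrow v$ has all of its nodes awesome, hence in $\mathrm{im}(\phi)$, so pulling it back along $\phi$ yields a directed path $\hat u \rightsquigarrow \phi^{-1}(v)$ in $\hP{k-1}$ and therefore $\phi^{-1}(v) \in \desc{\hat u}$.

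The harder inclusion $\desc{\hat u} \subseteq \phi^{-1}(S_u)$ amounts to showing that every reconstructed descendant $\hat v$ of $\hat u$ maps to a node $v = \phi(\hat v)$ that is awesome and lies on a fully awesome path from $u$ down to the extant level. I plan to first prove \emph{awesomeness} of $v$ by \textbf{upward} induction on the level $j$ of $\hat v$. In the base case $j = 0$, the node $v$ is an extant descendant of $u$ in $\cP$ (a path $\hat u \rightsquigarrow \hat v$ pushes forward to $u \rightsquigarrow v$); since $u$ is awesome it is not an ancestor of any extant node carrying a collision in its ancestral pedigree, so $v$ has no such collision and is awesome by \cref{def:awesome_node}. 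For the inductive step, a reconstructed node $\hat v$ at level $j \geq 1$ was created by \nameref{algo:construct-parents} as the parent of a clique of size $\geq d$ (line~\ref{algo:line:drich}), so it has at least $d$ distinct children $\hat c_1, \dots, \hat c_m$ in $\hP{k-1}$, all descendants of $\hat u$ at level $j-1$. By the inductive hypothesis each $c_i = \phi(\hat c_i)$ is awesome, hence $d$-rich; edge preservation makes the $c_i$ distinct children of $v$ in $\cP$, so $v$ has at least $d$ $d$-rich children and is itself $d$-rich. Combined with condition (2) of \cref{def:awesome_node}, which $v$ inherits as a descendant of the awesome node $u$, this shows $v$ is awesome.

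Finally, to upgrade awesomeness of all of $\phi(\desc{\hat u})$ to membership in $S_u$, I will note that every node of $\hP{k-1}$ has a directed path to an extant node, by a one-line induction on the reconstruction (each node is created as a parent of lower-level reconstructed nodes, bottoming out at the extant level). Thus for $\hat v \in \desc{\hat u}$ the reconstruction supplies paths $\hat u \rightsquigarrow \hat v \rightsquigarrow \hat x$ with $\hat x$ extant, lying entirely inside $\desc{\hat u}$; pushing these forward and invoking the awesomeness just established, the image $u \rightsquigarrow v \rightsquigarrow x$ is a fully awesome path from $u$ to an extant node through $v$, which is exactly the assertion $v \in S_u$. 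I expect the main obstacle to be the awesomeness step, and specifically its $d$-richness half: because $d$-richness is defined recursively from the extant level upward, it cannot be checked for $v$ directly but must be fed in from the already-handled lower-level children, which is precisely what forces the upward induction and the use of the ``clique of size $\geq d$'' guarantee of \nameref{algo:construct-parents}.
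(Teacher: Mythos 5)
Your proposal is correct and takes essentially the same approach as the paper: the inclusion $\phi^{-1}(S_u) \subseteq \desc{\hat{u}}$ comes from hypothesis 3 of \cref{lem:level_k_reconstruct} (awesome nodes lie in the image of $\phi$) plus pulling paths back through the induced isomorphism, while the reverse inclusion combines the $d$-richness guaranteed by line~\ref{algo:line:drich} of \nameref{algo:construct-parents}, pushed forward through the injective homomorphism $\phi$, with the fact that descendants of the awesome node $u$ automatically inherit condition (2) of \cref{def:awesome_node}. The only difference is presentational: you spell out the upward induction from the extant level that the paper compresses into the two assertions that every node of $\hP{k-1}$ is $d$-rich and that this property transfers to $\phi(\desc{\hat{u}})$ in $\cP$.
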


\begin{proof}[Proof of \cref{clm:awesome_subtree_k}]
     Note that Line~\ref{algo:line:drich} of \nameref{algo:construct-parents} ensures that every node in $\hP{k-1}$ is $d$-rich. 
     Since $\phi$ is an injective homomorphism, it follows that every node in $\phi(\desc{\hat{u}})$ is also $d$-rich in $\cP$.
     Furthermore, $u$ being awesome implies that all of its descendants are awesome in $\cP$, since none of its descendants can have collisions in its ancestral pedigree (\cref{def:awesome_node}).
     By the definition of the awesome subtree (\cref{def:awesome_subtree}), it holds that $\phi(\desc{\hat{u}}) \subseteq S_u$. 
    
    For the other direction ($\phi(\desc{\hat{u}}) \supseteq S_u$), let $v \in V_0(\cP)$ be an extant node so that there is a path from $u$ to $v$ consisting only of awesome nodes.
    By condition 3 of \cref{lem:level_k_reconstruct}, all of the nodes along this path are in the image of $\phi$.
    \end{proof} 
    
    \begin{claim}
    \label{clm:symbol_consistency}
    Let $\phi$ be as in \cref{lem:level_k_reconstruct}, and let $u = \phi(\hat{u})$ for some $\hat{u} \in V_{k-1}(\hP{k-1})$. Suppose that in block $b$ the symbols $\hat{\sigma}_1$ and $\hat{\sigma}_2$ are recovered for $\hat{u}$ by applying Algorithm 1 to $\hat{u}$. Then it holds that $u$ also has symbols $\hat{\sigma}_1, \hat{\sigma}_2$ in block $b$.
    \end{claim}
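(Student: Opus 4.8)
The plan is to transport the structural witness found by \nameref{algo:collect-symbols} from the reconstructed pedigree $\hP{k-1}$ into the true pedigree $\cP$ via $\phi$, and then to locate the symbol $\sigma$ at $u$ by tracing its unique founder. I will argue for a single recovered symbol $\sigma \in \{\hat{\sigma}_1, \hat{\sigma}_2\}$; the argument for the other is identical.

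First I would unwind the definitions. Since $\sigma$ was recorded for $\hat{u}$ in block $b$, line~\ref{algo:line:jointlca} of \nameref{algo:collect-symbols} guarantees an extant triple $(x,y,z)$ such that $\hat{u}$ is a joint LCA of $x,y,z$ in $\hP{k-1}$ and all three carry $\sigma$ in block $b$. By \cref{def:joint_LCA} there are distinct children $c_x, c_y, c_z$ of $\hat{u}$ in $\hP{k-1}$ that are ancestors of $x,y,z$ respectively. The hypotheses of \cref{lem:level_k_reconstruct} state that $\phi$ is an injective homomorphism whose image induces a subgraph isomorphic to $\hP{k-1}$ and which fixes the extant nodes; hence $\phi(c_x), \phi(c_y), \phi(c_z)$ are distinct children of $u$ in $\cP$, and because directed paths map to directed paths, each is an ancestor in $\cP$ of the corresponding extant node. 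Thus $u$ is a common ancestor of $x,y,z$ that reaches them through three distinct children.

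The main obstacle is upgrading ``common ancestor through distinct children'' to ``$u$ is the unique LCA of $x,y,z$ in $\cP$'': the joint-LCA test is performed inside $\hP{k-1}$, so LCA-ness need not transfer directly. I would establish this using the collision bound (Condition~\ref{item:unique_joint_LCA} of \cref{prop:reductions}, i.e.\ \cref{cor:few-colls}), namely that every triple has at most $3$ collisions. Writing $\mathcal{T} = \anc{x,y,z}$, the node $u$ has outdegree at least $3$ in $\mathcal{T}$, which by \cref{lem:collision_interp} already accounts for $2$ collisions. If $u$ were not an LCA, descending from $u$ produces an LCA $\ell$ that is a proper descendant of $u$ and a common ancestor of $x,y,z$; a spanning tree of the union of paths from $\ell$ to $x,y,z$ has three leaves, so its branch nodes---all lying in $\desc{\ell}$ and hence distinct from $u$---contribute at least $2$ further collisions via \cref{lem:collision_interp}. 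This forces $\mathsf{coll}(x,y,z) \geq 4$, a contradiction. Therefore $u$ is an LCA, and together with the distinct-children structure it is a joint LCA, so \cref{lem:unique_joint_LCA} makes it the unique LCA.

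Finally I would locate $\sigma$. Since all of $x,y,z$ carry $\sigma$ in block $b$ and the alphabet is infinite, $\sigma$ originates from a single founder, whose couple $f$ satisfies $\sigma \in \sigma_f(b)$ (see \cref{def:diploid}) and is a common ancestor of $x,y,z$ along inheritance paths that carry $\sigma$ at every coupled node. Descending from $f$ to an LCA and invoking uniqueness of the LCA shows $u \in \desc{f}$; if $f = u$ we are done, and otherwise $f$ is a strict ancestor of the LCA $u$, so \cref{lem:bad_ancestral_paths} yields some $x^\ast \in \{x,y,z\}$ all of whose paths from $f$ pass through $u$. In particular the $\sigma$-carrying inheritance path from $f$ to $x^\ast$ passes through $u$, whence $\sigma \in \sigma_u(b)$, as claimed.
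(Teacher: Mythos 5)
Your proof is correct and follows essentially the same route as the paper's own: transport the joint-LCA witness through $\phi$, use the at-most-$3$-collisions bound together with \cref{lem:unique_joint_LCA} to conclude $u$ is the unique LCA of $x,y,z$ in $\cP$, and then combine \cref{lem:bad_ancestral_paths} with the infinite-alphabet assumption to force the inheritance path of $\sigma$ through $u$. The only difference is that you explicitly argue the step the paper glosses over---that $u$, known only to be a joint LCA inside the induced image $\phi(\hP{k-1})$, is actually an LCA in all of $\cP$ (your spanning-tree collision count, giving $\geq 2+2=4$ collisions if some proper descendant of $u$ were a common ancestor)---which is a small but genuine gap in the paper's write-up that your argument fills correctly.
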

    
    \begin{proof}[Proof of \cref{clm:symbol_consistency}] 
           For $i = 1, 2$, suppose that nodes $x_i, y_i, z_i \in V_0(\hP{0}) = V_0(\cP)$ have the symbol $\hat{\sigma}_i$ in block $b$ and are used by~\nameref{algo:collect-symbols} to recover $\hat{\sigma}_i$ in block $b$ of $\hat{u}$. Recall that $x_i, y_i, z_i$ are all descended from distinct children of $\hat{u}$. 
           Let $\phi(\hat \cP_{k-1})$ induce subpedigree $\cQ$ in $\cP$.
           
           By the hypotheses of \cref{lem:level_k_reconstruct}, $\cQ \cong \hP{k-1}$ and so $u$ must be a common ancestor of $x_i, y_i, z_i$ in $\cQ$.
           By line~\ref{algo:line:jointlca} of \nameref{algo:collect-symbols} and because $\cQ \cong \hP{k-1}$, $\hat{u}$ -- and therefore $u$ -- is their joint LCA.
           With respect to $\cP$, Conditions~\ref{item:unique_joint_LCA} and \ref{item:bad_ancestral_paths} tell us the much stronger condition that $u$ is their only LCA, and that 
           all paths in $\cP$ from any common ancestor of $x_i, y_i, z_i$ to $x_i$ (without loss of generality) must pass through $u$.
           %all paths in $\cP$ from at least one of $x_i, y_i$, or $z_i$ to any other common ancestor of these three nodes must pass through $u$. 
        %   \pnote{TO DO: fix this part, bad ancestral paths only guarantees for one of the nodes that all pedigree paths go through the lca.}
        %   \ynote{fixed?}
           Therefore, if $x_i, y_i, z_i$ all inherit symbols $\hat{\sigma}_i$ in block $b$, the symbol $\hat{\sigma}_i$ must have passed through block $b$ of $u$ via the infinite symbols assumption.
    \end{proof}
    
    \begin{claim}
     
    \label{clm:awesome_symbol_recovery}
    Let $\phi$ be as in \cref{lem:level_k_reconstruct}, and let $u = \phi(\hat{u})$ for some $\hat{u} \in V_{k-1}(\hP{k-1})$.
    Suppose that $u$ is awesome in $\cP$.
    If $u$ is $b$-good and has symbols $\sigma_1, \sigma_2$ in block $b$, then \nameref{algo:collect-symbols} recovers the symbols $\sigma_1$ and $\sigma_2$ for $\hat{u}$ in block $b$.
    \end{claim}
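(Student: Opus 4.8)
The plan is to prove two complementary facts: that \nameref{algo:collect-symbols} can only ever record symbols from $\{\sigma_1,\sigma_2\}$ (so it records \emph{at most} these two), and that it genuinely finds valid extant witnesses for each of $\sigma_1$ and $\sigma_2$ (so it records \emph{both}). For the first part, recall that an awesome node has exactly two distinct symbols in each block: a repeated symbol would, via the infinite-alphabet assumption, force a collision in $\anc{u}$, contradicting condition~2 of \cref{def:awesome_node} (this is the opening paragraph of the proof of \cref{lem:awesome-are-b-good}). Hence $\{\sigma_1,\sigma_2\}$ are \emph{all} of $u$'s block-$b$ symbols. By the consistency guarantee \cref{clm:symbol_consistency}, every symbol that \nameref{algo:collect-symbols} records for $\hat u$ in block $b$ is genuinely a symbol of $u=\phi(\hat u)$ in that block, hence lies in $\{\sigma_1,\sigma_2\}$. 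Since the routine halts after recording two distinct symbols, it therefore suffices to exhibit, for each $i\in\{1,2\}$, an extant triple that the algorithm is permitted to use as a witness for $\sigma_i$.

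The key structural input is \cref{clm:awesome_subtree_k}: because $u$ is awesome, $\phi$ carries the descendants of $\hat u$ in $\hP{k-1}$ exactly onto the awesome subtree $S_u$ of $u$, and $\phi$ fixes the extant level. Since line~\ref{algo:line:jointlca} of \nameref{algo:collect-symbols} requires $\hat u$ to be a joint LCA of the candidate triple \emph{in} $\hP{k-1}$, the only extant nodes usable as witnesses for $\hat u$ are the extant nodes of $S_u$, reached through the awesome children of $u$. Consequently I need witnesses for both $\sigma_1$ and $\sigma_2$ that lie inside $S_u$, not merely somewhere among the descendants of $u$.

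To produce them I would invoke the mechanism underlying \cref{lem:awesome-are-b-good}: the $b$-goodness of an awesome node is witnessed \emph{within} its induced $d$-ary awesome subtree (\cref{claim:awesome-tree}), so for each $i$ there are three extant nodes of $S_u$ carrying $\sigma_i$ that descend from three \emph{distinct} awesome children of $u$. Transporting these through $\phi^{-1}$, the corresponding three extant nodes descend from three distinct children of $\hat u$ in $\hP{k-1}$, so $\hat u$ is their joint LCA there; and because $\phi$ fixes extant symbols, all three share $\sigma_i$ in block $b$. Thus the triple meets all three conditions of line~\ref{algo:line:jointlca}, and \nameref{algo:collect-symbols} records $\sigma_i$. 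Doing this for $i=1$ and $i=2$ (distinct by hypothesis) shows two distinct symbols are recorded, and by the first paragraph these must be precisely $\sigma_1$ and $\sigma_2$.

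The main obstacle is exactly the mismatch between the two notions of ``joint LCA.'' \cref{def:new_b_good} only guarantees witnesses relative to the \emph{true} pedigree $\cP$, where $\sigma_i$ may propagate to the extant through non-awesome (for instance, non-$d$-rich) intermediate nodes, whereas the algorithm can only verify joint-LCA-hood inside the reconstructed pedigree $\hP{k-1}$, whose descendants of $\hat u$ are confined to $S_u$. Bridging this gap---ensuring suitable witnesses actually live inside the awesome subtree rather than along arbitrary descending paths---is where \cref{clm:awesome_subtree_k}, combined with the awesome-subtree construction behind \cref{lem:awesome-are-b-good}, does the real work; the remaining steps (symbol consistency and the ``at most two symbols'' bound) are then routine.
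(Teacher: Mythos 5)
Your proposal is correct and follows essentially the same route as the paper's proof: reduce via \cref{clm:symbol_consistency} to showing two distinct symbols are recorded, obtain witnessing triples for $\sigma_1,\sigma_2$ inside the awesome subtree $S_u$ via the mechanism in the proof of \cref{lem:awesome-are-b-good}, and transport them into $\desc{\hat u}$ using \cref{clm:awesome_subtree_k} together with the fact that $\phi$ fixes the extant level. Your explicit handling of the joint-LCA condition in $\hP{k-1}$ and of the gap between \cref{def:new_b_good} and the stronger in-subtree witness property is exactly the (implicit) content of the paper's appeal to ``as in the proof of \cref{lem:awesome-are-b-good}.''
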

    
    % \todo[inline]{Needs a statement "with high probability" in order for it to be usable in last part of Proof of Lemma 5.3. (ultimately, NEED: All awesome nodes have 0.99|B| blocks reconstructed whp.)}
     
    % \pnote{The required `with high probability' statement mentioned above is precisely the content of~\cref{lem:awesome-are-b-good}. We should just condition in this section on the events we want.} \gnote{Agreed. Maybe we can have a preliminary paragraph where we say all the events we condition on in this section?}
    
    \begin{proof}[Proof of \cref{clm:awesome_symbol_recovery}]
    By \cref{clm:symbol_consistency}, %reconstruction of symbols $\hat{\sigma}_1, \hat{\sigma}_2$ for $\hat{u}$ in block $b$, implies $\{ \hat{\sigma}_1, \hat{\sigma}_2 \} = \{ \sigma_1, \sigma_2 \}$. 
    we only need to show that at least two symbols in block $b$ are reconstructed by \nameref{algo:collect-symbols} applied to $\hat{u}$.
    Note that $b$-goodness implies $\sigma_1 \neq \sigma_2$.
     
    By $b$-goodness of $u$, as in the proof of \cref{lem:awesome-are-b-good}, there is a witnessing triple for each of the $\sigma_i$ contained in the extant of the awesome subtree $S_u$.
    By \cref{clm:awesome_subtree_k}, $\desc{\hat{u}}$ also contains these witnesses.
    Since extant nodes are the exact same in $\cP$ compared to $\hP{k-1}$ by hypothesis 1 of \cref{lem:level_k_reconstruct}, \nameref{algo:collect-symbols} applied to $\hat{u}$ recovers $\sigma_1, \sigma_2$ in block $b$.
    \end{proof}

%Govind: read to here.
\begin{proof}[Proof of~\cref{lem:sibling_graph_reconstruct}]
    % By hypothesis \cref{lem:awesome-are-b-good}
    
    By assumption, $\phi: \hat{G}_{k-1} \to G_{k-1}$ is injective.
    To first see that $\phi$ is a hypergraph homomorphism, let $\hat{u}, \hat{v}, \hat{w} \in V_{k-1}(\hP{k-1})$ be distinct nodes satisfying line~\ref{algo:line:mostsymbols} of \nameref{algo:test-siblings}, and let $u = \phi(\hat{u}), v = \phi(\hat{v})$, and $w = \phi(\hat{w})$ denote their counterparts in $\cP$.
    
    Suppose that $u, v, w$ are not mutually siblings. 
    By Condition~\ref{item:siblings_symbols}, $u, v, w$ have at most $0.1885 |B|$ mutually overlapping blocks. By~\cref{clm:symbol_consistency}, for all $\hat{x} \in \{ \hat{u}, \hat{v}, \hat{w} \}$, the symbols reconstructed for $\hat{x}$ in block $b$ using \nameref{algo:collect-symbols} are a subset of the symbols in block $b$ of $x:= \phi(\hat{x}) \in \{u, v, w\}$. 
    Therefore, $\hat{u}, \hat{v}, \hat{w}$ have mutually overlapping symbols in at most $0.1885 |B|$ blocks. 
    Since $0.1885 < 0.21 $, \nameref{algo:test-siblings} does not place a hyperedge between $\hat{u}, \hat{v}, \hat{w}$ in $\hat{G}_1$.
    
    To conclude that the induced subgraph $\phi(\hat{G}_{k-1})$ is isomorphic to $\hat{G}_{k-1}$, it remains to show that if $u,v,w$ are mutual siblings in $\cP$, then $\{\hat{u}, \hat{v}, \hat{w}\}$ is a hyperedge in $\hat{G}_{k-1}$. 
    Note that $99\%$ of the blocks of $\hat{u}, \hat{v}, \hat{w}$ were recovered by \nameref{algo:collect-symbols} by the definition of $\hat{G}_{k-1}$, and by \cref{clm:symbol_consistency}, the symbols of $\hat{u}, \hat{v}, \hat{w}$ in block $b$ are a subset of the symbols of $u, v, w$, respectively, in block $b$. 
    By Condition~\ref{item:siblings_symbols}, the mutual overlap between the siblings $u, v, w$ is at least $0.249|B|$. 
    Thus, by a union bound on the occurrence of 1\%-fraction of unrecovered blocks, the mutual overlap between $\hat{u}, \hat{v}, \hat{w}$ is at least $(0.249 - 0.03)|B| \geq 0.21|B|$. 
    Therefore, \nameref{algo:test-siblings} constructs a hyperedge on $\hat{u}, \hat{v}, \hat{w}$, as desired. It follows that the induced subgraph $\phi(\hat{G}_{k-1})$ is isomorphic to $\hat{G}_{k-1}$.
    
    Finally, we show that the awesome nodes $A_{k-1}$ are fully contained in $\phi(\hat{G}_{k-1})$.
    %It suffices to show that if $x \in A_{k-1}$, then $\hat{x} := \phi^{-1}(x)$ has at least $0.99|B|$ blocks recovered by \nameref{algo:collect-symbols}. 
    By Condition~\ref{item:awesome-are-b-good}, awesome nodes are $b$-good.
    Now apply \cref{clm:awesome_symbol_recovery}, to conclude that \nameref{algo:collect-symbols} reconstructs 99\% of the blocks in each awesome node $u$, so $u \in \hat{G}_{k-1}$ according to Line~\ref{algo:line:mostsymbols} of \nameref{algo:test-siblings}.
\end{proof}

\begin{lemma}
\label{lem:sibling_graph_cliquing}
Let $\mathscr{C}$ denote the maximal (hyper)cliques in the subgraph of $G_{k-1}$ induced by $\phi(\hat{G}_{k-1})$, and let $\mathscr{C}_{\text{algo}}$ denote the (hyper)cliques probed by \nameref{algo:construct-parents} applied to $\hat{G}_{k-1}$. 
Given $\cC \in \aC$, define $\phi(\cC)$ to be the set given by the image of $\cC$ under $\phi$. 
Then $\phi$ is a bijection between $\aC$ and $\sC$. 
\end{lemma}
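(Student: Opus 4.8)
The plan is to leverage \cref{lem:sibling_graph_reconstruct}, which we will have just proven, to reduce the statement to a purely combinatorial fact about greedy maximal-clique extraction on an edge-disjoint family of cliques. By \cref{lem:sibling_graph_reconstruct}, the map $\phi$ restricts to an isomorphism from $\hat{G}_{k-1}$ onto the subgraph of $G_{k-1}$ induced by $\phi(\hat{G}_{k-1})$. Since a hypergraph isomorphism carries maximal cliques to maximal cliques bijectively, $\phi$ already yields a bijection between the maximal cliques of $\hat{G}_{k-1}$ and the family $\sC$. Thus it suffices to show that the cliques $\aC$ probed by \nameref{algo:construct-parents} are \emph{exactly} the maximal cliques of $\hat{G}_{k-1}$, each probed once; the desired bijection then follows by composing with $\phi$.

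First I would transfer the edge-disjointness structure to $\hat{G}_{k-1}$. By Condition~\ref{item:main_sibling_structure} the maximal cliques of $G_{k-1}$ are edge-disjoint, and since $\phi(\hat{G}_{k-1})$ is an induced subgraph of $G_{k-1}$ and $\phi$ is an isomorphism onto it, the maximal cliques of $\hat{G}_{k-1}$ are edge-disjoint as well. Concretely, two distinct maximal cliques of $\hat{G}_{k-1}$ cannot share a hyperedge: if they shared three common vertices, these would form a clique sitting inside two distinct maximal sibling-sets, contradicting the uniqueness of the parent of a clique (Condition~\ref{item:clique_has_unique_parent}, \cref{lem:clique_has_unique_parent}). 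Hence every hyperedge of $\hat{G}_{k-1}$ lies in exactly one maximal clique, and the hyperedges partition according to the maximal cliques.

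The core step is then to track \nameref{algo:construct-parents}, which repeatedly selects a maximal clique, deletes its hyperedges, and recurses. I would argue by induction on the iteration count that (i) each selected clique is a maximal clique of the \emph{original} $\hat{G}_{k-1}$, and (ii) deleting its hyperedges removes precisely that clique from the collection of maximal cliques while leaving every other maximal clique intact and still maximal. For (ii), note that edge deletion never creates edges, so any clique in the reduced hypergraph is already a clique of $\hat{G}_{k-1}$ and hence lies inside some original maximal clique $\cC$; if $\cC$ is the clique just removed, then all of its internal hyperedges are gone, so it no longer supports a clique of size $\geq 3$, whereas if $\cC$ is a different maximal clique, then edge-disjointness guarantees all of $\cC$'s hyperedges survive, so $\cC$ remains a clique and remains maximal (any extension would be a clique of $\hat{G}_{k-1}$ contradicting its original maximality). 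Consequently each iteration strictly decreases the set of maximal cliques by one, the order of selection is irrelevant, and the process enumerates each maximal clique exactly once. Together with the termination rule—which runs the loop until no maximal clique of size $\geq d$ survives, so that in particular every maximal clique containing a recoverable ($d$-rich, awesome) sibling-set is probed—this shows that $\aC$ coincides with the maximal cliques of $\hat{G}_{k-1}$, giving the bijection with $\sC$ through $\phi$.

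The main obstacle I anticipate is part (ii): ruling out that edge deletion spawns a \emph{new} maximal clique or shrinks an existing one. This is exactly where edge-disjointness is indispensable; without it, deleting the hyperedges of one sibling-group could fracture an overlapping group and produce spurious smaller maximal cliques that the algorithm would then misassign a parent to. I would therefore spend the most care verifying that two maximal cliques of $\hat{G}_{k-1}$ share at most two vertices (so that no common hyperedge exists), invoking \cref{lem:clique_has_unique_parent} to forbid three shared vertices, and confirming that maximality is preserved under deletion as sketched above.
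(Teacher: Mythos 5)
Your proposal is correct and takes essentially the same route as the paper: reduce via \cref{lem:sibling_graph_reconstruct} to the claim that the greedy removal process in \nameref{algo:construct-parents} probes exactly the maximal cliques, then induct on the iterations, using edge-disjointness of maximal cliques (Condition~\ref{item:main_sibling_structure}) to show that each probed clique is maximal in the original hypergraph and that deleting its hyperedges leaves every other maximal clique intact and maximal. The only notable difference is cosmetic and slightly in your favor: you run the induction directly on $\hat{G}_{k-1}$ and explicitly justify (via the unique-parent condition) that edge-disjointness transfers to the induced subgraph, a step the paper elides by invoking Condition~\ref{item:main_sibling_structure} for $\phi(\hat{G}_{k-1})$ directly.
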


\begin{proof}
     By \cref{lem:sibling_graph_reconstruct}, the subgraph $H$ induced by $\phi(\hat{G}_{k-1})$ is isomorphic to $\hat{G}_{k-1}$. 
     Hence, it suffices to show that the cliques probed by \nameref{algo:construct-parents} applied to $H$ are precisely the maximal cliques of $H$. 
     Recall that by Condition~\ref{item:main_sibling_structure}, the maximal cliques in $H$ are edge-disjoint, and every node of $H$ is involved in at most $2$ cliques. 
     
    It is helpful to imagine the cliques $\cC_1, \cC_2, \ldots, \cC_M \in \aC$ as being listed out in the same order that they are probed by \nameref{algo:construct-parents}, indexed by timesteps $m = 1, 2, \ldots, M$. Let $H\RP{0} = H$, and let $H\RP{m}$ denote the result of removing the edges of the clique $\cC_t$ from $H\RP{m-1}$. 
    
    We argue that for all $m$, the graph $H\RP{m}$ is a union of edge-disjoint maximal cliques, and any two maximal cliques intersect in at most a single vertex. 
    The base case $m = 1$ is true by Condition~\ref{item:main_sibling_structure}. 
    This holds for $m > 1$ because the above property is preserved when all of the edges are removed from a single maximal clique in $H\RP{m-1}$. 
    Moreover, for all $m$, the maximal cliques in $H\RP{m}$  are the same as those of $H\RP{m-1}$ but with a single maximal clique $\cC_{m}$ in $H\RP{m-1}$ removed. 
    Hence, it also follows by induction that for all $m$, the maximal clique $\cC_m$ in $H\RP{m-1}$ is also a maximal clique in $H$. 
    
    Since \nameref{algo:construct-parents} terminates at the first time $M$ when $H\RP{M}$ has no hyperedges, we conclude that $\cC_1, \ldots, \cC_M$ are \textit{all} of the maximal cliques in $H$, as desired.
\end{proof}

\begin{proof}[Proof of \cref{lem:level_k_reconstruct}]
    We first extend the definition of $\phi$ to level $k$. 
    For $\hat{u} \in V_k(\hP{k})$, we define $\phi(\hat{u}) \in V_k(\cP_k)$ as follows. 
    Let $\hat{\cC} \subset V_{k-1}(\hP{k})$ denote the children of $\hat{u}$. 
    By~\cref{lem:sibling_graph_reconstruct,lem:sibling_graph_cliquing}, $\phi(\hat{\cC})$ is a clique in $G_{k-1}$. 
    Define $\phi(\hat{u})\in V_k(\cP_{k})$ to be the parent of the children of the clique $\phi(\hat{\cC})$ in $\cP$. 
    The map $\phi$ is well-defined at level $k$ because of Condition~\ref{item:clique_has_unique_parent}. 
    It remains to show that $\phi$ is an isomorphism onto its image, and moreover that its image contains all of the awesome nodes at level $k$.
    
    \medskip
    
    \textit{The map} $\phi$ \textit{is injective}: We know this is true for $\phi\big|_{\hP{k-1}}$, so it suffices to consider injectivity of $\phi$ when restricted to the nodes at level $k$ in $\hP{k}$. 
    Let $\hat{u}, \hat{v} \in V_{k}(\hP{k})$ with $\hat{u} \neq \hat{v}$. Let $\cC$ (resp., $\cC'$) denote the maximal clique in $\hat{G}_{k-1}$ that consists of the children of $\hat{u}$ (resp., $\hat{v}$). By~\cref{lem:sibling_graph_cliquing}, $\phi(\cC)$ and $\phi(\cC')$ are distinct maximal cliques in the induced subgraph $\phi(\hat{G}_{k-1})$, and therefore, are contained in distinct maximal cliques in $G_{k-1}$. Distinct maximal cliques in $G_{k-1}$ have distinct parents, so by the definition of $\phi$, we conclude that $\phi(\hat{u}) \neq \phi(\hat{v})$, as desired.
    
    \medskip
    
    \textit{The map} $\phi$ \textit{is edge-preserving}: Suppose that $(\hat{u}, \hat{v})$ is an edge in $\hP{k}$ with $\hat{u} \in V_k(\hP{k})$ and $\hat{v} \in V_{k-1}(\hP{k})$. Consider the maximal clique $\hat{\cC}$ containing $\hat{v}$ in $\hat{G}_{k-1}$. By~\cref{lem:sibling_graph_cliquing}, $\phi(\hat{\cC})$ is a maximal clique in the induced subgraph $\phi(\hat{G}_{k-1}) \subset G_{k-1}$, and by construction of $\phi$, the parent of $\phi(\hat{\cC})$ is $\phi(\hat{u})$. Therefore, the edge $(\phi(\hat{u}), \phi(\hat{v}))$ is in the pedigree $\cP_{k}$. 
    
    Suppose now that the edge $(u, v) = (\phi(\hat{u}), \phi(\hat{v}))$ is in the pedigree $\cP_{k}$. Consider the maximal clique $\cC' \subset G_{k-1}$ containing $v$. By~\cref{lem:sibling_graph_cliquing}, $\cC := \phi^{-1}(\cC') = \{ x \in  \hP{k}: \phi(x) \in \cC' \}$ is a maximal clique in $\hat{G}_{k-1}$.
    By \cref{lem:sibling_graph_cliquing} and the construction in \nameref{algo:construct-parents}, we conclude that the parent of $\hat{v}$ in $\hP{k}$ is mapped to $u$ under $\phi$. 
    By injectivity of $\phi$, this parent is precisely $\phi^{-1}(u) = \hat{u}$. Therefore, $(\hat{u}, \hat{v})$ is an edge in $\hP{k}$. 
    
    \medskip 
    
    \textit{The image of} $\phi$ \textit{contains the awesome nodes in} $\cP_{k}$: It suffices to prove the statement for the awesome nodes at level $k$, which we denote by $A_k$. 
    Suppose that $u$ is an awesome node at level $k$ of $\cP$. 
    By awesomeness, $u$ has at least $d$ awesome children. Let $\cC'$ denote the clique in $G_{k-1}$ given by the awesome children of $u$. 
    By \cref{lem:sibling_graph_reconstruct,lem:sibling_graph_cliquing}, $\cC := \phi^{-1}(\cC')$ satisfies $|\cC| = |\cC'| \geq d$ because all of the awesome children up to level $k-1$ are in the image of $\phi$, by the inductive hypotheses. 
    By \cref{lem:sibling_graph_cliquing}, the maximal clique $\tilde{\cC}$ containing $\cC$ in $\hat{G}_{k-1}$ satisfies that $\phi(\tilde{\cC})$ are all children of $u$. 
    By the definition of \nameref{algo:construct-parents} and $\phi$ at level $k$, we conclude that a parent $\hat{u}$ is constructed for $\tilde{\cC} \supset \cC$ and $\phi(\hat{u}) = u$, as desired. 
\end{proof}

\noindent \textbf{Acknowledgments}
We thank Vishesh Jain for many helpful discussions.  

%%============================ BIBLIOGRAPHY
% \newpage
%\printbibliography
%\newpage
\bibliographystyle{alpha}
\bibliography{bibliography}
\end{document}